\providecommand{\tabularnewline}{\\}
\begin{document}
\bibliographystyle{IEEEtran}

\title{Backing off from Infinity: \\ Performance Bounds via Concentration
of Spectral Measure for Random MIMO Channels}

\author{Yuxin Chen, Andrea J. Goldsmith, and Yonina C. Eldar%
\thanks{Y. Chen is with the Department of Electrical Engineering and the Department
of Statistics, Stanford University, Stanford, CA 94305, USA (email:
yxchen@stanford.edu). 

A. J. Goldsmith is with the Department of Electrical Engineering,
Stanford University, Stanford, CA 94305, USA (email: andrea@ee.stanford.edu). 

Y. C. Eldar is with the Department of Electrical Engineering, Technion,
Israel Institute of Technology Haifa, Israel 32000 (email: yonina@ee.technion.ac.il). 

This work was supported in part by the NSF under grant CCF-0939370
and CIS-1320628, the AFOSR under MURI Grant FA9550-09-1-0643, and
BSF Transformative Science Grant 2010505. Manuscript date: \today.%
}}
\maketitle
\begin{abstract}
The performance analysis of random vector channels, particularly multiple-input-multiple-output
(MIMO) channels, has largely been established in the asymptotic regime
of large channel dimensions, due to the analytical intractability
of characterizing the exact distribution of the objective performance
metrics. This paper exposes a new non-asymptotic framework that allows
the characterization of many canonical MIMO system performance metrics
to within a narrow interval under moderate-to-large channel dimensionality,
provided that these metrics can be expressed as a separable function
of the singular values of the matrix. The effectiveness of our framework
is illustrated through two canonical examples. Specifically, we characterize
the mutual information and power offset of random MIMO channels, as
well as the minimum mean squared estimation error of MIMO channel
inputs from the channel outputs. Our results lead to simple, informative,
and reasonably accurate control of various performance metrics in
the finite-dimensional regime, as corroborated by the numerical simulations.
Our analysis framework is established via the concentration of spectral
measure phenomenon for random matrices uncovered by Guionnet and Zeitouni,
which arises in a variety of random matrix ensembles irrespective
of the precise distributions of the matrix entries.
\end{abstract}
\theoremstyle{plain}\newtheorem{lem}{\textbf{Lemma}}\newtheorem{theorem}{\textbf{Theorem}}\newtheorem{corollary}{\textbf{Corollary}}\newtheorem{prop}{\textbf{Proposition}}\newtheorem{fct}{\textbf{Fact}}\newtheorem{remark}{\textbf{Remark}}

\theoremstyle{definition}\newtheorem{definition}{\textbf{Definition}}\newtheorem{example}{\textbf{Example}}

\begin{IEEEkeywords} MIMO, massive MIMO, confidence interval, concentration
of spectral measure, random matrix theory, non-asymptotic analysis,
mutual information, MMSE\end{IEEEkeywords}

\section{Introduction}

The past decade has witnessed an explosion of developments in multi-dimensional
vector channels \cite{ElGamalKim2012}, particularly multiple-input-multiple-output
(MIMO) channels. The exploitation of multiple (possibly correlated)
dimensions provides various benefits in wireless communication and
signal processing systems, including channel capacity gain, improved
energy efficiency, and enhanced robustness against noise and channel
variation.

Although many fundamental MIMO system performance metrics can be evaluated
via the precise spectral distributions of finite-dimensional MIMO
channels (e.g. channel capacity \cite{Tel1999,GolJinVis2003}, minimum
mean square error (MMSE) estimates of vector channel inputs from the
channel outputs \cite{KaiSayHas2000}, power offset \cite{LozanoTuilinoVerdu2005},
sampled capacity loss \cite{ChenEldarGoldsmith2013Minimax}), this
approach often results in prohibitive analytical and computational
complexity in characterizing the probability distributions and confidence
intervals of these MIMO system metrics. In order to obtain more informative
analytical insights into the MIMO system performance metrics, a large
number of works (e.g. \cite{tse1999linear,ChuahTseKahnValenzuela2002,rapajic2000information,hanly2001resource,lozano2002capacity,LozanoTuilinoVerdu2005,tulino2005impact,lozano2003multiple,muller2002random,mestre2003capacity,moustakas2003mimo,lozano2003capacity,verdu1999spectral,moustakas2007outage,dupuy2011capacity,taricco2011ergodic})
present more explicit expressions for these performance metrics with
the aid of random matrix theory. Interestingly, when the number of
input and output dimensions grow, many of the MIMO system metrics
taking the form of linear spectral statistics converge to deterministic
limits, due to various limiting laws and universality properties of
(asymptotically) large random matrices \cite{Tao2012RMT}. In fact,
the spectrum (i.e. singular-value distribution) of a random channel
matrix $\boldsymbol{H}$ tends to stabilize when the channel dimension
grows, and the limiting distribution is often universal in the sense
that it is independent of the precise distributions of the entries
of $\boldsymbol{H}$. 

These asymptotic results are well suited for massive MIMO communication
systems. However, the limiting regime falls short in providing a full
picture of the phenomena arising in most practical systems which,
in general, have moderate dimensionality. While the asymptotic convergence
rates of many canonical MIMO system performance metrics have been
investigated as well, how large the channel dimension must be largely
depends on the realization of the growing matrix sequences. In this
paper, we propose an alternative method via concentration of measure
to evaluate many canonical MIMO system performance metrics for finite-dimension
channels, assuming that the target performance metrics can be transformed
into linear spectral statistics of the MIMO channel matrix. Moreover,
we show that the metrics fall within narrow intervals with high (or
even overwhelming) probability for moderate-to-high dimensional channels.

\subsection{Related Work}

Random matrix theory is one of the central topics in probability theory
with many connections to wireless communications and signal processing.
Several random matrix ensembles, such as Gaussian unitary ensembles,
admit exact characterization of their eigenvalue distributions \cite{mehta2004random}
under any channel dimensionality. These spectral distributions associated
with the finite-dimensional MIMO channels can then be used to compute
analytically the distributions and confidence intervals of the MIMO
system performance metrics (e.g. mutual information of MIMO fading
channels \cite{hassibi2002multiple,wang2004outage,chiani2003capacity,ratnarajah2003complex,ratnarajah2005complex}).
However, the computational intractability of integrating a large-dimensional
function over a finite-dimensional MIMO spectral distribution precludes
concise and informative capacity expressions even in moderate-sized
problems. For this reason, theoretical analysis based on precise eigenvalue
characterization is generally limited to small-dimensional vector
channels. 

In comparison, one of the central topics in modern random matrix theory
is to derive limiting distributions for the eigenvalues of random
matrix ensembles of interest, which often turns out to be simple and
informative. Several pertinent examples include the semi-circle law
for symmetric Wigner matrices \cite{wigner1958distribution}, the
circular law for i.i.d. matrix ensembles \cite{tao2010random,tao2008random},
and the Marchenko\textendash{}Pastur law \cite{marchenko1967distribution}
for rectangular random matrices. One remarkable feature of these asymptotic
laws is the \emph{universality phenomenon}, whereby the limiting spectral
distributions are often indifferent to the precise distribution of
each matrix entry. This phenomenon allows theoretical analysis to
accommodate a broad class of random matrix families beyond Gaussian
ensembles. See \cite{Tao2012RMT} for a beautiful and self-contained
exposition of these limiting results.

The simplicity and universality of these asymptotic laws have inspired
a large body of work in characterizing the asymptotic performance
limits of random vector channels. For instance, the limiting results
for i.i.d. Gaussian ensembles have been applied in early work \cite{Tel1999}
to establish the linear increase of MIMO capacity with the number
of antennas. This approach was then extended to accommodate separable
correlation fading models \cite{ChuahTseKahnValenzuela2002,tulino2005impact,muller2002random,moustakas2003mimo,mestre2003capacity,shin2003capacity,couillet2013analysis}
with general non-Gaussian distributions (particularly Rayleigh and
Ricean fading \cite{lozano2003multiple,LozanoTuilinoVerdu2005}).
These models admit analytically-friendly solutions, and have become
the cornerstone for many results for random MIMO channels. In addition,
several works have characterized the second-order asymptotics (often
in the form of central limit theorems) for mutual information \cite{moustakas2003mimo,HacKhoLouNajPas2008,taricco2008asymptotic},
information density \cite{hoydis2012random}, second-order coding
rates \cite{hoydis2013second,hoydis2013bounds}, diversity-multiplexing
tradeoff \cite{loyka2010finite}, etc., which uncover the asymptotic
convergence rates of various MIMO system performance metrics under
a large family of random matrix ensembles. The limiting tail of the
distribution (or large deviation) of the mutual information has also
been determined in the asymptotic regime \cite{kazakopoulos2011living}.
In addition to information theoretic metrics, other signal processing
and statistical metrics like MMSE \cite{tse1999linear,liang2007asymptotic,couillet2013signal},
multiuser efficiency for CDMA systems \cite{hanly2001resource}, optical
capacity \cite{karadimitrakis2013outage}, covariance matrix and principal
components \cite{johnstone2001distribution}, canonical correlations
\cite{johnstone2006high}, and likelihood ratio test statistics \cite{johnstone2006high},
have also been investigated via asymptotic random matrix theory. While
these asymptotic laws have been primarily applied to performance metrics
in the form of linear spectral statistics, more general performance
metrics can be approximated using the delicate method of ``deterministic
equivalents'' (e.g. \cite{couillet2011random,couillet2011deterministic}).

A recent trend in statistics is to move from asymptotic laws towards
non-asymptotic analysis of random matrices \cite{donoho2000high,Vershynin2012},
which aims at revealing statistical effects of a moderate-to-large
number of components, assuming a sufficient amount of independence
among them. One prominent effect in this context is the \emph{concentration
of spectral measure phenomenon} \cite{GuionnetZeitouni2000,el2009concentration},
which indicates that many separable functions of a matrix's singular
values (called \emph{linear spectral statistics} \cite{bai2004clt})
can be shown to fall within a narrow interval with high probability
even in the moderate-dimensional regime. This phenomenon has been
investigated in various fields such as high-dimensional statistics
\cite{donoho2000high}, statistical learning \cite{MassartPicard2007},
and compressed sensing \cite{Vershynin2012}. 

Inspired by the success of the measure concentration methods in the
statistics literature, our recent work \cite{ChenEldarGoldsmith2013Minimax}
develops a non-asymptotic approach to quantify the capacity of multi-band
channels under random sub-sampling strategies, which to our knowledge
is the first to exploit the concentration of spectral measure phenomenon
to analyze random MIMO channels. In general, the concentration of
measure phenomena are much less widely recognized and used in the
communication community than in the statistics and signal processing
community. Recent emergence of massive MIMO technologies \cite{Huh2012,rusek2013scaling,larsson2013massive,Ngo2013Energy},
which uses a large number of antennas to obtain both capacity gain
and improved radiated energy efficiency, provides a compelling application
of these methods to characterize system performance. Other network
/ distributed MIMO systems (e.g. \cite{huh2012network,zhang2009networked})
also require analyzing large-dimensional random vector channels. It
is our aim here to develop a general framework that promises new insights
into the performance of these emerging technologies under moderate-to-large
channel dimensionality.

\subsection{Contributions}

We develop a general non-asymptotic  framework for deriving performance
bounds of random vector channels or any general MIMO system, based
on the powerful concentration of spectral measure phenomenon of random
matrices as revealed by Guionnet and Zeitouni \cite{GuionnetZeitouni2000}.
Specifically, we introduce a general recipe that can be used to assess
various MIMO system performance metrics to within vanishingly small
confidence intervals, assuming that the objective metrics can be transformed
into linear spectral statistics associated with the MIMO channel.
Our framework and associated results can accommodate a large class
of probability distributions for random channel matrices, including
those with bounded support, a large class of sub-Gaussian measures,
and heavy-tailed distributions. To broaden the range of metrics that
we can accurately evaluate, we also develop a general concentration
of spectral measure inequality for the cases where only the \emph{exponential
mean} (instead of the mean) of the objective metrics can be computed. 

We demonstrate the effectiveness of our approach through two illustrative
examples: (1) mutual information of random vector channels under equal
power allocation; (2) MMSE in estimating signals transmitted over
random MIMO channels. These examples allow concise and, informative
characterizations even in the presence of moderate-to-high SNR and
moderate-to-large channel dimensionality. In contrast to a large body
of prior works that focus on first-order limits or asymptotic convergence
rate of the target performance metrics, we are able to derive full
characterization of these metrics in the non-asymptotic regime. Specifically,
we obtain narrow confidence intervals with precise order and reasonably
accurate pre-constants, which do not rely on careful choice of the
growing matrix sequence. Our numerical simulations also corroborate
that our theoretical predictions are reasonably accurate in the finite-dimensional
regime.

\subsection{Organization and Notation}

The rest of the paper is organized as follows. Section \ref{sec:Terminology}
introduces several families of probability measures investigated in
this paper. We present a general framework characterizing the concentration
of spectral measure phenomenon in Section \ref{sec:General-Template}.
In Section \ref{sec:Sample-Applications}, we illustrate the general
framework using a small sample of canonical examples. Finally, Section
\ref{sec:Conclusion} concludes the paper with a summary of our findings
and a discussion of several future directions. 

For convenience of presentation, we let $\mathbb{C}$ denote the set
of complex numbers. For any function $g(x)$, the Lipschitz norm of
$g(\cdot)$ is defined as
\begin{equation}
\left\Vert g\right\Vert _{\mathcal{L}}:=\sup_{x\neq y}\left|\frac{g\left(x\right)-g\left(y\right)}{x-y}\right|.\label{eq:LipschitzNorm}
\end{equation}
We let $\lambda_{i}(\boldsymbol{A})$, $\lambda_{\max}(\boldsymbol{A})$,
and $\lambda_{\min}(\boldsymbol{A})$ represent the $i$th largest
eigenvalue, the largest eigenvalue, and the smallest eigenvalue of
a Hermitian matrix $\boldsymbol{A}$, respectively, and use $\left\Vert \cdot\right\Vert $
to denote the operator norm (or spectral norm). The set $\left\{ 1,2,\cdots,n\right\} $
is denoted as $[n]$, and we write ${[n] \choose k}$ for the set
of all $k$-element subsets of $[n]$. For any set $\boldsymbol{s}$,
we use $\mathrm{card}(\boldsymbol{s})$ to represent the cardinality
of $\boldsymbol{s}$. Also, for any two index sets $\boldsymbol{s}$
and $\boldsymbol{t}$, we denote by $\boldsymbol{A}_{\boldsymbol{s},\boldsymbol{t}}$
the submatrix of $\boldsymbol{A}$ containing the rows at indices
in $\boldsymbol{s}$ and columns at indices in $\boldsymbol{t}$.
We use $x\in a+[b,c]$ to indicate that $x$ lies within the interval
$[a+b,a+c]$. Finally, the standard notation $f(n)=\mathcal{O}\left(g(n)\right)$
means that there exists a constant $c>0$ such that $f(n)\leq cg(n)$;
$f(n)=\Theta\left(g(n)\right)$ indicates that there are universal
constants $c_{1},c_{2}>0$ such that $c_{1}g(n)\leq f(n)\leq c_{2}g(n)$;
and $f(n)=\Omega\left(g(n)\right)$ indicates that there are universal
constants $c>0$ such that $f(n)\leq cg(n)$. Throughout this paper,
we use $\log\left(\cdot\right)$ to represent the natural logarithm.
Our notation is summarized in Table \ref{tab:Summary-of-Notation-Nonuniform}. 

\begin{table}
\caption{\label{tab:Summary-of-Notation-Nonuniform}Summary of Notation and
Parameters}

\centering{}%
\begin{tabular}{l>{\raggedright}p{0.4\textwidth}}
$\log(\cdot)$  & natural logarithm\tabularnewline
$\kappa$ & $\kappa=1$ for the real-valued case and $\kappa=2$ for the complex-valued
case\tabularnewline
$\nu_{ij},\nu$ & the standard deviation of $\boldsymbol{M}_{ij}$, and the maximum
standard deviation $\nu:=\max_{i,j}\nu_{ij}$\tabularnewline
$c_{\mathrm{ls}}$ & logarithmic Sobolev constant as defined in (\ref{eq:LogSobolevInequality})\tabularnewline
$\left\Vert \cdot\right\Vert _{\mathcal{L}}$ & Lipschitz constant of a function\tabularnewline
$D$ & maximum absolute value in the compact support of a bounded measure \tabularnewline
$\tau_{c}$, $\sigma_{c}$ & truncation threshold and the associated standard deviation defined
in (\ref{eq:TruncateMProb}) and (\ref{eq:TruncatedVar}), respectively\tabularnewline
$\rho$ & spectral radius of the deterministic matrix $\boldsymbol{R}$\tabularnewline
$[n]$ & $[n]:=\left\{ 1,2,\cdots,n\right\} $\tabularnewline
${[n] \choose k}$  & set of all $k$-element subsets of $[n]$\tabularnewline
\end{tabular}
\end{table}

\section{Three Families of Probability Measures\label{sec:Terminology}}

In this section, we define precisely three classes of probability
measures with different properties of the tails, which lay the foundation
of the analysis in this paper.

\begin{definition}[{\bf Bounded Distribution}]\label{defn:Bounded}We
say that a random variable $X\in\mathbb{C}$ is bounded by $D$ if
\begin{equation}
\mathbb{P}\left(\left|X\right|\leq D\right)=1.\label{eq:BoundedMeasure}
\end{equation}
 \end{definition}

The class of probability distributions that have bounded support subsumes
as special cases a broad class of distributions encountered in practice
(e.g. the Bernoulli distribution and uniform distribution). Another
class of probability measures that possess light tails (i.e. sub-Gaussian
tails) is defined as follows. 

\begin{definition}[{\bf Logarithmic Sobolev Measure}]\label{defn:LogSobolev}We
say that a random variable $X\in\mathbb{C}$ with cumulative probability
distribution (CDF) $F\left(\cdot\right)$ satisfies the logarithmic
Sobolev inequality (LSI) with uniform constant $c_{\mathrm{ls}}$
if, for \emph{any differentiable} function $h$, one has
\begin{equation}
{\displaystyle \int}h^{2}(x)\log\left(\frac{h^{2}(x)}{\int h^{2}(z)\mathrm{d}F(z)}\right)\mathrm{d}F(x)\leq2c_{\mathrm{ls}}\int\left|h'(x)\right|^{2}\mathrm{d}F(x).\label{eq:LogSobolevInequality}
\end{equation}
 \end{definition}

\begin{remark}One of the most popular techniques in demonstrating
measure concentration is the ``entropy method'' (see, e.g. \cite{boucheron2013concentration,raginsky2012concentration}),
which hinges upon establishing inequalities of the form
\begin{equation}
\mathrm{KL}\left(\mathbb{P}_{X}^{t}\|\mathbb{P}_{X}\right)\leq O\left(t^{2}\right)\label{eq:EntropyInequality}
\end{equation}
for some measure $\mathbb{P}_{X}$ and its tilted measure $\mathbb{P}_{X}^{t}$,
where $\mathrm{KL}\left(\cdot\|\cdot\right)$ denotes the Kullback\textendash{}Leibler
divergence. See \cite[Chapter 3]{raginsky2012concentration} for detailed
definitions and derivation. It turns out that for a probability measure
satisfying the LSI, one can pick the function $h(\cdot)$ in (\ref{eq:LogSobolevInequality})
in some appropriate fashion to yield the entropy-type inequality (\ref{eq:EntropyInequality}).
In fact, the LSI has been well recognized as one of the most fundamental
criteria to demonstrate exponentially sharp concentration for various
metrics. \end{remark}

\begin{remark}While the measure satisfying the LSI necessarily exhibits
sub-Gaussian tails (e.g. \cite{boucheron2013concentration,raginsky2012concentration}),
many concentration results under log-Sobolev measures cannot be extended
to general sub-Gaussian distributions (e.g. for bounded measures the
concentration results have only been shown for convex functions instead
of general Lipschitz functions). \end{remark}

A number of measures satisfying the LSI have been discussed in the
expository paper \cite{ledoux1999concentration}. One of the most
important examples is the standard Gaussian distribution, which satisfies
the LSI with logarithmic Sobolev constant $c_{\mathrm{LS}}=1$. In
many situations, the probability measures obeying the LSI exhibit
very sharp concentration of spectral measure phenomenon (typically
sharper than general bounded measures). 

While prior works on measure concentration focus primarily on measures
with bounded support or measures with sub-Gaussian tails, it is also
of interest to accommodate a more general class of distributions (e.g.
heavy-tailed distributions). For this purpose, we introduce sub-exponential
distributions and heavy-tailed distributions as follows.

\begin{definition}[{\bf Sub-Exponential Distribution}]\label{defn:Subexponential}A
random variable $X$ is said to be sub-exponentially distributed with
parameter $\lambda$ if it satisfies
\begin{equation}
\mathbb{P}\left(\left|X\right|>x\right)\leq c_{0}e^{-\lambda x},\quad\forall x>0\label{eq:SubExponential}
\end{equation}
for some absolute constant $c_{0}>0$.\end{definition}

\begin{definition}[{\bf Heavy-tailed Distribution}]\label{defn:HeavyTail}A
random variable $X$ is said to be heavy-tailed distributed if 
\begin{equation}
\lim_{x\rightarrow\infty}e^{-\lambda x}\mathbb{P}\left(\left|X\right|>x\right)=\infty,\quad\forall\lambda>0.
\end{equation}
\end{definition}

That said, sub-exponential (resp. heavy-tailed) distributions are
those probability measures whose tails are lighter (resp. heavier)
than exponential distributions. Commonly encountered heavy-tailed
distributions include log-normal and power-law distributions. 

To facilitate analysis, for both sub-exponential and heavy-tailed
distributions, we define the following two quantities $\tau_{c}$
and $\sigma_{c}$ with respect to $X$ for some sequence $c(n)$ such
that

\begin{equation}
\mathbb{P}\left(\left|X\right|>\tau_{c}\right)\leq\frac{1}{mn^{c(n)+1}},\label{eq:TruncateMProb}
\end{equation}
and
\begin{equation}
\sigma_{c}=\sqrt{\mathbb{E}\left[\left|X\right|^{2}{\bf 1}{}_{\left\{ \left|X\right|<\tau_{c}\right\} }\right]}.\label{eq:TruncatedVar}
\end{equation}
In short, $\tau_{c}$ represents a truncation threshold such that
the truncated $X$ coincides with the true $X$ with high probability,
and $\sigma_{c}$ denotes the standard deviation of the truncated
$X$. The idea is to study $X{\bf 1}{}_{\left\{ \left|X\right|<\tau_{c}\right\} }$
instead of $X$ in the analysis, since the truncated value is bounded
and exhibits light tails. For instance, if $X$ is exponentially distributed
such that $\mathbb{P}\left(\left|X\right|>x\right)=e^{-x}$ and $c(n)=\log n$,
then this yields that
\[
\tau_{c}=\log^{2}n+\log\left(mn\right).
\]
Typically, one would like to pick $c(n)$ such that $\tau_{c}$ becomes
a small function of $m$ and $n$ obeying $\sigma_{c}^{2}\approx\mathbb{E}[\left|X\right|^{2}]$.

\section{Concentration of Spectral Measure in Large Random Vector Channels\label{sec:General-Template}}

We present a general mathematical framework that facilitates the analysis
of random vector channels. The proposed approach is established upon
the concentration of spectral measure phenomenon derived by Guionnet
and Zeitouni \cite{GuionnetZeitouni2000}, which is a consequence
of Talagrand's concentration inequalities \cite{talagrand1995concentration}.
While the adaptation of such general concentration results to our
settings requires moderate mathematical effort, it leads to a very
effective framework to assess the fluctuation of various MIMO system
performance metrics. We will provide a few canonical examples in Section
\ref{sec:Sample-Applications} to illustrate the power of this framework.

Consider a random matrix $\boldsymbol{M}=[\boldsymbol{M}_{ij}]_{1\leq i\leq n,1\leq j\leq m}$,
where $\boldsymbol{M}_{ij}$'s are assumed to be independently distributed.
We use $\boldsymbol{M}_{ij}^{\mathrm{R}}$ and $\boldsymbol{M}_{ij}^{\mathrm{I}}$
to represent respectively the real and imaginary parts of $\boldsymbol{M}_{ij}$,
which are also generated independently. Note, however, that $\boldsymbol{M}_{ij}$'s
are \emph{not} necessarily drawn from identical distributions. Set
the numerical value 
\begin{equation}
\kappa:=\begin{cases}
1,\quad & \text{in the real-valued case},\\
2,\quad & \text{in the complex-valued case}.
\end{cases}\label{eq:KappaRealComplex}
\end{equation}
We further assume that all entries have matching two moments such
that for all $i$ and $j$:
\begin{align}
 & \quad\quad\mathbb{E}\left[\boldsymbol{M}_{ij}\right]=0,\quad\quad\quad{\bf Var}\left(\boldsymbol{M}_{ij}^{\mathrm{R}}\right)=\nu_{ij}^{2},\label{eq:ZeroMeanUnitVar}\\
 & {\bf Var}\left(\boldsymbol{M}_{ij}^{\mathrm{I}}\right)=\begin{cases}
\nu_{ij}^{2},\quad & \text{in the complex-valued case},\\
0, & \text{in the real-valued case},
\end{cases}
\end{align}
where $\nu_{ij}^{2}$ ($\nu_{ij}\geq0$) are uniformly bounded by
\begin{equation}
\max_{i,j}\left|\nu_{ij}\right|\leq\nu.\label{eq:UniformVar}
\end{equation}

In this paper, we focus on the class of MIMO system performance metrics
that can be transformed into additively separable functions of the
eigenvalues of the random channel matrix (called \emph{linear spectral
statistics} \cite{bai2004clt}), i.e. the type of metrics taking the
form 
\begin{equation}
\sum_{i=1}^{n}f\left(\lambda_{i}\left(\frac{1}{n}\boldsymbol{M}\boldsymbol{R}\boldsymbol{R}^{*}\boldsymbol{M}^{*}\right)\right)\label{eq:LinearSpectralStats}
\end{equation}
for some function $f\left(\cdot\right)$ and some deterministic matrix
$\boldsymbol{R}\in\mathbb{C}^{m\times m}$, where $\lambda_{i}\left(\boldsymbol{X}\right)$
denotes the $i$th eigenvalue of a matrix $\boldsymbol{X}$. As can
been seen, many vector channel performance metrics (e.g. MIMO mutual
information, MMSE, sampled channel capacity loss) can all be transformed
into certain forms of linear spectral statistics. We will characterize
in Proposition \ref{thm:GeneralTemplate} and Theorem \ref{thm:Deviation-Ef}
the measure concentration of (\ref{eq:LinearSpectralStats}), which
quantifies the fluctuation of (\ref{eq:LinearSpectralStats}) in finite-dimensional
random vector channels.

\subsection{Concentration Inequalities for the Spectral Measure \label{sub:Concentration-of-Spectral}}

Many linear spectral statistics of random matrices sharply concentrate
within a narrow interval indifferent to the precise entry distributions.
Somewhat surprisingly, such spectral measure concentration is shared
by a very large class of random matrix ensembles. We provide formal
quantitative illustration of such concentration of spectral measure
phenomena below, which follows from \cite[Corollary 1.8]{GuionnetZeitouni2000}.
This behavior, while not widely used in communication and signal processing,
is a general result in probability theory derived from the Talagrand
concentration inequality (see \cite{talagrand1995concentration} for
details). 

Before proceeding to the concentration results, we define
\begin{equation}
f_{0}\left(\boldsymbol{M}\right):=\frac{1}{n}\sum_{i=1}^{\min\left(m,n\right)}f\left(\lambda_{i}\left(\frac{1}{n}\boldsymbol{M}\boldsymbol{R}\boldsymbol{R}^{*}\boldsymbol{M}^{*}\right)\right)
\end{equation}
for the sake of notational simplicity.

\begin{prop}\label{thm:GeneralTemplate}Consider a random matrix
$\boldsymbol{M}=\left[\boldsymbol{M}_{ij}\right]_{1\leq i\leq n,1\leq j\leq m}$
satisfying (\ref{eq:ZeroMeanUnitVar}) and (\ref{eq:UniformVar}).
Suppose that $\boldsymbol{R}\in\mathbb{C}^{m\times m}$ is any given
matrix and $\left\Vert \boldsymbol{R}\right\Vert \leq\rho.$ Consider
a function $f(x)$ such that $g(x):=f\left(x^{2}\right)$ ($x\in\mathbb{R}^{+}$)
is a real-valued Lipschitz function with Lipschitz constant $\left\Vert g\right\Vert _{\mathcal{L}}$
as defined in (\ref{eq:LipschitzNorm}). Let $\kappa$ and $\nu$
be defined in (\ref{eq:KappaRealComplex}) and (\ref{eq:UniformVar}),
respectively.

(a) \textbf{(Bounded Measure)} If $\boldsymbol{M}_{ij}$'s are bounded
by $D$ and $g(\cdot)$ is convex, then for any $\beta>8\sqrt{\pi}$,
\begin{equation}
\left|f_{0}\left(\boldsymbol{M}\right)-\mathbb{E}\left[f_{0}\left(\boldsymbol{M}\right)\right]\right|\leq\frac{\beta D\rho\nu\left\Vert g\right\Vert _{\mathcal{L}}}{n}\label{eq:BoundedDeviation-corollary}
\end{equation}
with probability exceeding $1-4\exp\left(-\frac{\beta^{2}}{8\kappa}\right)$.

(b) \textbf{(Logarithmic Sobolev Measure)} If the measure of $\boldsymbol{M}_{ij}$
satisfies the LSI with a uniformly bounded constant $c_{\mathrm{ls}}$,
then for any $\beta>0,$ 
\begin{equation}
\left|f_{0}\left(\boldsymbol{M}\right)-\mathbb{E}\left[f_{0}\left(\boldsymbol{M}\right)\right]\right|\leq\frac{\beta\sqrt{c_{\mathrm{ls}}}\rho\nu\left\Vert g\right\Vert _{\mathcal{L}}}{n}\label{eq:LogSobolevDeviation-Corollary}
\end{equation}
with probability exceeding $1-2\exp\left(-\frac{\beta^{2}}{\kappa}\right)$.

(c) \textbf{(Sub-Exponential and Heavy-tailed Measure)} Suppose that
$\boldsymbol{M}_{ij}$'s are independently drawn from either sub-exponential
distributions or heavy-tailed distributions and that their distributions
are symmetric about 0, with $\tau_{c}$ and $\sigma_{c}$ defined
respectively in (\ref{eq:TruncateMProb}) and (\ref{eq:TruncatedVar})
with respect to $\boldsymbol{M}_{ij}$ for some sequence $c(n)$.
If $g(\cdot)$ is convex, then
\begin{equation}
\left|f_{0}\left(\boldsymbol{M}\right)-\mathbb{E}\left[f_{0}\left(\tilde{\boldsymbol{M}}\right)\right]\right|\leq\frac{2\kappa\sqrt{c(n)\log n}\tau_{c}\sigma_{c}\rho\left\Vert g\right\Vert _{\mathcal{L}}}{n}\label{eq:HeavyTailDeviation-corollary}
\end{equation}
with probability exceeding $1-5n^{-c(n)}$, where $\tilde{\boldsymbol{M}}$
is defined such that $\tilde{\boldsymbol{M}}_{ij}:=\boldsymbol{M}_{ij}{\bf 1}{}_{\left\{ \left|\boldsymbol{M}_{ij}\right|<\tau_{c}\right\} }$.\end{prop}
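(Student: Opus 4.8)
The plan is to exhibit $f_0(\boldsymbol M)$ as a Lipschitz (and, for parts (a) and (c), convex) function of the real coordinates of $\boldsymbol M$, and then feed this function into the concentration machinery of \cite{GuionnetZeitouni2000}. First I would pass from eigenvalues to singular values: writing $\boldsymbol B:=\frac{1}{\sqrt n}\boldsymbol M\boldsymbol R$, the eigenvalues of $\frac 1n\boldsymbol M\boldsymbol R\boldsymbol R^{*}\boldsymbol M^{*}=\boldsymbol B\boldsymbol B^{*}$ are exactly $\sigma_i^{2}(\boldsymbol B)$, so that $f_0(\boldsymbol M)=\frac 1n\sum_{i=1}^{\min(m,n)}g\bigl(\sigma_i(\boldsymbol B)\bigr)$ by the hypothesis $g(x)=f(x^{2})$. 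This is precisely a normalized linear spectral statistic of the singular values of $\boldsymbol B$, the object controlled by \cite[Corollary 1.8]{GuionnetZeitouni2000}. The one ingredient not covered verbatim by that corollary is the deterministic factor $\boldsymbol R$, which couples the columns and thereby destroys the independence of the entries of $\boldsymbol B$; I therefore keep $f_0$ as a function of the genuinely \emph{independent} entries of $\boldsymbol M$ and track how $\boldsymbol R$ enters the Lipschitz constant.

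The crux of the argument is a deterministic Lipschitz estimate for $\boldsymbol M\mapsto f_0(\boldsymbol M)$. Using the Lipschitz bound on $g$ followed by the Hoffman--Wielandt inequality for singular values and Cauchy--Schwarz,
\[
\Bigl|\sum_i g\bigl(\sigma_i(\boldsymbol B)\bigr)-\sum_i g\bigl(\sigma_i(\boldsymbol B')\bigr)\Bigr|\le \|g\|_{\mathcal L}\sqrt{\min(m,n)}\,\bigl\|\boldsymbol B-\boldsymbol B'\bigr\|_{\mathrm F},
\]
and since $\|\boldsymbol B-\boldsymbol B'\|_{\mathrm F}\le\frac{\rho}{\sqrt n}\|\boldsymbol M-\boldsymbol M'\|_{\mathrm F}$ from $\|\boldsymbol R\|\le\rho$, together with the elementary bound $\sqrt{\min(m,n)}\le\sqrt n$, the map $\boldsymbol M\mapsto f_0(\boldsymbol M)$ is Lipschitz in the Euclidean norm of its entries with constant $L\le \rho\|g\|_{\mathcal L}/n$. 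The use of $\sqrt{\min(m,n)}\le\sqrt n$ here is what removes any stray dimensional factor and yields the dimension-free scaling $1/n$. For (a) and (c) I additionally need convexity: when $g$ is convex, the spectral function $\boldsymbol B\mapsto\sum_i g(\sigma_i(\boldsymbol B))$ is convex (a Davis--Lewis type fact for unitarily invariant spectral functions), and precomposing with the linear map $\boldsymbol M\mapsto\frac{1}{\sqrt n}\boldsymbol M\boldsymbol R$ preserves convexity, so $f_0$ is convex.

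Equipped with these two properties, parts (a) and (b) follow by invoking the two concentration inequalities underlying \cite[Corollary 1.8]{GuionnetZeitouni2000}. For (a), Talagrand's inequality \cite{talagrand1995concentration} for convex Lipschitz functions of independent bounded variables furnishes a sub-Gaussian tail, while for (b) tensorization of the logarithmic Sobolev inequality (\ref{eq:LogSobolevInequality}) combined with Herbst's argument furnishes a bound of the form $2\exp(-ct^{2}/(c_{\mathrm{ls}}L^{2}))$. Substituting $L\le\rho\|g\|_{\mathcal L}/n$ and the target radii of (\ref{eq:BoundedDeviation-corollary}) and (\ref{eq:LogSobolevDeviation-Corollary}) reproduces the claimed exponents. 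The standard deviation $\nu$ enters because, after standardizing the entries to unit variance, the deterministic factor is effectively rescaled to $\nu\boldsymbol R$ (operator norm at most $\nu\rho$), and the factor $\kappa$ from (\ref{eq:KappaRealComplex}) is the accounting of the $\kappa mn$ real degrees of freedom in the real- versus complex-valued case; lining these constants up with the precise concentration function of \cite[Corollary 1.8]{GuionnetZeitouni2000} is the remaining bookkeeping.

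For part (c) the strategy is to reduce to the bounded case (a) by truncation. I would set $\tilde{\boldsymbol M}_{ij}=\boldsymbol M_{ij}{\bf 1}_{\{|\boldsymbol M_{ij}|<\tau_c\}}$; by symmetry $\tilde{\boldsymbol M}_{ij}$ remains mean-zero, is bounded by $\tau_c$, and has variance $\sigma_c^{2}$ as in (\ref{eq:TruncatedVar}). Applying part (a) to $\tilde{\boldsymbol M}$ with $D=\tau_c$ and $\nu=\sigma_c$, and choosing the free parameter as $\beta\asymp\kappa\sqrt{c(n)\log n}$, yields the radius in (\ref{eq:HeavyTailDeviation-corollary}) with failure probability $O(n^{-c(n)})$; note that this bound is centered at $\mathbb E[f_0(\tilde{\boldsymbol M})]$, consistent with the statement. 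It then remains to transfer the estimate from $\tilde{\boldsymbol M}$ to $\boldsymbol M$: on the event $\{\boldsymbol M=\tilde{\boldsymbol M}\}=\bigcap_{i,j}\{|\boldsymbol M_{ij}|<\tau_c\}$ one has $f_0(\boldsymbol M)=f_0(\tilde{\boldsymbol M})$, and a union bound over the $mn$ entries together with the truncation probability (\ref{eq:TruncateMProb}) bounds the complementary event by $n^{-c(n)}$, so the two failure probabilities combine to the stated $5n^{-c(n)}$. I expect the main obstacle to be the deterministic spectral-perturbation step of the second paragraph---obtaining the Lipschitz constant with the correct dimension-free scaling and the convexity of the spectral function, and then matching the resulting constants (including the roles of $\nu$ and $\kappa$) to the exact concentration function of \cite{GuionnetZeitouni2000}; the truncation in (c) is routine once (a) is in hand, the only care being the control of $\sigma_c$ and the re-centering.
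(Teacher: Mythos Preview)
Your approach is essentially the same as the paper's. The paper declares parts (a) and (b) to be immediate consequences of \cite[Corollary~1.8]{GuionnetZeitouni2000} ``via simple algebraic manipulation'' and gives no further detail; your second paragraph simply unpacks that manipulation (the Hoffman--Wielandt step, the passage of $\boldsymbol R$ into the Lipschitz constant, and the convexity of the spectral function), which is exactly what sits behind that corollary. For part (c) the paper does precisely what you propose: truncate to $\tilde{\boldsymbol M}$, apply part~(a) with $D=\tau_c$ and $\nu=\sigma_c$, choose $\beta$ of order $\sqrt{c(n)\log n}$ so that the tail probability becomes $4n^{-c(n)}$, and combine with the union bound $\mathbb P(\boldsymbol M\neq\tilde{\boldsymbol M})\le n^{-c(n)}$ from (\ref{eq:TruncateMProb}) to reach the $5n^{-c(n)}$ total.
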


\begin{proof}See Appendix \ref{sec:Proof-of-Theorem-General-Template}.\end{proof}

\begin{remark}By setting $\beta=\sqrt{\log n}$ in Proposition \ref{thm:GeneralTemplate}(a)
and (b), one can see that under either measures of bounded support
or measures obeying the LSI, 
\begin{equation}
\left|f_{0}\left(\boldsymbol{M}\right)-\mathbb{E}\left[f_{0}\left(\tilde{\boldsymbol{M}}\right)\right]\right|=\mathcal{O}\left(\frac{\sqrt{\log n}}{n}\right)\label{eq:SubGaussian}
\end{equation}
with high probability. In contrast, under heavy-tailed measures, 
\begin{equation}
\left|f_{0}\left(\boldsymbol{M}\right)-\mathbb{E}\left[f_{0}\left(\tilde{\boldsymbol{M}}\right)\right]\right|=\mathcal{O}\left(\frac{\sqrt{c(n)\tau_{c}^{2}\sigma_{c}^{2}\log n}}{n}\right)\label{eq:HeavyTail}
\end{equation}
with high probability, where $\sqrt{c(n)}\tau_{c}\sigma_{c}$ is typically
a growing function in $n$. As a result, the concentration under sub-Gaussian
distributions is sharper than that under heavy-tailed measures by
a factor of $\sqrt{c(n)}\tau_{c}\sigma_{c}$.\end{remark}

\begin{remark}The bounds derived in Proposition \ref{thm:GeneralTemplate}
scale linearly with $\nu$, which is the maximum standard deviation
of the entries of $\boldsymbol{M}$. This allows us to assess the
concentration phenomena for matrices with entries that have non-uniform
variance. \end{remark}

Proposition \ref{thm:GeneralTemplate}(a) and \ref{thm:GeneralTemplate}(b)
assert that for both measures of bounded support and a large class
of sub-Gaussian distributions, many separable functions of the spectra
of random matrices exhibit sharp concentration, assuming a sufficient
amount of independence between the entries. More remarkably, the tails
behave at worst like a Gaussian random variable with well-controlled
variance. Note that for a bounded measure, we require the objective
metric of the form (\ref{eq:LinearSpectralStats}) to satisfy certain
convexity conditions in order to guarantee concentration. In contrast,
the fluctuation of general Lipschitz functions can be well controlled
for logarithmic Sobolev measures. This agrees with the prevailing
wisdom that the standard Gaussian measure (which satisfies the LSI)
often exhibits sharper concentration than general bounded distributions
(e.g. Bernoulli measures). 

Proposition \ref{thm:GeneralTemplate}(c) demonstrates that spectral
measure concentration arises even when the tail distributions of $\boldsymbol{M}_{ij}$
are much heavier than standard Gaussian random variables, although
it might not be as sharp as for sub-Gaussian measures. This remarkable
feature comes at a price, namely, the deviation of the objective metrics
is much less controlled than for sub-Gaussian distributions. However,
this degree of concentration might still suffice for most practical
purposes. Note that the concentration result for heavy-tailed distributions
is stated in terms of the truncated version $\tilde{\boldsymbol{M}}$.
The nice feature of the truncated $\tilde{\boldsymbol{M}}$ is that
its entries are all bounded (and hence sub-Gaussian), which can often
be quantified or estimated in a more convenient fashion. Finally,
we remark that the concentration depends on the choice of the sequence
$c(n)$, which in turn affects the size of $\tau_{c}$ and $\sigma_{c}$.
We will illustrate the resulting size of confidence intervals in Section
\ref{sub:Confidence-Interval} via several examples.

\subsection{Approximation of Expected Empirical Distribution\label{sub:Convergence-Rate-Expected}}

Although Proposition \ref{thm:GeneralTemplate} ensures sharp measure
concentration of various linear spectral statistics, a more precise
characterization requires evaluating the mean value of the target
metric (\ref{eq:LinearSpectralStats}) (i.e. $\mathbb{E}\left[f_{0}\left(\boldsymbol{M}\right)\right]$).
While limiting laws often admit simple asymptotic characterization
of this mean value for a general class of metrics, whether the convergence
rate can be quantified often needs to be studied on a case-by-case
basis. In fact, this has become an extensively researched topic in
mathematics (e.g. \cite{bai1993convergence,bai2003convergence,gotze2010rate,gotze2005rate}).
In this subsection, we develop an approximation result that allows
the expected value of a broader class of metrics to be well approximated
by concise and informative expressions. 

Recall that
\begin{equation}
f_{0}\left(\boldsymbol{M}\right):=\frac{1}{n}\sum_{i=1}^{\min\left(m,n\right)}f\left(\lambda_{i}\left(\frac{1}{n}\boldsymbol{M}\boldsymbol{R}\boldsymbol{R}^{*}\boldsymbol{M}^{*}\right)\right)
\end{equation}
We consider a large class of situations where the \emph{exponential
mean} of the target metric (\ref{eq:LinearSpectralStats})
\begin{align}
\mathcal{E}\left(f\right): & =\mathbb{E}\left[\exp\left(nf_{0}\left(\boldsymbol{M}\right)\right)\right]\label{eq:DefnExponentialMean}\\
 & =\mathbb{E}\left[\exp\left(\sum_{i=1}^{\min\{m,n\}}f\left(\lambda_{i}\left(\frac{1}{n}\boldsymbol{M}\boldsymbol{R}\boldsymbol{R}^{*}\boldsymbol{M}^{*}\right)\right)\right)\right]\nonumber 
\end{align}
(instead of $\mathbb{E}\left[f_{0}\left(\boldsymbol{M}\right)\right]$)
can be approximated in a reasonably accurate manner. This is particularly
relevant when $f(\cdot)$ is a logarithmic function. For example,
this applies to log-determinant functions
\[
f_{0}\left(\boldsymbol{M}\right)=\frac{1}{n}\log\mathrm{\det}\left(\boldsymbol{I}+\frac{1}{n}\boldsymbol{M}\boldsymbol{R}\boldsymbol{R}^{*}\boldsymbol{M}^{*}\right),
\]
which are of significant interest in various applications such as
wireless communications \cite{Gold2005}, multivariate hypothesis
testing \cite{Fujikoshi2010}, etc. While $\mathbb{E}\left[\mathrm{\det}\left(\boldsymbol{I}+\frac{1}{n}\boldsymbol{M}\boldsymbol{M}^{*}\right)\right]$
often admits a simple distribution-free expression, $\mathbb{E}\left[\log\mathrm{\det}\left(\boldsymbol{I}+\frac{1}{n}\boldsymbol{M}\boldsymbol{M}^{*}\right)\right]$
is highly dependent on precise distributions of the entries of the
matrix. 

One might already notice that, by Jensen's inequality, $\log\mathcal{E}\left(f\right)$
is larger than the mean objective metric $\mathbb{E}\left[f_{0}\left(\boldsymbol{M}\right)\right]$.
Nevertheless, in many situations, these two quantities differ by only
a vanishingly small gap, which is formally demonstrated in the following
lemma. 

\begin{lem}\label{thm:ComputeMeanSubExponential}

(a) (\textbf{Sub-Exponential Tail}) Suppose that 
\[
\mathbb{P}\left(\left|f_{0}\left(\boldsymbol{M}\right)-\mathbb{E}\left[f_{0}\left(\boldsymbol{M}\right)\right]\right|>\frac{y}{n}\right)\leq c_{1}\exp\left(-c_{2}y\right)
\]
 for some values $c_{1}>0$ and $c_{2}>1$, then 
\begin{equation}
\frac{1}{n}\log\mathcal{E}\left(f\right)-\frac{\log\left(1+\frac{c_{1}}{c_{2}-1}\right)}{n}\leq\mathbb{E}\left[f_{0}\left(\boldsymbol{M}\right)\right]\leq\frac{1}{n}\log\mathcal{E}\left(f\right)\label{eq:ApproxEf_subexponential}
\end{equation}
with $\mathcal{E}\left(f\right)$ defined in (\ref{eq:DefnExponentialMean}).

(b) (\textbf{Sub-Gaussian Tail}) Suppose that
\[
\mathbb{P}\left(\left|f_{0}\left(\boldsymbol{M}\right)-\mathbb{E}\left[f_{0}\left(\boldsymbol{M}\right)\right]\right|>\frac{y}{n}\right)\leq c_{1}\exp\left(-c_{2}y^{2}\right)
\]
 for some values $c_{1}>0$ and $c_{2}>0$, then
\begin{align}
\frac{1}{n}\log\mathcal{E}\left(f\right) & \geq\mathbb{E}\left[f_{0}\left(\boldsymbol{M}\right)\right]\nonumber \\
 & \geq\frac{1}{n}\log\mathcal{E}\left(f\right)-\frac{\small\log\small\left(1+\sqrt{\frac{\pi c_{1}^{2}}{c_{2}}}\exp\left(\frac{1}{4c_{2}}\right)\right)}{n}\label{eq:ApproxEf_subgaussian}
\end{align}
with $\mathcal{E}\left(f\right)$ defined in (\ref{eq:DefnExponentialMean}).\end{lem}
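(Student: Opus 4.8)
The plan is to reduce both parts to a single upper bound on the exponential moment of the centered fluctuation. Write $Y:=f_{0}\left(\boldsymbol{M}\right)$ and $\mu:=\mathbb{E}\left[f_{0}\left(\boldsymbol{M}\right)\right]$, and set $Z:=n\left(Y-\mu\right)$, so that $\mathcal{E}\left(f\right)=\mathbb{E}\left[\exp\left(nY\right)\right]=e^{n\mu}\,\mathbb{E}\left[e^{Z}\right]$ and hence
\[
\frac{1}{n}\log\mathcal{E}\left(f\right)=\mathbb{E}\left[f_{0}\left(\boldsymbol{M}\right)\right]+\frac{1}{n}\log\mathbb{E}\left[e^{Z}\right].
\]
The upper bound $\frac{1}{n}\log\mathcal{E}\left(f\right)\geq\mathbb{E}\left[f_{0}\left(\boldsymbol{M}\right)\right]$ in both (a) and (b) is then immediate from Jensen's inequality, since $\log\mathbb{E}\left[e^{Z}\right]\geq\mathbb{E}\left[Z\right]=0$. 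Everything thus comes down to producing a matching upper bound on $\mathbb{E}\left[e^{Z}\right]$ out of the assumed tail decay $\mathbb{P}\left(\left|Z\right|>y\right)=\mathbb{P}\left(\left|Y-\mu\right|>y/n\right)\leq c_{1}e^{-c_{2}y}$ in part (a), respectively $c_{1}e^{-c_{2}y^{2}}$ in part (b).

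To bound $\mathbb{E}\left[e^{Z}\right]$, I would split on the sign of $Z$. On the event $\left\{Z<0\right\}$ we have $e^{Z}\leq1$, so $\mathbb{E}\left[e^{Z}\mathbf{1}_{\left\{Z<0\right\}}\right]\leq\mathbb{P}\left(Z<0\right)$. On $\left\{Z\geq0\right\}$ I apply the layer-cake formula $\mathbb{E}\left[e^{Z}\mathbf{1}_{\left\{Z\geq0\right\}}\right]=\int_{0}^{\infty}\mathbb{P}\left(e^{Z}\mathbf{1}_{\left\{Z\geq0\right\}}>t\right)\mathrm{d}t$, whose integrand equals $\mathbb{P}\left(Z\geq0\right)$ for $t\in\left(0,1\right)$ and equals $\mathbb{P}\left(Z>\log t\right)$ for $t>1$. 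Adding the two sign contributions collapses the sign probabilities to $\mathbb{P}\left(Z<0\right)+\mathbb{P}\left(Z\geq0\right)=1$, leaving
\[
\mathbb{E}\left[e^{Z}\right]\leq1+\int_{1}^{\infty}\mathbb{P}\left(Z>\log t\right)\mathrm{d}t.
\]
For part (a), the sub-exponential tail gives $\mathbb{P}\left(Z>\log t\right)\leq c_{1}t^{-c_{2}}$, and since $c_{2}>1$ the residual integral evaluates to $c_{1}/\left(c_{2}-1\right)$, so that $\mathbb{E}\left[e^{Z}\right]\leq1+c_{1}/\left(c_{2}-1\right)$. Substituting this into the displayed identity, taking logarithms, and dividing by $n$ delivers (\ref{eq:ApproxEf_subexponential}).

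For part (b), the same reduction leaves the residual integral $\int_{1}^{\infty}c_{1}\exp\left(-c_{2}\left(\log t\right)^{2}\right)\mathrm{d}t$. Substituting $s=\log t$ turns this into $c_{1}\int_{0}^{\infty}\exp\left(-c_{2}s^{2}+s\right)\mathrm{d}s$, and completing the square via $-c_{2}s^{2}+s=-c_{2}\bigl(s-\tfrac{1}{2c_{2}}\bigr)^{2}+\tfrac{1}{4c_{2}}$ bounds it by $c_{1}e^{1/\left(4c_{2}\right)}\int_{-\infty}^{\infty}e^{-c_{2}u^{2}}\mathrm{d}u=\sqrt{\pi c_{1}^{2}/c_{2}}\,e^{1/\left(4c_{2}\right)}$. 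Hence $\mathbb{E}\left[e^{Z}\right]\leq1+\sqrt{\pi c_{1}^{2}/c_{2}}\,e^{1/\left(4c_{2}\right)}$, and the same log-and-divide step produces (\ref{eq:ApproxEf_subgaussian}). I expect no serious obstacle here; the only delicate points are checking that the sign terms cancel to exactly $1$ (so the leading term $\mathbb{E}\left[f_{0}\left(\boldsymbol{M}\right)\right]$ is reproduced cleanly), invoking $c_{2}>1$ to guarantee convergence of the power-law integral in (a), and carrying out the Gaussian completion-of-square in (b) carefully enough that the stated constant $\sqrt{\pi c_{1}^{2}/c_{2}}\exp\left(1/\left(4c_{2}\right)\right)$ emerges exactly rather than with a looser prefactor.
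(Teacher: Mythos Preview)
Your proposal is correct and follows essentially the same route as the paper: reduce to bounding $\mathbb{E}\left[e^{Z}\right]$ for the centered variable, convert the exponential moment into a tail integral, plug in the assumed decay, and complete the square in the sub-Gaussian case. The only cosmetic difference is that the paper bounds $\mathbb{E}\left[e^{Z}\right]\leq\mathbb{E}\left[e^{|Z|}\right]$ and then applies integration by parts to $\mathbb{E}\left[e^{|Z|}\right]$, whereas you split on the sign of $Z$ and use the layer-cake formula directly; both reductions land on the same integral $\int_{0}^{\infty}e^{s}\,\mathbb{P}(|Z|>s)\,\mathrm{d}s$ (or its one-sided version), and the remaining computations are identical.
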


\begin{proof}See Appendix \ref{sec:Proof-of-Theorem-Compute-Mean-SubExponential}.\end{proof}

\begin{remark}For measures with sub-exponential tails, if $c_{2}<1$,
the concentration is not decaying sufficiently fast and is unable
to ensure that $\mathcal{E}\left(f\right)=\mathbb{E}\left[e^{nf_{0}\left(\boldsymbol{M}\right)}\right]$
exists. \end{remark}

In short, Lemma \ref{thm:ComputeMeanSubExponential} asserts that
if $f_{0}\left(\boldsymbol{M}\right)$ possesses a sub-exponential
or a sub-Gaussian tail, then $\mathbb{E}[f_{0}\left(\boldsymbol{M}\right)]$
can be approximated by $\frac{1}{n}\log\mathcal{E}\left(f\right)$
in a reasonably tight manner, namely, within a gap no worse than $\mathcal{O}\left(\frac{1}{n}\right)$.
Since Proposition \ref{thm:GeneralTemplate} implies sub-Gaussian
tails for various measures, we immediately arrive at the following
concentration results that concern a large class of sub-Gaussian and
heavy-tailed measures.

\begin{theorem}\label{thm:Deviation-Ef}Let $c_{\rho,f,D}$, $c_{\rho,f,c_{\mathrm{ls}}}$,
and $c_{\rho,f,\tau_{c},\sigma_{c}}$ be numerical values defined
in Table \ref{tab:Summary-of-Preconstants-Thm-Deviation}, and set
\begin{equation}
\mu_{\rho,g,A}:=\nu\rho\left\Vert g\right\Vert _{\mathcal{L}}A.\label{eq:Kappa_beta_rho_g}
\end{equation}

(1) (\textbf{Bounded Measure}) Under the assumptions of Proposition
\ref{thm:GeneralTemplate}(a), we have
\begin{equation}
\begin{cases}
f_{0}\left(\boldsymbol{M}\right) & \leq\frac{1}{n}\log\mathcal{E}\left(f\right)+\frac{\beta\mu_{\rho,g,D}}{n},\\
f_{0}\left(\boldsymbol{M}\right) & \geq\frac{1}{n}\log\mathcal{E}\left(f\right)-\frac{\beta\mu_{\rho,g,D}}{n}-\frac{c_{\rho,f,D}}{n},
\end{cases}\label{eq:BoundedDeviation-Ef}
\end{equation}
with probability exceeding $1-4\exp\left(-\frac{\beta^{2}}{8\kappa}\right)$. 

(2) (\textbf{Logarithmic Sobolev Measure}) Under the assumptions of
Proposition \ref{thm:GeneralTemplate}(b), we have
\begin{align}
\begin{cases}
f_{0}\left(\boldsymbol{M}\right) & \leq\frac{1}{n}\log\mathcal{E}\left(f\right)+\frac{\beta\mu_{\rho,g,\sqrt{c_{\mathrm{ls}}}}}{n},\\
f_{0}\left(\boldsymbol{M}\right) & \geq\frac{1}{n}\log\mathcal{E}\left(f\right)-\frac{\beta\mu_{\rho,g,\sqrt{c_{\mathrm{ls}}}}}{n}-\frac{c_{\rho,f,c_{\mathrm{ls}}}}{n},
\end{cases}\label{eq:LogSobolevDeviation-Ef}
\end{align}
with probability at least $1-2\exp\left(-\frac{\beta^{2}}{\kappa}\right)$. 

(3) (\textbf{Heavy-tailed Distribution}) Under the assumptions of
Proposition \ref{thm:GeneralTemplate}(c), we have
\begin{equation}
\begin{cases}
f_{0}\left(\boldsymbol{M}\right) & \leq\frac{1}{n}\log\mathcal{E}_{\tilde{\boldsymbol{M}}}\left(f\right)+\frac{\mu_{\rho,g,\zeta}}{n},\\
f_{0}\left(\boldsymbol{M}\right) & \geq\frac{1}{n}\log\mathcal{E}_{\tilde{\boldsymbol{M}}}\left(f\right)-\frac{\mu_{\rho,g,\zeta}}{n}-\frac{c_{\rho,f,\tau_{c},\sigma_{c}}}{n},
\end{cases}\label{eq:HeavyTailDeviation-Ef}
\end{equation}
with probability exceeding $1-5n^{-c(n)}$, where
\[
\begin{cases}
\zeta & :=\text{ }\tau_{c}\sigma_{c}\sqrt{8\kappa c(n)\log n};\\
\mathcal{E}_{\tilde{\boldsymbol{M}}}\left(f\right) & :=\text{ }\mathbb{E}\left[\exp\left(nf_{0}\left(\tilde{\boldsymbol{M}}\right)\right)\right].
\end{cases}
\]
\end{theorem}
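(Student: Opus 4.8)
The plan is to read this theorem off as a direct consequence of Proposition~\ref{thm:GeneralTemplate} and Lemma~\ref{thm:ComputeMeanSubExponential}: the former controls the fluctuation of $f_0(\boldsymbol{M})$ about its mean, while the latter replaces that (generally intractable) mean with the computable quantity $\frac{1}{n}\log\mathcal{E}(f)$. Concretely, on the high-probability event supplied by the Proposition one has the two-sided bound $\left|f_0(\boldsymbol{M})-\mathbb{E}[f_0(\boldsymbol{M})]\right|\leq\frac{\beta\mu}{n}$, and by the Lemma the mean satisfies $\frac{1}{n}\log\mathcal{E}(f)-\frac{c}{n}\leq\mathbb{E}[f_0(\boldsymbol{M})]\leq\frac{1}{n}\log\mathcal{E}(f)$. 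Chaining the upper half of the first bound with the upper half of the second yields $f_0(\boldsymbol{M})\leq\frac{1}{n}\log\mathcal{E}(f)+\frac{\beta\mu}{n}$, and likewise the lower halves combine to $f_0(\boldsymbol{M})\geq\frac{1}{n}\log\mathcal{E}(f)-\frac{\beta\mu}{n}-\frac{c}{n}$, which is exactly the asymmetric form appearing in the statement. The extra $-\frac{c}{n}$ correction on the lower side is precisely the Jensen gap quantified by the Lemma.

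For parts (1) and (2) I would instantiate this with the bounded- and log-Sobolev-measure branches of Proposition~\ref{thm:GeneralTemplate}, which both deliver genuinely sub-Gaussian tails. The one bookkeeping step is to convert the $\beta$-parametrized deviation inequality into the $y$-parametrized tail hypothesis of Lemma~\ref{thm:ComputeMeanSubExponential}(b): setting $\frac{y}{n}$ equal to the Proposition's deviation radius identifies $c_1$ (namely $4$ or $2$) and $c_2$ (a constant of the form $\left(8\kappa\,(\text{or }\kappa)\,D^2\rho^2\nu^2\|g\|_{\mathcal{L}}^2\right)^{-1}$, up to the obvious substitution $D\mapsto\sqrt{c_{\mathrm{ls}}}$). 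Feeding these into the Lemma's Jensen-gap expression $\log\!\left(1+\sqrt{\pi c_1^2/c_2}\,\exp(1/(4c_2))\right)$ produces the tabulated constants $c_{\rho,f,D}$ and $c_{\rho,f,c_{\mathrm{ls}}}$, while the deviation radius $\frac{\beta\mu}{n}$ is recorded through the shorthand $\mu_{\rho,g,A}=\nu\rho\|g\|_{\mathcal{L}}A$ with $A=D$ or $A=\sqrt{c_{\mathrm{ls}}}$ respectively. No union bound is even needed here, since both estimates hold on the same event.

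Part (3) requires one extra link in the chain, because Proposition~\ref{thm:GeneralTemplate}(c) compares $f_0(\boldsymbol{M})$ not to $\mathbb{E}[f_0(\boldsymbol{M})]$ but to $\mathbb{E}[f_0(\tilde{\boldsymbol{M}})]$, the mean of the \emph{truncated} matrix. My plan is therefore to treat $\tilde{\boldsymbol{M}}$ as a bounded ensemble in its own right: its entries are almost surely bounded by $\tau_c$, have variance $\sigma_c^2$, and remain zero-mean because the underlying distributions are symmetric about $0$, so Proposition~\ref{thm:GeneralTemplate}(a) applies verbatim to $f_0(\tilde{\boldsymbol{M}})$ and furnishes a sub-Gaussian tail with $D\mapsto\tau_c$, $\nu\mapsto\sigma_c$. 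Lemma~\ref{thm:ComputeMeanSubExponential}(b) then bridges $\mathbb{E}[f_0(\tilde{\boldsymbol{M}})]$ to $\frac{1}{n}\log\mathcal{E}_{\tilde{\boldsymbol{M}}}(f)$, yielding $c_{\rho,f,\tau_c,\sigma_c}$. Combining this with the $1-5n^{-c(n)}$ event of Proposition~\ref{thm:GeneralTemplate}(c) — whose deviation radius $\frac{2\kappa\sqrt{c(n)\log n}\,\tau_c\sigma_c\rho\|g\|_{\mathcal{L}}}{n}$ is repackaged as $\frac{\mu_{\rho,g,\zeta}}{n}$ with $\zeta=\tau_c\sigma_c\sqrt{8\kappa c(n)\log n}$ — gives the stated two-sided bound against $\frac{1}{n}\log\mathcal{E}_{\tilde{\boldsymbol{M}}}(f)$.

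The routine arithmetic aside, the genuinely delicate point is the heavy-tailed case: one must verify that the truncated ensemble legitimately satisfies every hypothesis of the bounded-measure branch (boundedness, matched second moments, and especially the preservation of zero mean under truncation, which is where the symmetry assumption is essential), and then ensure the two probabilistic ingredients — the $1-5n^{-c(n)}$ concentration of $f_0(\boldsymbol{M})$ around $\mathbb{E}[f_0(\tilde{\boldsymbol{M}})]$ and the \emph{deterministic} Jensen-gap estimate from the Lemma — are composed so that the final $1-5n^{-c(n)}$ confidence level is retained. I expect no conceptual difficulty beyond this, since the deterministic Lemma contributes no additional failure probability.
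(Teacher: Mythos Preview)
Your approach is exactly the paper's: combine the concentration of Proposition~\ref{thm:GeneralTemplate} with the Jensen-gap estimate of Lemma~\ref{thm:ComputeMeanSubExponential}, and for part~(3) apply the bounded-measure branch to the truncated matrix $\tilde{\boldsymbol{M}}$ before invoking $\boldsymbol{M}=\tilde{\boldsymbol{M}}$ with high probability.

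There is one technical point you glossed over in part~(1) that the paper handles explicitly and that matters for the tabulated constant. Proposition~\ref{thm:GeneralTemplate}(a) is stated only for $\beta>8\sqrt{\pi}$, so after the change of variable the tail bound $\mathbb{P}(|Y|>y)\leq 4\exp\bigl(-y^2/(8\kappa D^2\rho^2\nu^2\|g\|_{\mathcal{L}}^2)\bigr)$ is valid only for $y>8\sqrt{\pi}\,D\rho\nu\|g\|_{\mathcal{L}}$, not for all $y\geq 0$ as Lemma~\ref{thm:ComputeMeanSubExponential}(b) requires. The paper patches the small-$y$ range with the trivial bound $\mathbb{P}(|Y|>y)\leq 1$ and absorbs both regimes into a single sub-Gaussian envelope with $c_1=\exp(8\pi/\kappa)$ rather than your $c_1=4$; this is precisely the origin of the $e^{8\pi/\kappa}$ factor inside $c_{\rho,f,D}$ (and, by the same mechanism applied to $\tilde{\boldsymbol{M}}$, inside $c_{\rho,f,\tau_c,\sigma_c}$). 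For part~(2) your $c_1=2$ is fine because Proposition~\ref{thm:GeneralTemplate}(b) carries no lower restriction on $\beta$.
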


\begin{proof}See Appendix \ref{sec:Proof-of-Theorem-Deviation-Ef}.\end{proof}

\begin{table}
\caption{\label{tab:Summary-of-Preconstants-Thm-Deviation}Summary of parameters
of Theorem \ref{thm:Deviation-Ef}.}

\vspace{5pt}
\centering{}\footnotesize%
\begin{tabular}{>{\centering}p{0.9\linewidth}}
\hline 
\raggedright{}\emph{$c_{\rho,f,D}:=\log\left(1+\sqrt{8\kappa\pi}D\rho\nu\left\Vert g\right\Vert _{\mathcal{L}}e^{\frac{8\pi}{\kappa}+2\kappa D^{2}\rho^{2}\nu^{2}\left\Vert g\right\Vert _{\mathcal{L}}^{2}}\right)$}\tabularnewline
\raggedright{}\emph{$c_{\rho,f,c_{\mathrm{ls}}}:=\log\left(1+\sqrt{4\kappa\pi c_{\mathrm{ls}}}\rho\nu\left\Vert g\right\Vert _{\mathcal{L}}e^{\frac{\kappa}{4}c_{\mathrm{ls}}\rho^{2}\nu^{2}\left\Vert g\right\Vert _{\mathcal{L}}^{2}}\right)$}\tabularnewline
\raggedright{}\emph{$c_{\rho,f,\tau_{c},\sigma_{c}}:=\log\left(1+\sqrt{8\kappa\pi}\tau_{c}\sigma_{c}\rho\left\Vert g\right\Vert _{\mathcal{L}}e^{\frac{8\pi}{\kappa}+2\kappa\tau_{c}^{2}\sigma_{c}^{2}\rho^{2}\left\Vert g\right\Vert _{\mathcal{L}}^{2}}\right)$}\tabularnewline
\hline 
\tabularnewline
\end{tabular}
\end{table}

While Lemma \ref{thm:ComputeMeanSubExponential} focuses on sub-exponential
and sub-Gaussian measures, we are able to extend the concentration
phenomenon around $\frac{1}{n}\log\mathcal{E}\left(f\right)$ to a
much larger class of distributions including heavy-tailed measures
through certain truncation arguments.

To get a more informative understanding of Theorem \ref{thm:Deviation-Ef},
consider the case where $\left\Vert g\right\Vert _{\mathcal{L}}$,
$\tau_{c}$, $D$, $c_{\mathrm{ls}}$, $\rho$ are all constants.
One can see that with probability exceeding $1-\epsilon$ for any
small constant $\epsilon>0$,
\[
f_{0}\left(\boldsymbol{M}\right)=\frac{1}{n}\log\mathcal{E}\left(f\right)+\mathcal{O}\left(\frac{1}{n}\right)
\]
holds for measures of bounded support and measures satisfying the
LSI. In comparison, for symmetric power-law measures satisfying $\mathbb{P}\left(\left|\boldsymbol{M}_{ij}\right|\geq x\right)\leq x^{-\lambda}$
for some $\lambda>0$, the function $\tau_{c}$ is typically a function
of the form $n^{\delta}$ for some constant $\delta>0$. In this case,
with probability at least $1-\epsilon$, one has 
\[
f_{0}\left(\boldsymbol{M}\right)=\frac{1}{n}\log\mathcal{E}_{\tilde{\boldsymbol{M}}}\left(f\right)+\mathcal{O}\left(\frac{1}{n^{1-\delta}}\right)
\]
for heavy-tailed measures, where the uncertainty cannot be controlled
as well as for bounded measures or measures obeying the LSI. 

In the scenario where $\mathcal{E}\left(f\right)$ can be computed,
Theorem \ref{thm:Deviation-Ef} presents a full characterization of
the confidence interval of the objective metrics taking the form of
linear spectral statistics. In fact, in many applications, $\mathcal{E}\left(f\right)$
(rather than $\mathbb{E}\left[f_{0}\right]$) can be precisely computed
for a general class of probability distributions beyond the Gaussian
measure, which allows for accurate characterization of the concentration
of the objective metrics via Theorem \ref{thm:Deviation-Ef}, as illustrated
in Section \ref{sec:Sample-Applications}.

\subsection{Confidence Interval\label{sub:Confidence-Interval}}

In this subsection, we demonstrate that the sharp spectral measure
concentration phenomenon allows us to estimate the mean of the target
metric in terms of narrow confi{}dence intervals.

Specifically, suppose that we have obtained the value of the objective
metric $f_{0}\left(\boldsymbol{M}\right)=\frac{1}{n}\sum_{i}f\left(\lambda_{i}\left(\frac{1}{n}\boldsymbol{M}\boldsymbol{R}\boldsymbol{R}^{*}\boldsymbol{M}^{*}\right)\right)$
for a given realization $\boldsymbol{M}$. The goal is find an interval
(called $(1-\alpha_{0})$ confidence interval) 
\[
\big[l\left(\boldsymbol{M}\right),u\left(\boldsymbol{M}\right)\big]
\]
such that
\begin{equation}
\mathbb{P}\left(\mathbb{E}\left[f_{0}\right]\in\big[l\left(\boldsymbol{M}\right),u\left(\boldsymbol{M}\right)\big]\right)\geq1-\alpha_{0}
\end{equation}
for some constant $\alpha_{0}\in\left(0,1\right)$. 

Consider the assumptions of Proposition \ref{thm:GeneralTemplate},
i.e. $\boldsymbol{M}_{ij}$'s are independently generated satisfying
$\mathbb{E}[\boldsymbol{M}_{ij}]=0$ and $\mathbb{E}[|\boldsymbol{M}_{ij}|^{2}]=\nu_{ij}^{2}\leq\nu^{2}$.
An immediate consequence of Proposition \ref{thm:GeneralTemplate}
is stated as follows.
\begin{itemize}
\item \textbf{(Bounded Measure)} If $\boldsymbol{M}_{ij}$'s are bounded
by $D$ and $g(\cdot)$ is convex, then
\begin{equation}
\left[f_{0}\left(\boldsymbol{M}\right)\pm\frac{\sqrt{8\kappa}D\rho\nu\left\Vert g\right\Vert _{\mathcal{L}}\sqrt{\log\frac{4}{\alpha_{0}}}}{n}\right]
\end{equation}
is an $(1-\alpha_{0})$ confidence interval for $\mathbb{E}\left[f_{0}\left(\boldsymbol{M}\right)\right]$.
\end{itemize}
\vspace{0.05in}
\begin{itemize}
\item \textbf{(Logarithmic Sobolev Measure)} If the measure of $\boldsymbol{M}_{ij}$
satisfies the LSI with a uniform constant $c_{\mathrm{ls}}$, then
\begin{equation}
\left[f_{0}\left(\boldsymbol{M}\right)\pm\frac{\sqrt{\kappa c_{\mathrm{ls}}}\rho\nu\left\Vert g\right\Vert _{\mathcal{L}}\sqrt{\log\frac{2}{\alpha_{0}}}}{n}\right]
\end{equation}
is an $(1-\alpha_{0})$ confidence interval for $\mathbb{E}\left[f_{0}\left(\boldsymbol{M}\right)\right]$.
\end{itemize}
\vspace{0.05in}
\begin{itemize}
\item \textbf{(Sub-Exponential Measure)} If the measure of $\boldsymbol{M}_{ij}$
is symmetric about 0 and $\mathbb{P}\left(\left|\boldsymbol{M}_{ij}\right|>x\right)\leq e^{-\lambda x}$
for some constant $\lambda>0$, then $c(n)=\frac{\log\left(\frac{5}{\alpha_{0}}\right)}{\log n}$
and $\tau_{c}=\frac{1}{\lambda}\log\left(\frac{5mn}{\alpha_{0}}\right)$,
indicating that
\begin{equation}
\small\left[f_{0}\left(\boldsymbol{M}\right)\pm\frac{\sqrt{4\kappa\log\left(\frac{5}{\alpha_{0}}\right)}\rho\nu\left\Vert g\right\Vert _{\mathcal{L}}}{\lambda}\frac{\log\left(\frac{5mn}{\alpha_{0}}\right)}{n}\right]
\end{equation}
is an $(1-\alpha_{0})$ confidence interval for $\mathbb{E}[f_{0}(\tilde{\boldsymbol{M}})]$.
\end{itemize}
\vspace{0.05in}
\begin{itemize}
\item \textbf{(Power-Law Measure)} If the measure of $\boldsymbol{M}_{ij}$
is symmetric about 0 and $\mathbb{P}\left(\left|\boldsymbol{M}_{ij}\right|>x\right)\leq x^{-\lambda}$
for some constant $\lambda>0$, then $c(n)=\frac{\log\left(\frac{5}{\alpha_{0}}\right)}{\log n}$
and $\tau_{c}=\left(\frac{5mn}{\alpha_{0}}\right)^{\frac{1}{\lambda}}$,
indicating that
\[
\small\left[f_{0}\left(\boldsymbol{M}\right)\pm\frac{\left(\sqrt{4\kappa\log\left(\frac{5}{\alpha_{0}}\right)}\left(\frac{5}{\alpha_{0}}\right)^{\frac{1}{\lambda}}\rho\nu\left\Vert g\right\Vert _{\mathcal{L}}\right)\left(mn\right)^{\frac{1}{\lambda}}}{n}\right]
\]
is an $(1-\alpha_{0})$ confidence interval for $\mathbb{E}[f_{0}(\tilde{\boldsymbol{M}})]$.
\end{itemize}
One can see from the above examples that the spans of the confidence
intervals under power-law distributions are much less controlled than
that under sub-Gaussian measure. Depending on the power-law decay
exponent $\lambda$, the typical deviation can be as large as $\mathcal{O}\left(n^{-1+\frac{2}{\lambda}}\right)$
as compared to $\mathcal{O}\left(\frac{1}{n}\right)$ under various
sub-Gaussian measures.

When $D$, $c_{\mathrm{ls}}$, $\rho$, $\left\Vert g\right\Vert _{\mathcal{L}}$,
and $\alpha_{0}$ are all constants, the widths of the above confidence
intervals decay with $n$, which is negligible for many metrics of
interest.

\subsection{A General Template for Applying Proposition \ref{thm:GeneralTemplate}
and Theorem \ref{thm:Deviation-Ef}\label{sub:Template}}

For pedagogical reasons, we provide here a general recipe regarding
how to apply Proposition \ref{thm:GeneralTemplate} and Theorem \ref{thm:Deviation-Ef}
to evaluate the fluctuation of system performance metrics in random
MIMO channels with channel matrix $\boldsymbol{H}$. 
\begin{enumerate}
\item Transform the performance metric into a linear spectral statistic,
i.e. write the metric in the form $f_{0}\left(\boldsymbol{M}\right)=\frac{1}{n}\sum_{i=1}^{n}f\left(\lambda_{i}\left(\frac{1}{n}\boldsymbol{H}\boldsymbol{R}\boldsymbol{R}^{*}\boldsymbol{H}^{*}\right)\right)$
for some function $f\left(\cdot\right)$ and some deterministic matrix
$\boldsymbol{R}$. 
\item For measures satisfying the LSI, it suffices to calculate the Lipschitz
constant of $g(x):=f(x^{2})$. For both measures of bounded support
and heavy-tailed distributions, since the function $g(x):=f(x^{2})$
is non-convex in general, one typically needs to convexify $g(x)$
first. In particular, one might want to identify two reasonably tight
approximation $g_{1}(x)$ and $g_{2}(x)$ of $g(x)$ such that: 1)
$g_{1}(x)$ and $g_{2}(x)$ are both convex (or concave); 2) $g_{1}(x)\leq g(x)\leq g_{2}(x)$.
\item Apply Proposition \ref{thm:GeneralTemplate} and / or Theorem \ref{thm:Deviation-Ef}
on either $g(x)$ (for measures satisfying the LSI) or on $g_{1}(x)$
and $g_{2}(x)$ (for bounded measures or heavy-tailed measures) to
obtain concentration bounds.
\end{enumerate}
This recipe will be used to establish the canonical examples provided
in Section \ref{sec:Sample-Applications}.

\section{Some Canonical Examples\label{sec:Sample-Applications}}

In this section, we apply our general analysis framework developed
in Section \ref{sec:General-Template} to a few canonical examples
that arise in wireless communications and signal processing. Rather
than making each example as general as possible, we present only simple
settings that admit concise expressions from measure concentration.
We emphasize these simple illustrative examples in order to demonstrate
the effectiveness of our general treatment.

\subsection{Mutual Information and Power Offset of Random MIMO Channels\label{sec:Applications:-MIMO-Capacity}}

Consider the following MIMO channel
\begin{equation}
\boldsymbol{y}=\boldsymbol{H}\boldsymbol{x}+\boldsymbol{z},\label{eq:MIMOchannel}
\end{equation}
where $\boldsymbol{H}\in\mathbb{R}^{n_{\mathrm{r}}\times n_{\mathrm{t}}}$
denotes the channel matrix, $\boldsymbol{x}\in\mathbb{R}^{n_{\mathrm{t}}}$
represents the transmit signal, and $\boldsymbol{y}\in\mathbb{R}^{n_{\mathrm{r}}}$
is the received signal. We denote by $\boldsymbol{z}\sim\mathcal{N}\left(\boldsymbol{0},\sigma^{2}\boldsymbol{I}_{n}\right)$
the additive Gaussian noise. Note that (\ref{eq:MIMOchannel}) allows
modeling of a large class of random vector channels (e.g. MIMO-OFDM
channels\cite{chiani2011outage}, CDMA systems \cite{verdu1999spectral},
undersampled channels \cite{ChenGolEld2010}) beyond multiple antenna
channels. For instance, in unfaded direct-sequence CDMA systems, the
columns of $\boldsymbol{H}$ can represent random spreading sequences;
see \cite[Section 3.1.1]{TulinoVerdu2004} for details. 

The total power is assumed to be $P$, independent of $n_{\mathrm{t}}$
and $n_{\mathrm{r}}$, and the signal-to-noise ratio (SNR) is denoted
by 
\begin{equation}
\mbox{\ensuremath{\mathsf{SNR}}}:=\frac{P}{\sigma^{2}}.\label{eq:SNR}
\end{equation}
In addition, we denote the degrees of freedom as
\begin{equation}
n:=\min\left(n_{\text{t}},n_{\text{r}}\right),
\end{equation}
and use $\alpha>0$ to represent the ratio
\begin{equation}
\alpha:=\frac{n_{\mathrm{t}}}{n_{\mathrm{r}}}.\label{eq:DefnAlpha}
\end{equation}
We suppose throughout that $\alpha$ is a \emph{universal constant}
that does not scale with $n_{\text{t}}$ and $n_{\text{r}}$. 

Consider the simple channel model where $\left\{ \boldsymbol{H}_{ij}:1\leq i\leq n_{\mathrm{r}},1\leq j\leq n_{\mathrm{t}}\right\} $
are independently distributed. Suppose that channel state information
(CSI) is available to both the transmitter and the receiver. When
equal power allocation is adopted at all transmit antennas, it is
well known that the mutual information $C\left(\boldsymbol{H},\mbox{\ensuremath{\mathsf{SNR}}}\right)$
of the MIMO channel (\ref{eq:MIMOchannel}) under equal power allocation
is \cite{Tel1999}
\begin{equation}
C\left(\boldsymbol{H},\mbox{\ensuremath{\mathsf{SNR}}}\right)=\log\det\left(\boldsymbol{I}+\mbox{\ensuremath{\mathsf{SNR}}}\cdot\frac{1}{n_{\mathrm{t}}}\boldsymbol{H}\boldsymbol{H}^{*}\right),
\end{equation}
which depends only on the eigenvalue distribution of $\boldsymbol{H}\boldsymbol{H}^{*}$.
In the presence of asymptotically high SNR and channel dimensions,
it is well known that if $\boldsymbol{H}_{ij}$'s are independent
with zero mean and unit variance, then, almost surely,
\begin{align}
 & \lim_{\mathrm{SNR}\rightarrow\infty}\lim_{n_{\mathrm{r}}\rightarrow\infty}\left(\frac{C\left(\boldsymbol{H},\mbox{\ensuremath{\mathsf{SNR}}}\right)}{n_{\mathrm{r}}}-\min\left\{ \alpha,1\right\} \cdot\log\mbox{\ensuremath{\mathsf{SNR}}}\right)\nonumber \\
 & \quad=\begin{cases}
-1+\left(\alpha-1\right)\log\left(\frac{\alpha}{\alpha-1}\right),\quad & \text{if }\alpha\geq1,\\
-\alpha\log\left(\alpha e\right)+\left(1-\alpha\right)\log\left(\frac{1}{1-\alpha}\right), & \text{if }\alpha<1,
\end{cases}
\end{align}
which is independent of the precise entry distribution of $\boldsymbol{H}$
(e.g. \cite{lozano2002capacity}). The method of deterministic equivalents
has also been used to obtain good approximations under finite channel
dimensions \cite[Chapter 6]{couillet2011random}. In contrast, our
framework characterizes the concentration of mutual information in
the presence of finite SNR and finite $n_{\mathrm{r}}$ with reasonably
tight confidence intervals. Interestingly, the MIMO mutual information
is well-controlled within a narrow interval, as formally stated in
the following theorem. 

\begin{theorem}\label{thm:CapacityIIDChannel}Assume perfect CSI
at both the transmitter and the receiver, and equal power allocation
at all transmit antennas. Suppose that $\boldsymbol{H}\in\mathbb{R}^{n_{\mathrm{r}}\times n_{\mathrm{t}}}$
where $\boldsymbol{H}_{ij}$'s are independent random variables satisfying
$\mathbb{E}\left[\boldsymbol{H}_{ij}\right]=0$ and $\mathbb{E}[|\boldsymbol{H}_{ij}|^{2}]=1$.
Set $n:=\min\left\{ n_{\mathrm{r}},n_{\mathrm{t}}\right\} $.

(a) If $\boldsymbol{H}_{ij}$'s are bounded by $D$, then for any
$\beta>8\sqrt{\pi}$, 
\begin{equation}
\small\frac{C\left(\boldsymbol{H},\mbox{\ensuremath{\mathsf{SNR}}}\right)}{n_{\mathrm{r}}}\in\begin{cases}
\log\mbox{\ensuremath{\mathsf{SNR}}}+\left[\frac{1}{n_{\mathrm{r}}}\log\mathcal{R}\left(\frac{2}{e\small\mathsf{SNR}},n_{\mathrm{r}},n_{\mathrm{t}}\right)+\frac{\beta r_{\mathrm{bd}}^{\mathrm{lb},+}}{n_{\mathrm{r}}},\right.\\
\left.\quad\text{ }\text{ }\frac{1}{n_{\mathrm{r}}}\log\mathcal{R}\left(\frac{e}{2\small\mathsf{SNR}},n_{\mathrm{r}},n_{\mathrm{t}}\right)+\frac{\beta r_{\mathrm{bd}}^{\mathrm{ub},+}}{n_{\mathrm{r}}}\right],\text{ }\text{ if }\alpha\geq1\\
\quad\\
\alpha\log\frac{\mbox{\ensuremath{\mathsf{SNR}}}}{\alpha}+\left[\frac{1}{n_{\mathrm{r}}}\log\mathcal{R}\left(\frac{2\alpha}{e\small\mathsf{SNR}},n_{\mathrm{t}},n_{\mathrm{r}}\right)+\frac{\beta r_{\mathrm{bd}}^{\mathrm{lb},-}}{n_{\mathrm{r}}},\right.\\
\quad\text{ }\text{ }\left.\frac{1}{n_{\mathrm{r}}}\log\mathcal{R}\left(\frac{e\alpha}{2\small\mathsf{SNR}},n_{\mathrm{t}},n_{\mathrm{r}}\right)+\frac{\beta r_{\mathrm{bd}}^{\mathrm{ub},-}}{n_{\mathrm{r}}}\right],\text{ }\text{ if }\alpha<1
\end{cases}\label{eq:BoundsFepsilonBounded-Capacity}
\end{equation}
with probability exceeding $1-8\exp\left(-\frac{\beta^{2}}{8}\right)$. 

(b) If $\boldsymbol{H}_{ij}$'s satisfy the LSI with respect to a
uniform constant $c_{\mathrm{ls}}$, then for any $\beta>0$,
\begin{align}
\small\frac{C\left(\boldsymbol{H},\mbox{\ensuremath{\mathsf{SNR}}}\right)}{n_{\mathrm{r}}}\in\begin{cases}
\log\mbox{\ensuremath{\mathsf{SNR}}}+\frac{1}{n_{\mathrm{r}}}\log\mathcal{R}\left(\frac{1}{\small\mathsf{SNR}},n_{\mathrm{r}},n_{\mathrm{t}}\right)\\
\quad\quad\quad+\frac{\beta}{n_{\mathrm{r}}}\left[r_{\mathrm{ls}}^{\mathrm{lb},+},r_{\mathrm{ls}}^{\mathrm{ub},+}\right],\quad & \text{if }\alpha\geq1\\
\quad\\
\alpha\log\frac{\mbox{\ensuremath{\mathsf{SNR}}}}{\alpha}+\frac{1}{n_{\mathrm{r}}}\log\mathcal{R}\left(\frac{\alpha}{\small\mathsf{SNR}},n_{\mathrm{t}},n_{\mathrm{r}}\right)\\
\quad\quad\quad+\frac{\beta}{n_{\mathrm{r}}}\left[r_{\mathrm{ls}}^{\mathrm{lb},-},r_{\mathrm{ls}}^{\mathrm{ub},-}\right],\quad & \text{if }\alpha<1
\end{cases}\label{eq:BoundsFepsilonSubgaussian-Capacity}
\end{align}
with probability exceeding $1-4\exp\left(-\beta^{2}\right)$. 

(c) Suppose that $\boldsymbol{H}_{ij}$'s are independently drawn
from either sub-exponential distributions or heavy-tailed distributions
and that the distributions are symmetric about 0. Let $\tau_{c}(n)$
be defined as in (\ref{eq:TruncateMProb}) with respect to $\boldsymbol{H}_{ij}$'s
for some sequence $c(n)$. Then,
\begin{equation}
\small\frac{C\left(\boldsymbol{H},\mbox{\ensuremath{\mathsf{SNR}}}\right)}{n_{\mathrm{r}}}\in\begin{cases}
\log\mbox{\ensuremath{\mathsf{SNR}}}+\left[\frac{1}{n_{\mathrm{r}}}\log\mathcal{R}\left(\frac{2}{e\sigma_{\mathrm{c}}^{2}\small\mathsf{SNR}},n_{\mathrm{r}},n_{\mathrm{t}}\right)+\frac{r_{\mathrm{ht}}^{\mathrm{lb},+}}{n_{\mathrm{r}}},\right.\\
\left.\quad\quad\quad\quad\quad\frac{1}{n_{\mathrm{r}}}\log\mathcal{R}\left(\frac{e}{2\sigma_{c}^{2}\small\mathsf{SNR}},n_{\mathrm{r}},n_{\mathrm{t}}\right)+\frac{r_{\mathrm{ht}}^{\mathrm{ub},+}}{n_{\mathrm{r}}}\right]\\
\quad\quad\quad\quad+2\log\sigma_{c},\quad\quad\quad\quad\quad\quad\quad\text{if }\alpha\geq1\\
\\
\alpha\log\frac{\mbox{\ensuremath{\mathsf{SNR}}}}{\alpha}+\left[\frac{1}{n_{\mathrm{r}}}\log\mathcal{R}\left(\frac{2\alpha}{e\sigma_{\mathrm{c}}^{2}\small\mathsf{SNR}},n_{\mathrm{r}},n_{\mathrm{t}}\right)+\frac{r_{\mathrm{ht}}^{\mathrm{lb},-}}{n_{\mathrm{r}}},\right.\\
\quad\quad\quad\quad\quad\left.\frac{1}{n_{\mathrm{r}}}\log\mathcal{R}\left(\frac{e\alpha}{2\sigma_{c}^{2}\small\mathsf{SNR}},n_{\mathrm{t}},n_{\mathrm{r}}\right)+\frac{r_{\mathrm{ht}}^{\mathrm{ub},-}}{n_{\mathrm{r}}},\right]\\
\quad\quad\quad\quad+2\log\sigma_{c},\quad\quad\quad\quad\quad\quad\quad\text{if }\alpha<1
\end{cases}\label{eq:HeavyTailDeviation-Ef-Capacity}
\end{equation}
with probability exceeding $1-\frac{10}{n^{c(n)}}$.

Here,
\begin{equation}
\mathcal{R}\left(\epsilon,n,m\right):=\sum_{i=0}^{n}{n \choose i}\frac{\epsilon^{n-i}m^{-i}m!}{\left(m-i\right)!},\label{eq:ReExpression}
\end{equation}
and the residual terms are provided in Table \ref{tab:Summary-of-Preconstants-Cap}.\end{theorem}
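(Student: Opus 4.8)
The plan is to instantiate the general recipe of Section \ref{sub:Template} with $\boldsymbol{M}=\boldsymbol{H}$, $\nu=1$, and $\boldsymbol{R}=\boldsymbol{I}$ (so $\rho=1$). First I would rewrite the normalized mutual information as a linear spectral statistic. For $\alpha\ge 1$ (i.e.\ $n_{\mathrm{t}}\ge n_{\mathrm{r}}$, $n=n_{\mathrm{r}}$) note that
\begin{equation}
\frac{C(\boldsymbol{H},\mathsf{SNR})}{n_{\mathrm{r}}}=\frac{1}{n_{\mathrm{r}}}\sum_{i=1}^{n_{\mathrm{r}}}\log\left(1+\frac{\mathsf{SNR}}{\alpha}\lambda_{i}\left(\tfrac{1}{n_{\mathrm{r}}}\boldsymbol{H}\boldsymbol{H}^{*}\right)\right)=f_{0}(\boldsymbol{H}),
\end{equation}
with $f(x)=\log(1+\tfrac{\mathsf{SNR}}{\alpha}x)$, so that $g(x):=f(x^{2})=\log(1+\tfrac{\mathsf{SNR}}{\alpha}x^{2})$; a direct computation gives $\left\Vert g\right\Vert_{\mathcal{L}}=\sqrt{\mathsf{SNR}/\alpha}$, attained at $x=\sqrt{\alpha/\mathsf{SNR}}$. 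For $\alpha<1$ I would instead work with $\boldsymbol{H}^{*}\boldsymbol{H}$ (which carries the $n_{\mathrm{t}}$ nonzero eigenvalues), swapping the roles of $n_{\mathrm{r}},n_{\mathrm{t}}$ and producing the $\alpha\log(\mathsf{SNR}/\alpha)$ prefactor together with the $\mathcal{R}(\cdot,n_{\mathrm{t}},n_{\mathrm{r}})$ terms; the two cases are otherwise identical, so I would carry out $\alpha\ge 1$ in detail and invoke symmetry.

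The crux is to evaluate the exponential mean $\mathcal{E}(f)$ in closed, distribution-free form, since $f$ is logarithmic and hence $\exp(n_{\mathrm{r}}f_{0}(\boldsymbol{H}))=\det(\boldsymbol{I}+\tfrac{\mathsf{SNR}}{n_{\mathrm{t}}}\boldsymbol{H}\boldsymbol{H}^{*})$. I would expand the determinant over principal minors and apply the Cauchy--Binet formula, writing $\det(\boldsymbol{I}+t\boldsymbol{H}\boldsymbol{H}^{*})=\sum_{k}t^{k}\sum_{|S|=k,|T|=k}|\det\boldsymbol{H}_{S,T}|^{2}$; taking expectations and using the elementary fact that, for a real $k\times k$ block with independent zero-mean unit-variance entries, $\mathbb{E}|\det\boldsymbol{H}_{S,T}|^{2}=k!$ (only index-matched permutation pairs survive), yields
\begin{equation}
\mathcal{E}(f)=\mathbb{E}\left[\det\left(\boldsymbol{I}+\tfrac{\mathsf{SNR}}{n_{\mathrm{t}}}\boldsymbol{H}\boldsymbol{H}^{*}\right)\right]=\sum_{k=0}^{n_{\mathrm{r}}}\binom{n_{\mathrm{r}}}{k}\left(\tfrac{\mathsf{SNR}}{n_{\mathrm{t}}}\right)^{k}\frac{n_{\mathrm{t}}!}{(n_{\mathrm{t}}-k)!}=\mathsf{SNR}^{n_{\mathrm{r}}}\,\mathcal{R}\!\left(\tfrac{1}{\mathsf{SNR}},n_{\mathrm{r}},n_{\mathrm{t}}\right),
\end{equation}
the last step being a rearrangement matching definition \eqref{eq:ReExpression}. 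This is exactly the center $\log\mathsf{SNR}+\tfrac{1}{n_{\mathrm{r}}}\log\mathcal{R}(\tfrac{1}{\mathsf{SNR}},n_{\mathrm{r}},n_{\mathrm{t}})$ of part (b).

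For the log-Sobolev case (b) no convexity is required, so I would simply feed $g$ and the value of $\mathcal{E}(f)$ above into Theorem \ref{thm:Deviation-Ef}(2), the radius $\tfrac{\beta}{n_{\mathrm{r}}}[r_{\mathrm{ls}}^{\mathrm{lb}},r_{\mathrm{ls}}^{\mathrm{ub}}]$ absorbing $\nu\rho\left\Vert g\right\Vert_{\mathcal{L}}$ and the constant $c_{\rho,f,c_{\mathrm{ls}}}$ of Table \ref{tab:Summary-of-Preconstants-Thm-Deviation}. The bounded and heavy-tailed cases (a),(c) need $g(x)=\log(1+\tfrac{\mathsf{SNR}}{\alpha}x^{2})$ to be convex, which it is not (convex on $[0,\sqrt{\alpha/\mathsf{SNR}}]$ and concave afterward). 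Here I would use the two pointwise inequalities, valid for all $s\ge 0$,
\begin{equation}
\log(1+s)\le(1-\log 2)+\log\left(1+\tfrac{2}{e}s\right),\qquad\log(1+s)\ge\log\left(\tfrac{2}{e}+s\right)=\log\tfrac{2}{e}+\log\left(1+\tfrac{e}{2}s\right),
\end{equation}
the first because the gap decreases monotonically from $1-\log 2>0$ to $0$, the second because $1\ge\tfrac{2}{e}$. Applied eigenvalue-by-eigenvalue these sandwich $\log\det(\boldsymbol{I}+\tfrac{\mathsf{SNR}}{n_{\mathrm{t}}}\boldsymbol{H}\boldsymbol{H}^{*})$ between two log-determinants at the shifted effective SNRs $\tfrac{e}{2}\mathsf{SNR}$ and $\tfrac{2}{e}\mathsf{SNR}$; repeating the determinant computation at these values produces precisely the $\mathcal{R}(\tfrac{2}{e\mathsf{SNR}},\cdot)$ and $\mathcal{R}(\tfrac{e}{2\mathsf{SNR}},\cdot)$ endpoints, while concentration is supplied by applying Proposition \ref{thm:GeneralTemplate} and Theorem \ref{thm:Deviation-Ef} to convex/concave approximants of $g$ that match it at $0$ and $\infty$, their Lipschitz constants feeding $r^{\mathrm{ub}},r^{\mathrm{lb}}$. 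For the heavy-tailed case I would first truncate, replacing $\boldsymbol{H}$ by $\tilde{\boldsymbol{H}}$ and rescaling its variance $\sigma_{c}^{2}$ to unit; this turns the effective SNR into $\sigma_{c}^{2}\mathsf{SNR}$ and contributes the additive $2\log\sigma_{c}$, after which the bounded-case argument applies.

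The main obstacle is the non-convexity of $g$: both the exact SNR-shift constants $\tfrac{2}{e},\tfrac{e}{2}$ and the Talagrand-type concentration must come from convex/concave surrogates that are simultaneously tight enough to keep the interval narrow and crude enough to admit a closed-form exponential mean. A secondary nuisance is the bookkeeping of the residual constants in Table \ref{tab:Summary-of-Preconstants-Cap}---propagating $\left\Vert g\right\Vert_{\mathcal{L}}=\sqrt{\mathsf{SNR}/\alpha}$, the Jensen gaps supplied by Lemma \ref{thm:ComputeMeanSubExponential} (which, in the lower-bound direction, is what lets me pass from $\tfrac{1}{n_{\mathrm{r}}}\log\mathcal{E}$ back to $\mathbb{E}[f_{0}]$ despite Jensen pointing the wrong way), and, in the heavy-tailed case, the parameters $\tau_{c},\sigma_{c}$---together with checking that the stated success probabilities ($1-8e^{-\beta^{2}/8}$, $1-4e^{-\beta^{2}}$, $1-10\,n^{-c(n)}$) arise from a union bound over the two surrogates in each case.
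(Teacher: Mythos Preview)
Your overall strategy matches the paper's: rewrite $C(\boldsymbol{H},\mathsf{SNR})/n_{\mathrm r}$ as a linear spectral statistic with $g(x)=\log(\epsilon+x^2)$ (equivalently $\log(1+\tfrac{\mathsf{SNR}}{\alpha}x^2)$), compute the exponential mean via the Cauchy--Binet/Leibniz argument of Lemma~\ref{lemma:ExpDetIplustMM}, apply Theorem~\ref{thm:Deviation-Ef} directly for part (b), and for parts (a),(c) convexify first. The Lipschitz bound $\sqrt{\mathsf{SNR}/\alpha}$, the determinant identity, the heavy-tailed truncation/rescaling producing the $2\log\sigma_c$ shift, and the $\alpha<1$ symmetry argument are all correct and are exactly what the paper does.

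The one place where your outline diverges from the paper and is not quite complete is the convexification in (a). Your pointwise sandwich
\[
\log\bigl(\tfrac{2}{e}+s\bigr)\le\log(1+s)\le(1-\log 2)+\log\bigl(1+\tfrac{2}{e}s\bigr)
\]
does recover the correct centers $\mathcal{R}(\tfrac{2}{e\mathsf{SNR}},\cdot)$ and $\mathcal{R}(\tfrac{e}{2\mathsf{SNR}},\cdot)$, but it does \emph{not} produce concave surrogates for $g$: each bounding function is again of the form $\log(c+x^2)$, which has the same convex-then-concave shape. So the separate ``convex/concave approximants'' you invoke still have to carry both the concavity \emph{and} the exponential-mean bound, and if they introduce their own $\tfrac{2}{e}$ shift you double-count. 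The paper avoids this by building a single piecewise surrogate
\[
g_\epsilon(x)=\begin{cases}\log(\epsilon+x^2),&x\ge\sqrt{\epsilon},\\ \tfrac{1}{\sqrt{\epsilon}}(x-\sqrt{\epsilon})+\log(2\epsilon),&0\le x<\sqrt{\epsilon},\end{cases}
\]
which is globally concave with $\|g_\epsilon\|_{\mathcal L}\le\epsilon^{-1/2}$, and then checking the sandwich $\log(\tfrac{2}{e}\epsilon+x)\le f_\epsilon(x)\le\log(\epsilon+x)$ for $f_\epsilon=g_\epsilon(\sqrt{\cdot})$. Theorem~\ref{thm:Deviation-Ef} applied once to $f_\epsilon$ (and once to $f_{e\epsilon/2}$) then delivers both the concentration and the $\mathcal{R}$ endpoints simultaneously, with a union bound over the two surrogates giving $1-8e^{-\beta^2/8}$. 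Once you replace your two-step ``shift then approximate'' by this single concave surrogate, your argument coincides with the paper's.
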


\begin{proof}See Appendix \ref{sec:Proof-of-Theorem-CapacityIIDChannels}.
\end{proof}

\begin{remark}In the regime where $\beta=\Theta\left(1\right)$,
one can easily see that the magnitudes of all the residual terms $r_{\mathrm{bd}}^{\mathrm{ub},+}$,
$r_{\mathrm{bd}}^{\mathrm{lb},+}$, $r_{\mathrm{bd}}^{\mathrm{ub},-}$,
$r_{\mathrm{bd}}^{\mathrm{lb},-}$, $r_{\mathrm{ls}}^{\mathrm{ub},+}$,
$r_{\mathrm{ls}}^{\mathrm{lb},+}$, $r_{\mathrm{ls}}^{\mathrm{ub},-}$,
$r_{\mathrm{ls}}^{\mathrm{lb},-}$ do not scale with $n$. \end{remark}

\begin{table*}
\caption{\label{tab:Summary-of-Preconstants-Cap}Summary of parameters of Theorem
\ref{thm:CapacityIIDChannel} and Corollary \ref{corollary:CapacityIIDChannel-highSNR}.}

\vspace{5pt}
\centering{}\footnotesize%
\begin{tabular}{>{\raggedright}p{0.09\linewidth}>{\raggedright}p{0.55\linewidth}>{\raggedright}p{0.25\linewidth}}
\hline 
Theorem \ref{thm:CapacityIIDChannel} & $r_{\mathrm{bd}}^{\mathrm{lb},+}:=-\frac{D\sqrt{\small\mathsf{SNR}}}{\sqrt{\alpha}}-\frac{\log\left\{ 1+\sqrt{\frac{8\pi D^{2}\small\mathsf{SNR}}{\alpha}}e^{8\pi+\frac{2D^{2}\small\mathsf{SNR}}{\alpha}}\right\} }{\beta}$ & $r_{\mathrm{bd}}^{\mathrm{ub},+}:=\frac{D\sqrt{\small\mathsf{SNR}}}{\sqrt{\frac{e\alpha}{2}}}$\tabularnewline
 & $r_{\mathrm{bd}}^{\mathrm{lb},-}:=-\small D\sqrt{\small\mathsf{SNR}}-\frac{\log\left\{ 1+\sqrt{8\pi D^{2}\small\mathsf{SNR}}\text{ }e^{8\pi+2D^{2}\small\mathsf{SNR}}\right\} }{\beta}$ & $r_{\mathrm{bd}}^{\mathrm{ub},-}:=\frac{D\sqrt{\small\mathsf{SNR}}}{\sqrt{\frac{e}{2}}}$\tabularnewline
 & $r_{\mathrm{ls}}^{\mathrm{lb},+}:=-\frac{\sqrt{c_{\mathrm{ls}}\small\mathsf{SNR}}}{\sqrt{\alpha}}-\frac{\log\left(1+\sqrt{\frac{4\pi c_{\mathrm{ls}}\small\mathsf{SNR}}{\alpha}}e^{\frac{c_{\mathrm{ls}}\small\mathsf{SNR}}{4\alpha}}\right)}{\beta}$ & $r_{\mathrm{ls}}^{\mathrm{ub},+}:=\frac{\sqrt{c_{\mathrm{ls}}\small\mathsf{SNR}}}{\sqrt{\alpha}}$\tabularnewline
 & $r_{\mathrm{ls}}^{\mathrm{lb},-}:=-\small\sqrt{c_{\mathrm{ls}}}\sqrt{\small\mathsf{SNR}}-\frac{\log\left(1+\sqrt{4\pi c_{\mathrm{ls}}\small\mathsf{SNR}}\text{ }e^{\frac{c_{\mathrm{ls}}\small\mathsf{SNR}}{4}}\right)}{\beta}$ & $r_{\mathrm{ls}}^{\mathrm{ub},-}:=\small\sqrt{c_{\mathrm{ls}}}\sqrt{\small\mathsf{SNR}}$\tabularnewline
 & $r_{\mathrm{ht}}^{\mathrm{lb},+}:=-\frac{\tau_{c}\sigma_{c}\sqrt{8c(n)\log n}\sqrt{\small\mathsf{SNR}}}{\sqrt{\alpha}}-\footnotesize\log\left(1+\sqrt{\frac{8\pi\tau_{c}^{2}\sigma_{c}^{2}\small\mathsf{SNR}}{\alpha}}e^{8\pi+\frac{2\tau_{c}^{2}\sigma_{c}^{2}\small\mathsf{SNR}}{\alpha}}\right)$ & $r_{\mathrm{ht}}^{\mathrm{ub},+}=\frac{\tau_{c}\sigma_{c}\sqrt{8c(n)\log n}\sqrt{\small\mathsf{SNR}}}{\sqrt{\frac{e}{2}\alpha}}$\tabularnewline
 & $r_{\mathrm{ht}}^{\mathrm{lb},-}:=\footnotesize-\tau_{c}\sigma_{c}\sqrt{8c(n)\log n}\sqrt{\small\mathsf{SNR}}-\log\left(1+\sqrt{8\pi\tau_{c}^{2}\sigma_{c}^{2}\small\mathsf{SNR}}\text{ }e^{4\pi+2\tau_{c}^{2}\sigma_{c}^{2}\small\mathsf{SNR}}\right)$ & $r_{\mathrm{ht}}^{\mathrm{ub},-}:=\frac{\tau_{c}\sigma_{c}\sqrt{8c(n)\log n}\sqrt{\small\mathsf{SNR}}}{\sqrt{\frac{e}{2}}}$\tabularnewline
 &  & \tabularnewline
Corollary \ref{corollary:CapacityIIDChannel-highSNR} & $\gamma_{\mathrm{bd}}^{\mathrm{ub},+}:=\frac{1.5\log\left(en_{\mathrm{r}}\right)}{n_{\mathrm{r}}}+4\sqrt{\frac{2\alpha}{e\small\mathsf{SNR}}}\log\sqrt{\frac{e\small\mathsf{SNR}}{2\alpha}}$ & $\gamma_{\mathrm{bd}}^{\mathrm{lb},+}:=\frac{\frac{1}{2}\log n_{\mathrm{r}}-\log\frac{n_{\mathrm{t}}+1}{2\pi}}{n_{\mathrm{r}}}$\tabularnewline
 & $\gamma_{\mathrm{bd}}^{\mathrm{ub},-}:=\frac{1.5\alpha\log\left(en_{\mathrm{t}}\right)}{n_{\mathrm{r}}}+4\alpha^{2}\sqrt{\frac{2}{e\small\mathsf{SNR}}}\log\sqrt{\frac{e\small\mathsf{SNR}}{2}}$ & $\gamma_{\mathrm{bd}}^{\mathrm{lb},-}:=\frac{\alpha\left(\frac{1}{2}\log n_{\mathrm{t}}-\log\frac{n_{\mathrm{r}}+1}{2\pi}\right)}{n_{\mathrm{r}}}$\tabularnewline
 & $\gamma_{\mathrm{ls}}^{\mathrm{ub},+}:=\frac{1.5\log\left(en_{\mathrm{r}}\right)}{n_{\mathrm{r}}}+4\sqrt{\frac{\alpha}{\small\mathsf{SNR}}}\log\sqrt{\frac{\small\mathsf{SNR}}{\alpha}}$ & $\gamma_{\mathrm{ls}}^{\mathrm{lb},+}:=\frac{\frac{1}{2}\log n_{\mathrm{r}}-\log\frac{n_{\mathrm{t}}+1}{2\pi}}{n_{\mathrm{r}}}$\tabularnewline
 & $\gamma_{\mathrm{ls}}^{\mathrm{ub},-}:=\frac{1.5\alpha\log\left(en_{\mathrm{t}}\right)}{n_{\mathrm{r}}}+4\frac{\alpha^{2}}{\sqrt{\small\mathsf{SNR}}}\log\sqrt{\small\mathsf{SNR}}$ & $\gamma_{\mathrm{ls}}^{\mathrm{lb},-}:=\frac{\alpha\left(\frac{1}{2}\log n_{\mathrm{t}}-\log\frac{n_{\mathrm{r}}+1}{2\pi}\right)}{n_{\mathrm{r}}}$\tabularnewline
 & $\gamma_{\mathrm{ht}}^{\mathrm{ub},+}:=\frac{1.5\log\left(en_{\mathrm{r}}\right)}{n_{\mathrm{r}}}+4\sqrt{\frac{2\alpha}{e\sigma_{c}^{2}\small\mathsf{SNR}}}\log\sqrt{\frac{e\sigma_{c}^{2}\small\mathsf{SNR}}{2\alpha}}$ & $\gamma_{\mathrm{ht}}^{\mathrm{lb},+}:=\frac{\frac{1}{2}\log n_{\mathrm{r}}-\log\frac{n_{\mathrm{t}}+1}{2\pi}}{n_{\mathrm{r}}}$\tabularnewline
 & $\gamma_{\mathrm{ht}}^{\mathrm{ub},-}:=\frac{1.5\alpha\log\left(en_{\mathrm{t}}\right)}{n_{\mathrm{r}}}+4\alpha^{2}\sqrt{\frac{2}{e\sigma_{c}^{2}\small\mathsf{SNR}}}\log\sqrt{\frac{e\sigma_{c}^{2}\small\mathsf{SNR}}{2}}$ & $\gamma_{\mathrm{ht}}^{\mathrm{lb},-}:=\frac{\alpha\left(\frac{1}{2}\log n_{\mathrm{t}}-\log\frac{n_{\mathrm{r}}+1}{2\pi}\right)}{n_{\mathrm{r}}}$\tabularnewline
\hline 
\end{tabular}
\end{table*}

The above theorem relies on the expression $\mathcal{R}\left(\epsilon,m,n\right)$.
In fact, this function is exactly equal to $\mathbb{E}\left[\det\left(\epsilon\boldsymbol{I}+\frac{1}{m}\boldsymbol{M}\boldsymbol{M}^{*}\right)\right]$
in a \emph{distribution-free} manner, as revealed by the following
lemma.

\begin{lem}\label{lemma:ExpDetIplustMM}Consider any random matrix
$\boldsymbol{M}\in\mathbb{C}^{n\times m}$ such that $\boldsymbol{M}_{ij}$'s
are independently generated satisfying 
\begin{equation}
\mathbb{E}\left[\boldsymbol{M}_{ij}\right]=0,\quad\text{and}\quad\mathbb{E}\left[\left|\boldsymbol{M}_{ij}\right|^{2}\right]=1.
\end{equation}
Then one has
\begin{equation}
\mathbb{E}\left[\det\left(\epsilon\boldsymbol{I}+\frac{1}{m}\boldsymbol{M}\boldsymbol{M}^{*}\right)\right]=\sum_{i=0}^{n}{n \choose i}\frac{\epsilon^{n-i}m^{-i}m!}{\left(m-i\right)!}\label{eq:ExpDetIplusMM}
\end{equation}
\end{lem}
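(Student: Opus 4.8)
The plan is to expand the determinant into powers of $\epsilon$, reduce each coefficient to a sum of principal minors of $\frac{1}{m}\boldsymbol{M}\boldsymbol{M}^{*}$, and then evaluate the second moments of the minors of $\boldsymbol{M}$ by a direct Leibniz-formula computation that uses only the zero-mean, unit-variance, and independence hypotheses. First I would set $\boldsymbol{B}:=\frac{1}{m}\boldsymbol{M}\boldsymbol{M}^{*}$ and invoke the elementary-symmetric-function expansion of a characteristic-polynomial-type determinant,
\[
\det\left(\epsilon\boldsymbol{I}+\boldsymbol{B}\right)=\sum_{i=0}^{n}\epsilon^{n-i}\sum_{\boldsymbol{s}\in{[n]\choose i}}\det\left(\boldsymbol{B}_{\boldsymbol{s},\boldsymbol{s}}\right),
\]
where the inner sum ranges over all $i\times i$ principal minors. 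Since $\boldsymbol{B}_{\boldsymbol{s},\boldsymbol{s}}=\frac{1}{m}\boldsymbol{M}_{\boldsymbol{s},\cdot}\left(\boldsymbol{M}_{\boldsymbol{s},\cdot}\right)^{*}$, each principal minor equals $m^{-i}\det\left(\boldsymbol{M}_{\boldsymbol{s},\cdot}\left(\boldsymbol{M}_{\boldsymbol{s},\cdot}\right)^{*}\right)$. Applying the Cauchy--Binet formula to the $i\times m$ matrix $\boldsymbol{M}_{\boldsymbol{s},\cdot}$ then yields $\det\left(\boldsymbol{M}_{\boldsymbol{s},\cdot}\left(\boldsymbol{M}_{\boldsymbol{s},\cdot}\right)^{*}\right)=\sum_{\boldsymbol{t}\in{[m]\choose i}}\left|\det\boldsymbol{M}_{\boldsymbol{s},\boldsymbol{t}}\right|^{2}$, reducing the whole problem to evaluating $\mathbb{E}\left[\left|\det\boldsymbol{M}_{\boldsymbol{s},\boldsymbol{t}}\right|^{2}\right]$ for an arbitrary $i\times i$ submatrix.

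The crux is this second-moment computation. I would expand $\det\boldsymbol{M}_{\boldsymbol{s},\boldsymbol{t}}$ by the Leibniz formula as a signed sum over permutations $\sigma$, so that $\left|\det\boldsymbol{M}_{\boldsymbol{s},\boldsymbol{t}}\right|^{2}$ becomes a double sum over pairs $(\sigma,\pi)$ of the products $\mathrm{sgn}(\sigma)\mathrm{sgn}(\pi)\prod_{k}\left(\boldsymbol{M}_{\boldsymbol{s},\boldsymbol{t}}\right)_{k,\sigma(k)}\overline{\left(\boldsymbol{M}_{\boldsymbol{s},\boldsymbol{t}}\right)_{k,\pi(k)}}$. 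Taking expectations and using independence across distinct matrix positions, I would observe that each matrix entry appearing in a fixed $(\sigma,\pi)$-term is hit with multiplicity $0$, $1$, or $2$; any entry of multiplicity $1$ forces the expectation to vanish by the zero-mean assumption, so the only surviving terms are those with $\sigma=\pi$. Each such term contributes $\mathrm{sgn}(\sigma)^{2}\prod_{k}\mathbb{E}\left[\left|\left(\boldsymbol{M}_{\boldsymbol{s},\boldsymbol{t}}\right)_{k,\sigma(k)}\right|^{2}\right]=1$ by unit variance, and there are $i!$ of them, giving $\mathbb{E}\left[\left|\det\boldsymbol{M}_{\boldsymbol{s},\boldsymbol{t}}\right|^{2}\right]=i!$. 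The same bookkeeping handles the real case ($\overline{X}=X$) without change, since the vanishing of off-diagonal pairs again follows from the multiplicity-$1$ argument. I expect this step --- carefully verifying that all pairs $\sigma\neq\pi$ contribute zero --- to be the main (though elementary) obstacle.

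Finally I would assemble the pieces. There are $\binom{n}{i}$ row-sets $\boldsymbol{s}$ and $\binom{m}{i}$ column-sets $\boldsymbol{t}$, hence $\mathbb{E}\left[\det\left(\boldsymbol{M}_{\boldsymbol{s},\cdot}\left(\boldsymbol{M}_{\boldsymbol{s},\cdot}\right)^{*}\right)\right]={m\choose i}i!=\frac{m!}{(m-i)!}$, so the expected $i$-th coefficient is $\binom{n}{i}\epsilon^{n-i}m^{-i}\frac{m!}{(m-i)!}$. Summing over $i$ reproduces exactly the claimed right-hand side of (\ref{eq:ExpDetIplusMM}). I would close with the remark that the derivation invokes only the first two moments and the independence of the entries, which is precisely why the expectation is distribution-free.
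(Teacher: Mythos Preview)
Your proposal is correct and follows essentially the same approach as the paper: expand $\det(\epsilon\boldsymbol{I}+\boldsymbol{B})$ into principal minors, apply Cauchy--Binet to each $\det(\boldsymbol{M}_{\boldsymbol{s},\cdot}\boldsymbol{M}_{\boldsymbol{s},\cdot}^{*})$, and evaluate $\mathbb{E}\bigl[|\det\boldsymbol{M}_{\boldsymbol{s},\boldsymbol{t}}|^{2}\bigr]=i!$ via the Leibniz double sum with cross terms killed by the zero-mean assumption. The only cosmetic difference is at the very end: after establishing that the expectation is distribution-free, the paper also offers the shortcut of citing the known closed form for the Gaussian ensemble, whereas you carry out the elementary count $\binom{m}{i}i!=m!/(m-i)!$ directly---which is exactly the ``straightforward algebraic manipulation'' the paper alludes to but does not write out.
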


\begin{proof}This lemma improves upon known results under Gaussian
random ensembles by generalizing them to a very general class of random
ensembles. See Appendix \ref{sec:Proof-of-Lemma-ExpDetIplustMM} for
the detailed proof. \end{proof}

To get a more quantitative assessment of the concentration intervals,
we plot the 95\% confidence interval of the MIMO mutual information
for a few cases in Fig \ref{fig:GaussianCap} when the channel matrix
$\boldsymbol{H}$ is an i.i.d. Gaussian random matrix. The expected
value of the capacity is adopted from the mean of 3000 Monte Carlo
trials. Our theoretical predictions of the deviation bounds are compared
against the simulation results consisting of Monte Carlo trials. One
can see from the plots that our theoretical predictions are fairly
accurate even for small channel dimensions, which corroborates the
power of concentration of measure techniques. 

\begin{figure*}
\centering

\emph{}%
\begin{tabular}{cc}
\begin{tabular}{cc}
\includegraphics[width=0.45\textwidth]{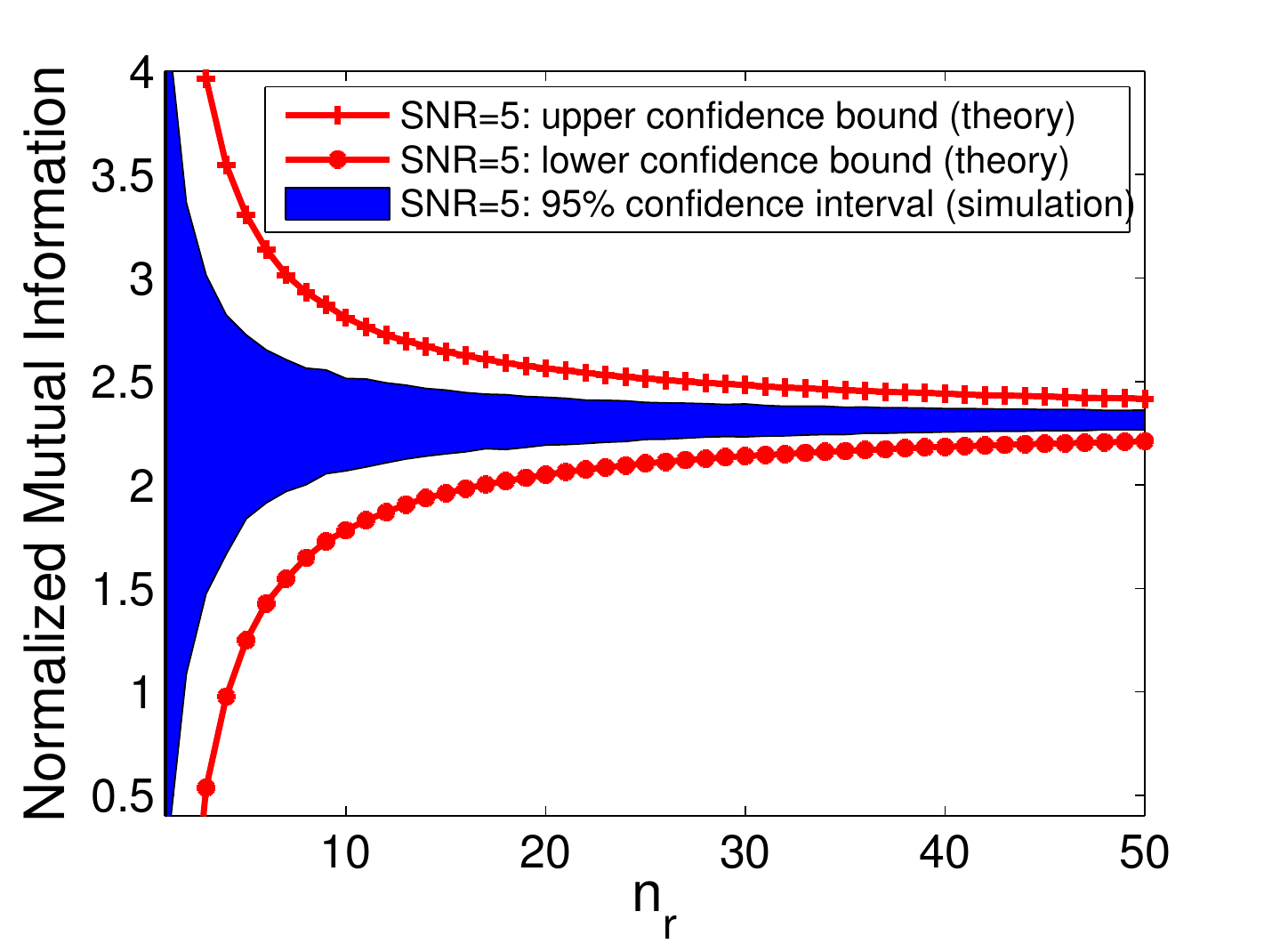} & \includegraphics[width=0.45\textwidth]{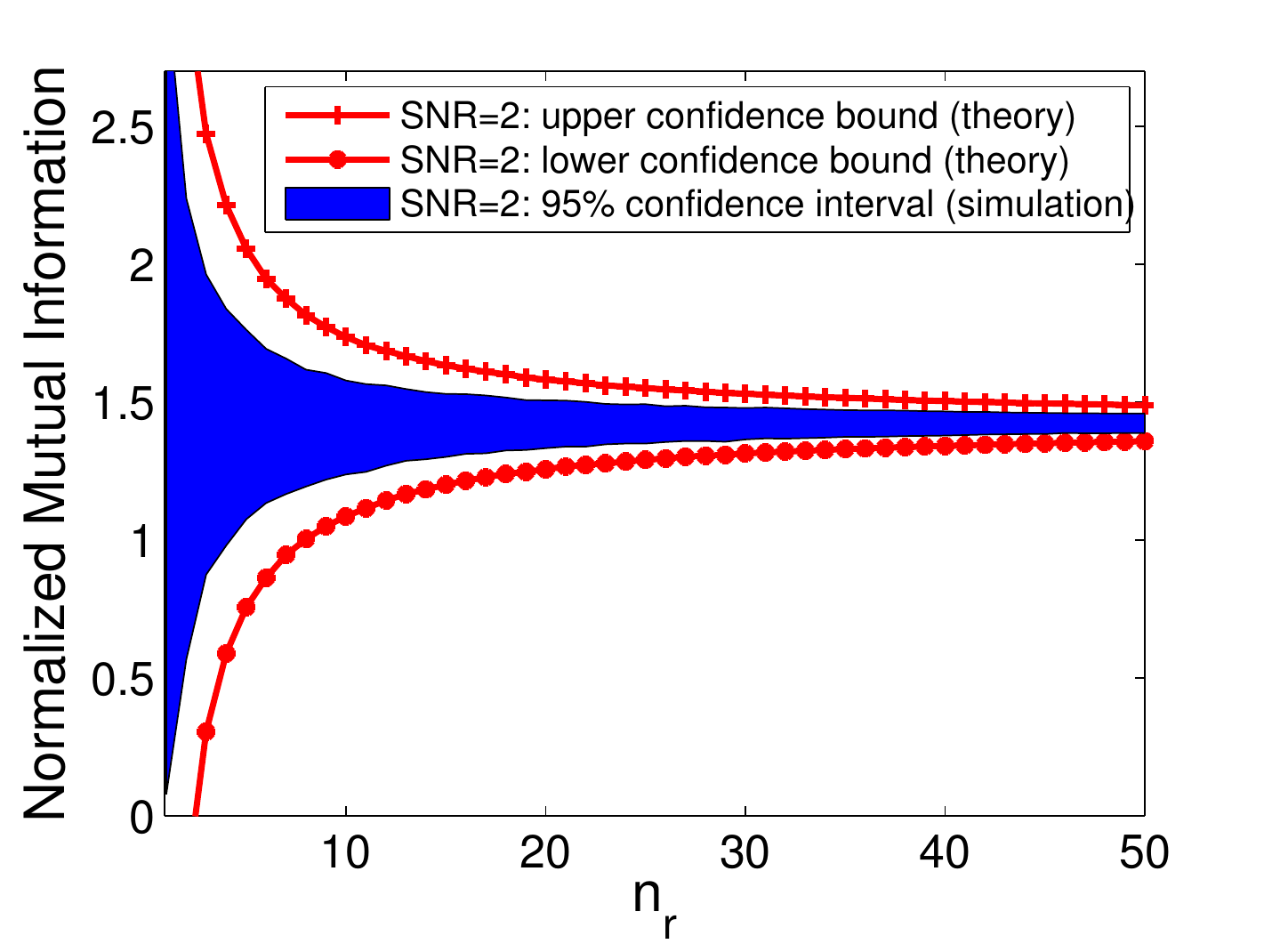}\tabularnewline
(a) & (b)\tabularnewline
\includegraphics[width=0.45\textwidth]{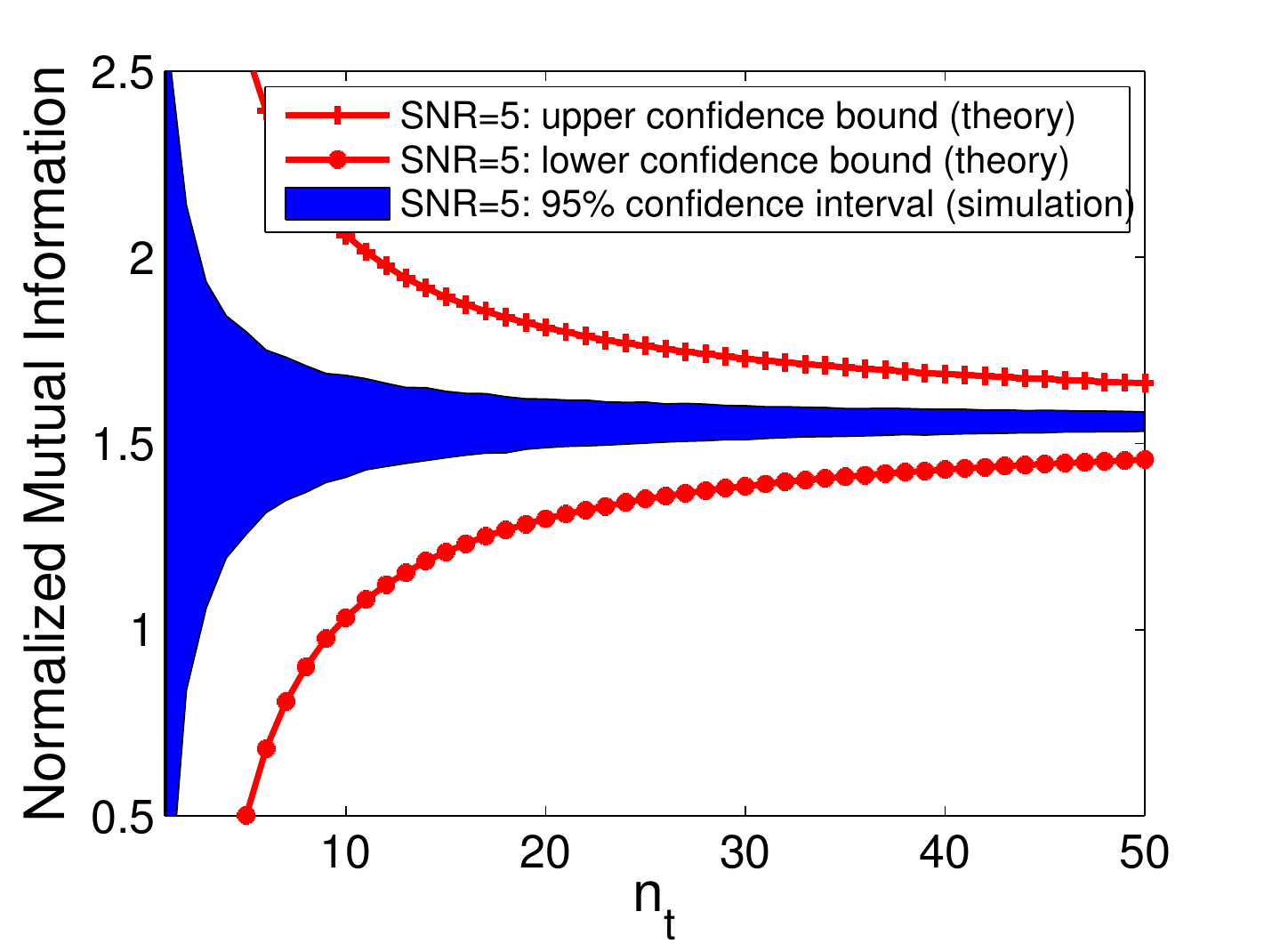} & \includegraphics[width=0.45\textwidth]{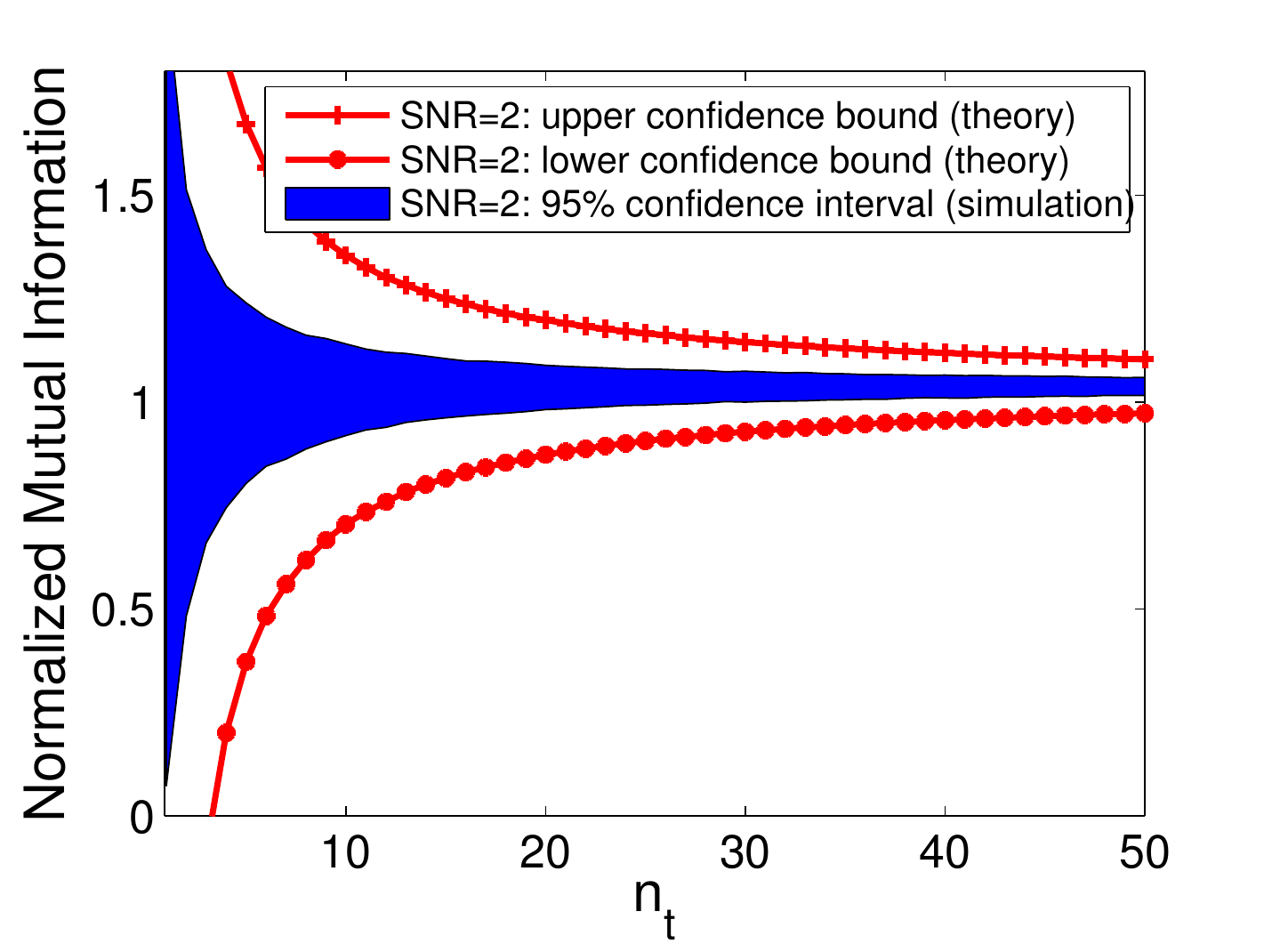}\tabularnewline
(c) & (d)\tabularnewline
\end{tabular} & \tabularnewline
\end{tabular}\caption{\label{fig:GaussianCap}$95\%$-confidence interval for MIMO mutual
information $\frac{C\left(\boldsymbol{H},\small\mathsf{SNR}\right)}{n_{r}}$
when $\boldsymbol{H}$ is drawn from i.i.d. standard Gaussian ensemble.
From Table \ref{tab:Summary-of-Preconstants-Cap}, the sizes of the
upper and lower confidence spans are given by $\frac{\beta\sqrt{\small\mathsf{SNR}}}{\sqrt{\alpha}n_{\mathrm{r}}}$
($\alpha>1$) and $\frac{\beta\sqrt{\small\mathsf{SNR}}}{n_{\mathrm{r}}}$
($\alpha<1$), respectively, where $\beta=\sqrt{\log\frac{8}{\left(1-5\%\right)}}$.
The expected value of $\frac{C\left(\boldsymbol{H},\small\mathsf{SNR}\right)}{n_{r}}$
is adopted from the mean of 3000 Monte Carlo trials. The theoretical
confidence bounds are compared against the simulation results (with
3000 Monte Carlo trials). The results are shown for (a) $\alpha=2$,
$\mbox{\ensuremath{\mathsf{SNR}}}=5$; (b) $\alpha=2$, $\mbox{\ensuremath{\mathsf{SNR}}}=2$;
(c) $\alpha=1/2$, $\mbox{\ensuremath{\mathsf{SNR}}}=5$; and (d)
$\alpha=1/2$, $\mbox{\ensuremath{\mathsf{SNR}}}=2$.}

\end{figure*}

At moderate-to-high SNR, the function $\mathcal{R}(\epsilon,n,m)$
admits a simple approximation. This observation leads to a concise
and informative characterization of the mutual information of the
MIMO channel, as stated in the following corollary.

\begin{corollary}\label{corollary:CapacityIIDChannel-highSNR}Suppose
that $\mbox{\ensuremath{\mathsf{SNR}}}>\frac{2}{e}\max\left\{ \alpha,1\right\} \cdot\max\left\{ e^{2}\alpha^{3},4\alpha\right\} $,
and set $n:=\min\{n_{\mathrm{t}},n_{\mathrm{r}}\}$.

(a) Consider the assumptions of Theorem \ref{thm:CapacityIIDChannel}(a).
For any $\beta>8\sqrt{\pi}$, 
\begin{equation}
\small\frac{C\left(\boldsymbol{H},\mbox{\ensuremath{\mathsf{SNR}}}\right)}{n_{\mathrm{r}}}\in\begin{cases}
\log\frac{\mbox{\small\ensuremath{\mathsf{SNR}}}}{e}+\left(\alpha-1\right)\log\left(\frac{\alpha}{\alpha-1}\right)+\\
\quad\left[\gamma_{\mathrm{bd}}^{\mathrm{lb},+}+\frac{\beta r_{\mathrm{bd}}^{\mathrm{lb},+}}{n_{\mathrm{r}}},\gamma_{\mathrm{bd}}^{\mathrm{ub},+}+\frac{\beta r_{\mathrm{bd}}^{\mathrm{ub},+}}{n_{\mathrm{r}}}\right], & \text{if }\alpha\geq1,\\
\\
\alpha\log\frac{\small\mathsf{SNR}}{\alpha e}+\left(1-\alpha\right)\log\left(\frac{1}{1-\alpha}\right)+\\
\quad\left[\gamma_{\mathrm{bd}}^{\mathrm{lb},-}+\frac{\beta r_{\mathrm{bd}}^{\mathrm{lb},-}}{n_{\mathrm{r}}},\gamma_{\mathrm{bd}}^{\mathrm{ub},-}+\frac{\beta r_{\mathrm{bd}}^{\mathrm{ub},-}}{n_{\mathrm{r}}}\right], & \text{if }\alpha<1,
\end{cases}\label{eq:BoundsFepsilonBounded-Capacity-High}
\end{equation}
with probability exceeding $1-8\exp\left(-\frac{\beta^{2}}{8}\right)$.

(b) Consider the assumptions of Theorem \ref{thm:CapacityIIDChannel}(b).
For any $\beta>0$, 
\begin{align}
\small\frac{C\left(\boldsymbol{H},\mbox{\ensuremath{\mathsf{SNR}}}\right)}{n_{\mathrm{r}}}\in\begin{cases}
\log\frac{\small\mathsf{SNR}}{e}+\left(\alpha-1\right)\log\left(\frac{\alpha}{\alpha-1}\right)+\\
\quad\left[\gamma_{\mathrm{ls}}^{\mathrm{lb},+}+\frac{\beta r_{\mathrm{ls}}^{\mathrm{lb},+}}{n_{\mathrm{r}}},\gamma_{\mathrm{ls}}^{\mathrm{ub},+}+\frac{\beta r_{\mathrm{ls}}^{\mathrm{ub},+}}{n_{\mathrm{r}}}\right],\quad & \text{if }\alpha\geq1,\\
\\
\alpha\log\frac{\small\mathsf{SNR}}{\alpha e}+\left(1-\alpha\right)\log\left(\frac{1}{1-\alpha}\right)+\\
\quad\left[\gamma_{\mathrm{ls}}^{\mathrm{lb},-}+\frac{\beta r_{\mathrm{ls}}^{\mathrm{lb},-}}{n_{\mathrm{r}}},\gamma_{\mathrm{ls}}^{\mathrm{ub},-}+\frac{\beta r_{\mathrm{ls}}^{\mathrm{ub},-}}{n_{\mathrm{r}}}\right], & \text{if }\alpha<1,
\end{cases}\label{eq:BoundsFepsilonSubgaussian-Capacity-High}
\end{align}
with probability exceeding $1-4\exp\left(-\beta^{2}\right)$.

(c) Consider the assumptions of Theorem \ref{thm:CapacityIIDChannel}(c).
Then, one has
\begin{equation}
\small\frac{C\left(\boldsymbol{H},\mbox{\ensuremath{\mathsf{SNR}}}\right)}{n_{\mathrm{r}}}\in\begin{cases}
\log\frac{\small\mathsf{SNR}}{e}+\left(\alpha-1\right)\log\left(\frac{\alpha}{\alpha-1}\right)+\log\sigma_{\mathrm{c}}+\\
\quad\left[\gamma_{\mathrm{ht}}^{\mathrm{lb},+}+\frac{r_{\mathrm{ht}}^{\mathrm{lb},+}}{n_{\mathrm{r}}},\gamma_{\mathrm{ht}}^{\mathrm{ub},+}+\frac{r_{\mathrm{ht}}^{\mathrm{ub},+}}{n_{\mathrm{r}}}\right],\quad\quad\text{if }\alpha\geq1,\\
\\
\alpha\log\frac{\small\mathsf{SNR}}{\alpha e}+\left(1-\alpha\right)\log\left(\frac{1}{1-\alpha}\right)+\log\sigma_{\mathrm{c}}+\\
\quad\left[\gamma_{\mathrm{ht}}^{\mathrm{lb},-}+\frac{r_{\mathrm{ht}}^{\mathrm{lb},-}}{n_{\mathrm{r}}},\gamma_{\mathrm{ht}}^{\mathrm{ub},-}+\frac{r_{\mathrm{ht}}^{\mathrm{ub},-}}{n_{\mathrm{r}}}\right],\quad\quad\text{if }\alpha<1,
\end{cases}\label{eq:HeavyTailDeviation-Ef-Capacity-High}
\end{equation}
with probability at least $1-10n^{-c(n)}$.

Here, the residual terms are formally provided in Table \ref{tab:Summary-of-Preconstants-Cap}.\end{corollary}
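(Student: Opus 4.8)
The plan is to obtain Corollary~\ref{corollary:CapacityIIDChannel-highSNR} directly from Theorem~\ref{thm:CapacityIIDChannel} by replacing the exact quantity $\frac{1}{n_{\mathrm{r}}}\log\mathcal{R}(\epsilon,\cdot,\cdot)$ that appears in every interval of \eqref{eq:BoundsFepsilonBounded-Capacity}--\eqref{eq:HeavyTailDeviation-Ef-Capacity} with an explicit closed-form approximation valid at moderate-to-high SNR. The key structural observation is that the random residuals $\tfrac{\beta r}{n_{\mathrm{r}}}$ and the probability guarantees are inherited \emph{verbatim} from the theorem; only the deterministic center of each interval is rewritten. Consequently the entire problem reduces to sandwiching $\frac{1}{n_{\mathrm{r}}}\log\mathcal{R}(\epsilon,n_{\mathrm{r}},n_{\mathrm{t}})$ between $-1+(\alpha-1)\log\tfrac{\alpha}{\alpha-1}$ plus the residuals $\gamma^{\mathrm{lb}}$ and $\gamma^{\mathrm{ub}}$ of Table~\ref{tab:Summary-of-Preconstants-Cap}. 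I treat $\alpha\geq1$ (so $n:=n_{\mathrm{r}}\leq n_{\mathrm{t}}=:m$); the case $\alpha<1$ is identical after interchanging $n_{\mathrm{r}}\leftrightarrow n_{\mathrm{t}}$ (the nonzero spectra of $\tfrac{1}{n_{\mathrm{t}}}\boldsymbol H\boldsymbol H^{*}$ and $\tfrac{1}{n_{\mathrm{t}}}\boldsymbol H^{*}\boldsymbol H$ agree) and multiplying through by $\alpha$, whereupon the $-1$ becomes $-\alpha$ and merges into $\alpha\log\tfrac{\mathsf{SNR}}{\alpha e}$.

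For the lower bound I would keep only the $i=n$ summand of \eqref{eq:ReExpression}: since every term is nonnegative, $\mathcal{R}(\epsilon,n,m)\geq \frac{m!}{m^{n}(m-n)!}$, and this bound is independent of $\epsilon$, which is precisely why $\gamma^{\mathrm{lb}}$ carries no SNR dependence. It then suffices to apply a non-asymptotic Stirling estimate to $\log m!-\log(m-n)!-n\log m$, whose leading part equals $(m-n)\log\tfrac{m}{m-n}-n = n\big[(\alpha-1)\log\tfrac{\alpha}{\alpha-1}-1\big]$ and whose explicit Stirling remainder supplies the $\mathcal{O}\!\big(\tfrac{\log n}{n}\big)$ contribution recorded as $\gamma_{\mathrm{bd}}^{\mathrm{lb},+}=\tfrac{\frac12\log n_{\mathrm{r}}-\log\frac{n_{\mathrm{t}}+1}{2\pi}}{n_{\mathrm{r}}}$.

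The upper bound is the substantive step. Writing $a_{i}$ for the $i$-th summand of \eqref{eq:ReExpression}, I would factor $\mathcal{R}(\epsilon,n,m)=\tfrac{m!}{m^{n}(m-n)!}\sum_{j\geq0}\tfrac{a_{n-j}}{a_{n}}$ with $\tfrac{a_{n-j}}{a_{n}}=\prod_{l=1}^{j}\tfrac{(n-l+1)\,\epsilon m}{l\,(m-n+l)}$, and bound each factor. Using $n-l+1\leq n$ together with $m-n+l\geq l$ (valid exactly because $\alpha\geq1$) gives $\tfrac{a_{n-j}}{a_{n}}\leq \tfrac{(n\epsilon m)^{j}}{(j!)^{2}}$, so the correction series is dominated by a modified-Bessel-type sum whose logarithm contributes an $\mathcal{O}(\sqrt{\epsilon\alpha})$ backoff uniformly in $\alpha\geq1$. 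Combined with the Stirling upper estimate on $\tfrac{1}{n}\log\tfrac{m!}{m^{n}(m-n)!}$ and $\epsilon=\tfrac{e}{2\,\mathsf{SNR}}$, this reproduces the $\tfrac{1.5\log(en_{\mathrm{r}})}{n_{\mathrm{r}}}+4\sqrt{\tfrac{2\alpha}{e\,\mathsf{SNR}}}\log\sqrt{\tfrac{e\,\mathsf{SNR}}{2\alpha}}$ structure of $\gamma_{\mathrm{bd}}^{\mathrm{ub},+}$ (the logarithmic factor reflecting a coarser estimate of the correction series than the bare Bessel asymptotics). The hypothesis $\mathsf{SNR}>\tfrac{2}{e}\max\{\alpha,1\}\max\{e^{2}\alpha^{3},4\alpha\}$ is what certifies that this backoff is genuinely small and that the convex surrogates used to produce \eqref{eq:BoundsFepsilonBounded-Capacity} remain in force.

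The main obstacle is making the upper bound uniform in $\alpha$ down to $\alpha=1$. A naive geometric estimate $\tfrac{a_{n-j}}{a_{n}}\leq \tfrac{1}{j!}\big(\tfrac{\epsilon\alpha n}{\alpha-1}\big)^{j}$ would only yield an $\mathcal{O}\!\big(\tfrac{\epsilon\alpha}{\alpha-1}\big)$ correction that diverges as $\alpha\downarrow1$, mirroring the genuine square-root pile-up of the Marchenko--Pastur spectrum of $\tfrac{1}{n_{\mathrm{t}}}\boldsymbol H\boldsymbol H^{*}$ at the origin; the $\sqrt{\cdot}$-type bound above is what tames this regime and explains the otherwise surprising $\sqrt{1/\mathsf{SNR}}$ (rather than $1/\mathsf{SNR}$) scaling of $\gamma^{\mathrm{ub}}$. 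Finally, part (c) follows by applying the same two-sided bounds to the truncated matrix $\tilde{\boldsymbol H}$: writing $\tilde{\boldsymbol H}=\sigma_{c}\boldsymbol G$ with $\boldsymbol G$ of unit variance rescales the effective SNR to $\sigma_{c}^{2}\,\mathsf{SNR}$ inside $\mathcal{R}(\cdot)$ via Lemma~\ref{lemma:ExpDetIplustMM} and produces the additive $\log\sigma_{c}$ term. Substituting the resulting two-sided bounds on $\tfrac{1}{n_{\mathrm{r}}}\log\mathcal{R}$ into Theorem~\ref{thm:CapacityIIDChannel} and collecting $\log\mathsf{SNR}-1=\log\tfrac{\mathsf{SNR}}{e}$ then delivers the stated intervals.
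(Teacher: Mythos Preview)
Your reduction is correct and your lower bound is exactly the paper's: drop all but the $i=n$ summand of \eqref{eq:ReExpression} and apply Stirling to $\log\frac{m!}{m^{n}(m-n)!}$. Your upper bound, however, takes a genuinely different route. The paper (via Lemma~\ref{Lemma:Bound-LogEAA}) bounds $\mathcal{R}(\epsilon,n,m)\leq (n+1)\max_{i}s_{\epsilon}(i)$, then locates the maximizing index $i_{\max}=\delta n$ by analyzing the ratio $s_{\epsilon}(i+1)/s_{\epsilon}(i)$, shows $\delta\in[\Theta(\epsilon),\sqrt{\alpha\epsilon}]$, and finishes with entropy and Stirling bounds; the $(n+1)$ prefactor, the entropy estimate on $\binom{n}{\delta n}$, and Stirling together produce the $\tfrac{1.5\log(en)}{n}$ term, while the $\delta\log\tfrac{1}{\delta}+\mathcal{H}(\delta)$ contributions at $\delta=\sqrt{\alpha\epsilon}$ yield the $4\sqrt{\alpha\epsilon}\log\tfrac{1}{\sqrt{\alpha\epsilon}}$ term. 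Your approach instead factors out $a_{n}$ and bounds the correction series directly by $\sum_{j\geq0}\tfrac{(n\epsilon m)^{j}}{(j!)^{2}}=I_{0}(2n\sqrt{\epsilon\alpha})\leq e^{2n\sqrt{\epsilon\alpha}}$, giving a $2\sqrt{\epsilon\alpha}$ correction with no logarithmic factor and no $(n+1)$ prefactor. This is cleaner and strictly sharper than the paper's stated $\gamma^{\mathrm{ub}}$; in particular your observation that $m-n+l\geq l$ is precisely what makes the bound uniform down to $\alpha=1$, avoiding the $\tfrac{1}{\alpha-1}$ blow-up of the naive geometric estimate. The trade-off is that the paper's max-term argument is more elementary and makes the location of the dominant contribution explicit, whereas your Bessel comparison is slicker but hides where the mass of the sum sits.
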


\begin{proof} See Appendix \ref{sec:Proof-of-Lemma-Bound-LogEAA}.
\end{proof}

\begin{remark}One can see that the magnitudes of all these extra
residual terms $\gamma_{\mathrm{bd}}^{\mathrm{ub},-}$, $\gamma_{\mathrm{bd}}^{\mathrm{lb},-}$,
$\gamma_{\mathrm{bd}}^{\mathrm{ub},+}$, $\gamma_{\mathrm{bd}}^{\mathrm{lb},+}$,
$\gamma_{\mathrm{ls}}^{\mathrm{ub},-}$, $\gamma_{\mathrm{ls}}^{\mathrm{lb},-}$,
$\gamma_{\mathrm{ls}}^{\mathrm{ub},+}$, and $\gamma_{\mathrm{ls}}^{\mathrm{lb},+}$
are no worse than the order
\begin{equation}
\mathcal{O}\left(\frac{\log n}{n}+\small\frac{\log\small\mathsf{SNR}}{\sqrt{\small\mathsf{SNR}}}\right),\label{eq:ResidualDev}
\end{equation}
which vanish as SNR and channel dimensions grow. \end{remark}

In fact, Corollary \ref{corollary:CapacityIIDChannel-highSNR} is
established based on the simple approximation of $\mathcal{R}\left(\epsilon,n,m\right):=\mathbb{E}\left[\det\left(\epsilon\boldsymbol{I}+\frac{1}{m}\boldsymbol{M}\boldsymbol{M}^{*}\right)\right]$.
At moderate-to-high SNR, this function can be approximated reasonably
well through much simpler expressions, as stated in the following
lemma. 

\begin{lem}\label{Lemma:Bound-LogEAA}Consider a random matrix $\boldsymbol{M}=\left(\boldsymbol{M}_{ij}\right)_{1\leq i\leq n,1\leq j\leq m}$
such that $\alpha:=m/n\geq1$. Suppose that $\boldsymbol{M}_{ij}$'s
are independent satisfying $\mathbb{E}\left[\boldsymbol{M}_{ij}\right]=0$
and $\mathbb{E}\left[\left|\boldsymbol{M}_{ij}\right|^{2}\right]=1$.
If $\frac{4}{n}<\epsilon<\min\left\{ \frac{1}{e^{2}\alpha^{3}},\frac{1}{4\alpha}\right\} $,
then one has
\begin{align}
 & \frac{1}{n}\log\mathbb{E}\left[\det\left(\epsilon\boldsymbol{I}+\frac{1}{m}\boldsymbol{M}\boldsymbol{M}^{*}\right)\right]\nonumber \\
 & \quad=\left(\alpha-1\right)\log\left(\frac{\alpha}{\alpha-1}\right)-1+r_{\mathrm{E}},\label{eq:BoundsLemmaElogDetAA-1}
\end{align}
where the residual term $r_{\mathrm{E}}$ satisfies
\[
r_{\mathrm{E}}\in\small\left[\frac{\frac{1}{2}\log n-\log\frac{1}{2\pi}\left(m+1\right)}{n},\frac{1.5\log\left(en\right)}{n}+2\sqrt{\alpha\epsilon}\log\frac{1}{\alpha\epsilon}\right].
\]
\end{lem}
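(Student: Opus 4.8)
The plan is to start from the exact, distribution-free identity furnished by Lemma \ref{lemma:ExpDetIplustMM}, which gives
\[
\mathbb{E}\left[\det\left(\epsilon\boldsymbol{I}+\tfrac1m\boldsymbol{M}\boldsymbol{M}^{*}\right)\right]=\sum_{i=0}^{n}T_{i},\qquad T_{i}:=\binom{n}{i}\frac{\epsilon^{n-i}m^{-i}m!}{(m-i)!},
\]
and to estimate $\frac1n\log\sum_{i}T_{i}$ by the elementary sandwich $\max_{i}T_{i}\le\sum_{i}T_{i}\le(n+1)\max_{i}T_{i}$. In this way the lower bound on $r_{\mathrm{E}}$ only requires exhibiting one sufficiently large term, while the upper bound requires locating and bounding the maximal term; the $\frac{\log(n+1)}{n}$ loss from the crude count will be absorbed into the stated $\frac{1.5\log(en)}{n}$ residual.

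For the lower bound I would retain the single term $i=n$, namely $T_{n}=\frac{m!}{(m-n)!\,m^{n}}=\prod_{j=0}^{n-1}\bigl(1-\tfrac jm\bigr)$. Since $\log$ is increasing, $\log\tfrac{m!}{(m-n)!}=\sum_{k=m-n+1}^{m}\log k\ge\int_{m-n}^{m}\log x\,\mathrm{d}x$, and a short computation already yields $\frac1n\log T_{n}\ge(\alpha-1)\log\frac{\alpha}{\alpha-1}-1$, i.e. a nonnegative remainder. Tracking instead the explicit $\sqrt{2\pi k}$ Stirling prefactors in $m!$ and $(m-n)!$ (one of which scales like $m+1$) produces the stated safe estimate $r_{\mathrm{E}}\ge\frac{\frac12\log n-\log\frac{m+1}{2\pi}}{n}=\mathcal{O}\!\left(\frac{\log n}{n}\right)$.

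The crux is the upper bound on $\max_{i}T_{i}$. I would introduce the continuous profile
\[
\phi(x):=-x\log x-(1-x)\log(1-x)+(1-x)\log\epsilon-x-(\alpha-x)\log\Bigl(1-\tfrac{x}{\alpha}\Bigr),
\]
the Stirling profile of $\frac1n\log T_{xn}$, which satisfies $\phi(1)=(\alpha-1)\log\frac{\alpha}{\alpha-1}-1$, exactly the claimed leading term. Differentiation gives $\phi'(x)=\log\frac{(1-x)(1-x/\alpha)}{x\epsilon}$ and $\phi''(x)<0$, so $\phi$ is strictly concave with a unique maximizer $x^{*}$ solving $(1-x^{*})(1-x^{*}/\alpha)=x^{*}\epsilon$. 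Writing $\delta^{*}:=1-x^{*}$, the defining equation gives $\delta^{*}(\alpha-1+\delta^{*})=\alpha\epsilon(1-\delta^{*})\le\alpha\epsilon$, hence the key estimate $\delta^{*}\le\sqrt{\alpha\epsilon}$. The constraint $\epsilon<\tfrac1{e^{2}\alpha^{3}}$ forces $\alpha\epsilon<e^{-2}$, so $\delta^{*}<1/e$, while $\epsilon>4/n$ keeps $x^{*}n$ bounded away from $n$, making the maximal term interior and Stirling legitimate. It then remains to bound $\phi(x^{*})-\phi(1)$ term by term: using $-(1-\delta)\log(1-\delta)\le\delta$, the monotonicity of $\delta\log\frac1\delta$ on $(0,1/e)$, and the concavity inequality $-(a+\delta)\log(a+\delta)+a\log a\le\delta\log\frac1\delta$ with $a:=\alpha-1$, each positive contribution is at most a constant multiple of $\sqrt{\alpha\epsilon}\log\frac1{\alpha\epsilon}$, giving $\phi(x^{*})-\phi(1)\le2\sqrt{\alpha\epsilon}\log\frac1{\alpha\epsilon}$. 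Combining $\frac1n\log\max_{i}T_{i}\le\phi(x^{*})+\mathcal{O}\!\left(\frac{\log n}{n}\right)$ with the $(n+1)$ factor then yields the upper residual $\frac{1.5\log(en)}{n}+2\sqrt{\alpha\epsilon}\log\frac1{\alpha\epsilon}$.

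The main obstacle is the uniform control of $\phi(x^{*})-\phi(1)$ as $\alpha\downarrow1$. There the maximizer $\delta^{*}\sim\sqrt\epsilon$ slides into the logarithmic singularity of $-(1-x)\log(1-x)$ at $x=1$, and simultaneously the leading term $(\alpha-1)\log\frac{\alpha}{\alpha-1}$ degenerates as a $0\cdot\infty$ form, so one must verify that these near-singular pieces combine to the bounded quantity $2\sqrt{\alpha\epsilon}\log\frac1{\alpha\epsilon}$. The uniform bound $\delta^{*}\le\sqrt{\alpha\epsilon}$, together with the hypotheses (which also give $\alpha\epsilon<e^{-2}$ and $\log\alpha\lesssim\log\frac1{\alpha\epsilon}$, the latter needed to absorb the stray $\delta^{*}\log\alpha$ contribution from the final summand of $\phi$), is precisely what allows the term-by-term estimate to go through without relying on the delicate cancellation that actually occurs.
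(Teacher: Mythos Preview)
Your proposal is correct and follows essentially the same route as the paper's proof: both start from the distribution-free identity of Lemma~\ref{lemma:ExpDetIplustMM}, use the sandwich $\max_i T_i\le\sum_i T_i\le(n+1)\max_i T_i$, obtain the lower bound from the single term $T_n=\frac{m!}{(m-n)!\,m^n}$ via Stirling, and for the upper bound locate the maximizing index and show its distance to the endpoint is at most $\sqrt{\alpha\epsilon}$. The only cosmetic difference is that the paper first reindexes $i\mapsto n-i$ so the maximizer sits near $0$ (with parameter $\delta=i_{\max}/n$ small) and finds it by solving the ratio recursion $s_\epsilon(i+1)/s_\epsilon(i)=1$ as an explicit quadratic, whereas you keep the original index and work with the continuous Stirling profile $\phi$ and its stationarity condition; both yield the same key estimate $\delta\le\sqrt{\alpha\epsilon}$ and the same residual bookkeeping.
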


\begin{proof}Appendix \ref{sec:Proof-of-Lemma-Bound-LogEAA}.\end{proof}

Combining Theorem \ref{thm:CapacityIIDChannel} and Lemma \ref{Lemma:Bound-LogEAA}
immediately establishes Corollary \ref{corollary:CapacityIIDChannel-highSNR}.

\subsubsection{Implications of Theorem \ref{thm:CapacityIIDChannel} and Corollary
\ref{corollary:CapacityIIDChannel-highSNR}}

Some implications of Corollary \ref{corollary:CapacityIIDChannel-highSNR}
are listed as follows.
\begin{enumerate}
\item When we set $\beta=\Theta\left(\sqrt{\frac{\log n}{n}}\right)$, Corollary
\ref{corollary:CapacityIIDChannel-highSNR} implies that in the presence
of moderate-to-high SNR and moderate-to-large channel dimensionality
$n=\min\left\{ n_{\mathrm{t}},n_{\mathrm{r}}\right\} $, the information
rate per receive antenna behaves like
\begin{align}
\footnotesize\frac{C\left(\boldsymbol{H},\mbox{\ensuremath{\mathsf{SNR}}}\right)}{n_{\mathrm{r}}} & =\begin{cases}
\log\frac{\small\mathsf{SNR}}{e}+\left(\alpha-1\right)\log\left(\frac{\alpha}{\alpha-1}\right)+\\
\text{ }\mathcal{O}\left(\frac{\small\mathsf{SNR}+\sqrt{\log n}}{n}+\frac{\log\small\mathsf{SNR}}{\sqrt{\small\mathsf{SNR}}}\right),\text{ }\text{if }\alpha\geq1\\
\\
\alpha\log\frac{\small\mathsf{SNR}}{\alpha e}+\left(1-\alpha\right)\log\left(\frac{1}{1-\alpha}\right)+\\
\text{ }\mathcal{O}\left(\frac{\small\mathsf{SNR}+\sqrt{\log n}}{n}+\frac{\log\small\mathsf{SNR}}{\sqrt{\small\mathsf{SNR}}}\right),\text{ }\text{if }\alpha<1
\end{cases}\label{eq:CapExpression_Order}
\end{align}
with high probability. That said, for a fairly large regime, the mutual
information falls within a narrow range. In fact, the mutual information
depends almost only on $\mbox{\ensuremath{\mathsf{SNR}}}$ and the
ratio $\alpha=\frac{n_{\mathrm{t}}}{n_{\mathrm{r}}}$, and is independent
of the precise number of antennas $n_{\mathrm{r}}$ and $n_{\mathrm{t}}$,
except for some vanishingly small residual terms. The first-order
term of the expression (\ref{eq:CapExpression_Order}) coincides with
existing \emph{asymptotic} results (e.g. \cite{lozano2002capacity}).
This in turns validates our concentration of measure approach.
\item Theorem \ref{thm:CapacityIIDChannel} indicates that the size of the
typical confidence interval for $\frac{C\left(\boldsymbol{H},\small\mathsf{SNR}\right)}{n_{\mathrm{r}}}$
decays with $n$ at a rate not exceeding $\mathcal{O}\left(\frac{1}{n}\right)$
(by picking $\beta=\Theta\left(1\right)$) under measures of bounded
support or measures satisfying the LSI. Note that it has been established
(e.g. \cite[Theorem 2 and Proposition 2]{HacKhoLouNajPas2008}) that
the asymptotic standard deviation of $\frac{C\left(\boldsymbol{H},\small\mathsf{SNR}\right)}{n_{\mathrm{r}}}$
scales as $\Theta\left(\frac{1}{n}\right)$. This reveals that the
concentration of measure approach we adopt is able to obtain the confidence
interval with optimal size. Recent works by Li, Mckay and Chen \cite{li2013distribution,chen2012coulumb}
derive polynomial expansion for each of the moments (e.g. mean, standard
deviation, and skewness) of MIMO mutual information, which can also
be used to approximate the distributions. In contrast, our results
are able to characterize any concentration interval in a simpler and
more informative manner. 
\item Define the\emph{ power offset }for any channel dimension as follows
\begin{equation}
\mathcal{L}\left(\boldsymbol{H},\mbox{\ensuremath{\mathsf{SNR}}}\right):=\log\mbox{\ensuremath{\mathsf{SNR}}}-\frac{C\left(\boldsymbol{H},\small\mathsf{SNR}\right)}{\min\left\{ n_{\mathrm{r}},n_{\mathrm{t}}\right\} }.\label{eq:DefnPowerOffset}
\end{equation}
One can see that this converges to the notion of \emph{high-SNR power
offset} investigated by Lozano et. al. \cite{LozanoTuilinoVerdu2005,tulino2005impact})
in the limiting regime (i.e. when $\mathrm{SNR}\rightarrow\infty$).
Our results reveal the fluctuation of $\mathcal{L}\left(\boldsymbol{H},\mbox{\ensuremath{\mathsf{SNR}}}\right)$
such that
\begin{equation}
\small\mathcal{L}\left(\boldsymbol{H},\mbox{\ensuremath{\mathsf{SNR}}}\right)=\begin{cases}
\left(\alpha-1\right)\log\left(\frac{\alpha-1}{\alpha}\right)+1+\\
\text{ }\text{ }\mathcal{O}\left(\frac{\small\mathsf{SNR}+\sqrt{\log n}}{n}+\frac{\log\small\mathsf{SNR}}{\sqrt{\small\mathsf{SNR}}}\right),\text{ if }\alpha\geq1\\
\\
\frac{1-\alpha}{\alpha}\log\left(1-\alpha\right)+\log\left(e\alpha\right)+\\
\text{ }\text{ }\mathcal{O}\left(\frac{\small\mathsf{SNR}+\sqrt{\log n}}{n}+\frac{\log\small\mathsf{SNR}}{\sqrt{\small\mathsf{SNR}}}\right),\text{ if }\alpha<1
\end{cases}\label{eq:PowerOffset}
\end{equation}
with high probability. The first-order term of $\mathcal{L}\left(\boldsymbol{H},\mbox{\ensuremath{\mathsf{SNR}}}\right)$
agrees with the known results on asymptotic power offset (e.g. \cite[Proposition 2]{LozanoTuilinoVerdu2005}\cite[Equation (84)]{tulino2005impact})
when $n\rightarrow\infty$ and $\mbox{\ensuremath{\mathsf{SNR}}}\rightarrow\infty$.
Our results distinguish from existing results in the sense that we
can accurately accommodate a much larger regime beyond the one with
asymptotically large SNR and channel dimensions.
\item The same mutual information and power offset values are shared among
a fairly general class of distributions even in the non-asymptotic
regime. Moreover, heavy-tailed distributions lead to well-controlled
information rate and power-offset as well, although the spectral measure
concentration is often less sharp than for sub-Gaussian distributions
(which ensure exponentially sharp concentration).
\item Finally, we note that Corollary \ref{corollary:CapacityIIDChannel-highSNR}
not only characterizes the order of the residual term, but also provides
reasonably accurate characterization of a narrow confidence interval
such that the mutual information and the power offset lies within
it, with high probability. All pre-constants are explicitly provided,
resulting in a full assessment of the mutual information and the power
offset. Our results do not rely on careful choice of growing matrix
sequences, which are typically required in those works based on asymptotic
laws.
\end{enumerate}

\subsubsection{Connection to the General Framework}

We now illustrate how the proof of Theorem \ref{thm:CapacityIIDChannel}
follows from the general framework presented in Section \ref{sec:General-Template}.
\begin{enumerate}
\item Note that the mutual information can be alternatively expressed as
\begin{align*}
 & \frac{1}{n_{\mathrm{r}}}C\left(\boldsymbol{H},\mbox{\ensuremath{\mathsf{SNR}}}\right)\\
 & \quad=\frac{1}{n_{\mathrm{r}}}\sum_{i=1}^{\min\left\{ n_{\mathrm{r}},n_{\mathrm{t}}\right\} }\log\left(1+\mbox{\ensuremath{\mathsf{SNR}}}\cdot\lambda_{i}\left(\frac{1}{n_{\mathrm{t}}}\boldsymbol{H}\boldsymbol{H}^{*}\right)\right)\\
 & \quad:=\frac{1}{n_{\mathrm{r}}}\sum_{i=1}^{\min\left\{ n_{\mathrm{r}},n_{\mathrm{t}}\right\} }\left\{ \log\mbox{\ensuremath{\mathsf{SNR}}}+f\left(\lambda_{i}\left(\frac{1}{n_{\mathrm{t}}}\boldsymbol{H}\boldsymbol{H}^{*}\right)\right)\right\} ,
\end{align*}
where $f\left(x\right):=\log\left(\frac{1}{\small\mathsf{SNR}}+x\right)$
on the domain $x\geq0$. As a result, the function $g(x):=f(x^{2})=\log\left(\frac{1}{\small\mathsf{SNR}}+x^{2}\right)$
has Lipschitz norm $\left\Vert g\right\Vert _{\mathcal{L}}\leq\small\mathsf{SNR}^{1/2}$.
\item For measures satisfying the LSI, Theorem \ref{thm:CapacityIIDChannel}(b)
immediately follows from Theorem \ref{thm:Deviation-Ef}(b). For measures
of bounded support or heavy-tailed measures, one can introduce the
following auxiliary function with respect to some small $\epsilon$:
\[
g_{\epsilon}(x):=\begin{cases}
\log\left(\epsilon+x^{2}\right), & \text{if }x\geq\sqrt{\epsilon},\\
\frac{1}{\sqrt{\epsilon}}\left(x-\sqrt{\epsilon}\right)+\log\left(2\epsilon\right),\quad & \text{if }0\leq x<\sqrt{\epsilon},
\end{cases}
\]
which is obtained by convexifying $g(x)$. If we set $f_{\epsilon}(x):=g_{\epsilon}(\sqrt{x})$,
then one can easily check that
\[
f_{\small\mathsf{SNR}^{-1}}\left(x\right)\leq f\left(x\right)\leq f_{\frac{e}{2}\small\mathsf{SNR}^{-1}}\left(x\right).
\]
Applying Theorem \ref{thm:Deviation-Ef}(a) allows us to derive the
concentration on 
\[
\frac{1}{n_{\mathrm{r}}}\sum_{i=1}^{\min\left\{ n_{\mathrm{r}},n_{\mathrm{t}}\right\} }f_{\small\mathsf{SNR}^{-1}}\left(\lambda_{i}\left(\frac{1}{n_{\mathrm{t}}}\boldsymbol{H}\boldsymbol{H}^{*}\right)\right)
\]
and
\[
\frac{1}{n_{\mathrm{r}}}\sum_{i=1}^{\min\left\{ n_{\mathrm{r}},n_{\mathrm{t}}\right\} }f_{\frac{e}{2}\small\mathsf{SNR}^{-1}}\left(\lambda_{i}\left(\frac{1}{n_{\mathrm{t}}}\boldsymbol{H}\boldsymbol{H}^{*}\right)\right),
\]
which in turn provide tight lower and upper bounds for $\frac{1}{n_{\mathrm{r}}}\sum_{i=1}^{\min\left\{ n_{\mathrm{r}},n_{\mathrm{t}}\right\} }f\left(\lambda_{i}\left(\frac{1}{n_{\mathrm{t}}}\boldsymbol{H}\boldsymbol{H}^{*}\right)\right)$.
\item Finally, the mean value $\mathbb{E}\left[\mathrm{det}\left(\boldsymbol{I}+\mbox{\ensuremath{\mathsf{SNR}}}\cdot\frac{1}{n_{\mathrm{t}}}\boldsymbol{H}\boldsymbol{H}^{*}\right)\right]$
admits a closed-form expression independent of precise entry distributions,
as derived in Lemma \ref{lemma:ExpDetIplustMM}.
\end{enumerate}

\subsection{MMSE Estimation in Random Vector Channels\label{sec:Applications:-MMSE}}

Next, we show that our analysis framework can be applied in Bayesian
inference problems that involve estimation of\emph{ }uncorrelated
signals components. Consider a simple linear vector channel model
\[
\boldsymbol{y}=\boldsymbol{H}\boldsymbol{x}+\boldsymbol{z},
\]
where $\boldsymbol{H}\in\mathbb{C}^{n\times p}$ is a known matrix,
$\boldsymbol{x}\in\mathbb{C}^{p}$ denotes an input signal with zero
mean and covariance matrix $P\boldsymbol{I}_{p}$, and $\boldsymbol{z}$
represents a zero-mean noise uncorrelated with $\boldsymbol{x}$,
which has covariance $\sigma^{2}\boldsymbol{I}_{n}$. Here, we denote
by $p$ and $n$ the input dimension and the sample size, respectively,
which agrees with the conventional notation adopted in the statistics
community. In this subsection, we assume that
\begin{equation}
\alpha:=\frac{p}{n}\leq1,
\end{equation}
i.e. the sample size exceeds the dimension of the input signal. The
goal is to estimate the channel input $\boldsymbol{x}$ from the channel
output $\boldsymbol{y}$ with minimum $\ell_{2}$ distortion.

The MMSE estimate of $\boldsymbol{x}$ given $\boldsymbol{y}$ can
be written as \cite{KaiSayHas2000}
\begin{align}
\hat{\boldsymbol{x}} & =\mathbb{E}\left[\boldsymbol{x}\mid\boldsymbol{y}\right]=\mathbb{E}\left[\boldsymbol{x}\boldsymbol{y}^{*}\right]\left(\mathbb{E}\left[\boldsymbol{y}\boldsymbol{y}^{*}\right]\right)^{-1}\boldsymbol{y}\nonumber \\
 & =P\boldsymbol{H}^{*}\left(\sigma^{2}\boldsymbol{I}_{n}+P\boldsymbol{H}\boldsymbol{H}^{*}\right)^{-1}\boldsymbol{y},\label{eq:MMSEestimator}
\end{align}
and the resulting MMSE is given by
\begin{align}
\small\mbox{\ensuremath{\mathsf{MMSE}}}\left(\boldsymbol{H},\mbox{\ensuremath{\mathsf{SNR}}}\right) & =\small\mathrm{tr}\left(P\boldsymbol{I}_{p}-P^{2}\boldsymbol{H}^{*}\left(\sigma^{2}\boldsymbol{I}_{n}+P\boldsymbol{H}\boldsymbol{H}^{*}\right)^{-1}\boldsymbol{H}\right).\label{eq:MMSE_defn}
\end{align}
The normalized MMSE (NMMSE) can then be written as
\begin{align}
 & \mbox{\ensuremath{\mathsf{NMMSE}}}\left(\boldsymbol{H},\mbox{\ensuremath{\mathsf{SNR}}}\right):=\frac{\mbox{\ensuremath{\mathsf{MMSE}}}\left(\boldsymbol{H},\small\mathsf{SNR}\right)}{\mathbb{E}\left[\left\Vert \boldsymbol{x}\right\Vert ^{2}\right]}\nonumber \\
 & \quad=\mathrm{tr}\left(\boldsymbol{I}_{p}-\boldsymbol{H}^{*}\left(\frac{1}{\small\mathsf{SNR}}\boldsymbol{I}_{n}+\boldsymbol{H}\boldsymbol{H}^{*}\right)^{-1}\boldsymbol{H}\right),
\end{align}
where
\[
\mbox{\ensuremath{\mathsf{SNR}}}:=\frac{P}{\sigma^{2}}.
\]

Using the concentration of measure technique, we can evaluate $\mbox{\ensuremath{\mathsf{NMMSE}}}\left(\boldsymbol{H}\right)$
to within a narrow interval with high probability, as stated in the
following theorem. 

\begin{theorem}\label{thm:MMSE-iid} Suppose that $\boldsymbol{H}=\boldsymbol{A}\boldsymbol{M}$,
where $\boldsymbol{A}\in\mathbb{C}^{m\times n}$ is a deterministic
matrix for some integer $m\geq p$, and $\boldsymbol{M}\in\mathbb{C}^{n\times p}$
is a random matrix such that $\boldsymbol{M}_{ij}$'s are independent
random variables satisfying $\mathbb{E}\left[\boldsymbol{M}_{ij}\right]=0$
and $\mathbb{E}[\left|\boldsymbol{M}_{ij}\right|^{2}]=\frac{1}{p}$. 

(a) If $\sqrt{p}\boldsymbol{M}_{ij}$'s are bounded by $D$, then
for any $\beta>8\sqrt{\pi}$,
\begin{align}
\small\frac{\mbox{\ensuremath{\mathsf{NMMSE}}}\left(\boldsymbol{H},\small\mathsf{SNR}\right)}{p}\in & \small\left[\mathcal{M}\left(\frac{8}{9\small\mathsf{SNR}},\boldsymbol{H}\right)+\frac{\beta\tau_{\mathrm{bd}}^{\mathrm{lb}}}{p},\right.\nonumber \\
 & \small\left.\quad\mathcal{M}\left(\frac{9}{8\small\mathsf{SNR}},\boldsymbol{H}\right)+\frac{\beta\tau_{\mathrm{bd}}^{\mathrm{ub}}}{p}\right]
\end{align}
with probability exceeding $1-8\exp\left(-\frac{\beta^{2}}{16}\right)$.

(b) If $\sqrt{p}\boldsymbol{M}_{ij}$'s satisfy the LSI with respect
to a uniform constant $c_{\mathrm{ls}}$, then for any $\beta>0$,
\begin{equation}
\small\frac{\mbox{\ensuremath{\mathsf{NMMSE}}}\left(\boldsymbol{H},\small\mathsf{SNR}\right)}{p}\in\mathcal{M}\left(\frac{1}{\small\mathsf{SNR}},\boldsymbol{H}\right)+\frac{\beta}{p}\left[\tau_{\mathrm{ls}}^{\mathrm{lb}},\tau_{\mathrm{ls}}^{\mathrm{ub}}\right]
\end{equation}
with probability exceeding $1-4\exp\left(-\frac{\beta^{2}}{2}\right)$.

(c) Suppose that $\sqrt{p}\boldsymbol{M}_{ij}$'s are independently
drawn from either sub-exponential distributions or heavy-tailed distributions
and that the distributions are symmetric about 0. Let $\tau_{c}$
be defined as in (\ref{eq:TruncateMProb}) with respect to $\sqrt{p}\boldsymbol{M}_{ij}$'s
for some sequence $c(n)$. Then
\begin{align}
\small\frac{\mbox{\ensuremath{\mathsf{NMMSE}}}\left(\boldsymbol{H},\small\mathsf{SNR}\right)}{p}\in\small & \left[\small\mathcal{M}\left(\frac{8}{9\small\mathsf{SNR}},\boldsymbol{A}\tilde{\boldsymbol{M}}\right)+\frac{\tau_{\mathrm{ht}}^{\mathrm{lb}}}{p},\right.\nonumber \\
 & \small\left.\quad\small\mathcal{M}\left(\frac{9}{8\small\mathsf{SNR}},\boldsymbol{A}\tilde{\boldsymbol{M}}\right)+\frac{\tau_{\mathrm{ht}}^{\mathrm{ub}}}{p}\right]
\end{align}
with probability exceeding $1-\frac{10}{p^{c(n)}}$, where $\tilde{\boldsymbol{M}}$
is defined such that $\tilde{\boldsymbol{M}}_{ij}:=\boldsymbol{M}_{ij}{\bf 1}{}_{\left\{ \sqrt{p}\left|\boldsymbol{M}_{ij}\right|<\tau_{c}\right\} }$. 

Here, the function $\mathcal{M}\left(\epsilon,\boldsymbol{H}\right)$
is defined as
\begin{equation}
\mathcal{M}\left(\delta,\boldsymbol{H}\right):=\frac{1}{p}\mathbb{E}\left[\mathrm{tr}\left(\left(\delta\boldsymbol{I}+\boldsymbol{H}^{*}\boldsymbol{H}\right)^{-1}\right)\right],\label{eq:MeExpression}
\end{equation}
and the residual terms are formally defined in Table \ref{tab:Summary-of-Preconstants-MMSE}.\end{theorem}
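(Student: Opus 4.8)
The plan is to follow the general template of Section \ref{sub:Template}: first rewrite the normalized MMSE as a linear spectral statistic of $\boldsymbol{H}^*\boldsymbol{H}$, compute the associated Lipschitz constant, and then invoke Proposition \ref{thm:GeneralTemplate}, inserting a convexification step for the bounded and heavy-tailed cases. Writing $\delta:=1/\mathsf{SNR}$ and applying the push-through identity $\boldsymbol{H}^*(\delta\boldsymbol{I}+\boldsymbol{H}\boldsymbol{H}^*)^{-1}=(\delta\boldsymbol{I}+\boldsymbol{H}^*\boldsymbol{H})^{-1}\boldsymbol{H}^*$ gives
\[
\boldsymbol{I}_p-\boldsymbol{H}^*\left(\delta\boldsymbol{I}+\boldsymbol{H}\boldsymbol{H}^*\right)^{-1}\boldsymbol{H}=\delta\left(\delta\boldsymbol{I}_p+\boldsymbol{H}^*\boldsymbol{H}\right)^{-1},
\]
so that $\tfrac1p\mathsf{NMMSE}(\boldsymbol{H},\mathsf{SNR})=\tfrac1p\sum_{i=1}^{p}f\!\left(\lambda_i(\boldsymbol{H}^*\boldsymbol{H})\right)$ with $f(\lambda):=\tfrac{\delta}{\delta+\lambda}$. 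Since $\boldsymbol{H}^*\boldsymbol{H}=\boldsymbol{M}^*(\boldsymbol{A}^*\boldsymbol{A})\boldsymbol{M}$ and the entries of $\sqrt{p}\,\boldsymbol{M}$ have unit second moment, this matches the template form with random matrix $\boldsymbol{M}':=\sqrt{p}\,\boldsymbol{M}^*$, with $\boldsymbol{R}\boldsymbol{R}^*:=\boldsymbol{A}^*\boldsymbol{A}$ (so $\rho=\|\boldsymbol{A}\|$), and with template dimension equal to $p$. Taking expectations, $\mathbb{E}[\tfrac1p\mathsf{NMMSE}]=\delta\,\mathcal{M}(\delta,\boldsymbol{H})$; unlike the $\mathbb{E}[\det]$ arising in the capacity example, this Stieltjes-type functional has no distribution-free closed form, which is precisely why the bounds are phrased through $\mathcal{M}$ rather than fully explicitly.

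Next I compute the Lipschitz data of $g(x):=f(x^2)=\tfrac{\delta}{\delta+x^2}$. Its derivative $g'(x)=\tfrac{-2\delta x}{(\delta+x^2)^2}$ is maximized in magnitude at $x=\sqrt{\delta/3}$, giving $\|g\|_{\mathcal{L}}=\Theta(\sqrt{\mathsf{SNR}})$, while $g''$ changes sign exactly there: $g$ is concave on $[0,\sqrt{\delta/3}]$ and convex thereafter. A useful observation is that $f$ is bounded in $[0,1]$, so its mean is directly finite and one need \emph{not} pass through the exponential-mean device of Theorem \ref{thm:Deviation-Ef}; Proposition \ref{thm:GeneralTemplate} applied around the true mean suffices. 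For part (b) no convexity is required, so I apply Proposition \ref{thm:GeneralTemplate}(b) directly to $f$: the center is $\delta\mathcal{M}(\delta,\boldsymbol{H})$, built from $\mathcal{M}(1/\mathsf{SNR},\boldsymbol{H})$, and the half-width is $\Theta(\beta\sqrt{c_{\mathrm{ls}}\mathsf{SNR}}/p)$, producing the residual terms $\tau_{\mathrm{ls}}^{\mathrm{lb}},\tau_{\mathrm{ls}}^{\mathrm{ub}}$ with the stated probability inherited from Proposition \ref{thm:GeneralTemplate}(b) at $\kappa=2$.

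Because $g$ fails to be convex on $[0,\sqrt{\delta/3}]$, parts (a) and (c) require convexification. I sandwich $g$ between two convex functions $g_1\le g\le g_2$, where $g_2$ replaces $g$ on its concave stretch by the tangent-line at $\sqrt{\delta/3}$ and $g_1$ is the lower convex envelope (a supporting line issued from $(0,g(0))$); both are convex with Lipschitz constant still $\Theta(\sqrt{\mathsf{SNR}})$. These are chosen so that the associated eigenvalue functionals obey the pointwise rational bounds $\tfrac{8\delta/9}{8\delta/9+\lambda}\le f_1(\lambda)\le f(\lambda)\le f_2(\lambda)\le\tfrac{9\delta/8}{9\delta/8+\lambda}$. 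Applying Proposition \ref{thm:GeneralTemplate}(a) to each of $g_1,g_2$ concentrates $f_0^{(1)},f_0^{(2)}$ about their means, and the rational bounds identify those means with $\tfrac89\delta\,\mathcal{M}(\tfrac89\delta,\boldsymbol{H})$ and $\tfrac98\delta\,\mathcal{M}(\tfrac98\delta,\boldsymbol{H})$; since $\delta\mapsto\delta\mathcal{M}(\delta,\cdot)$ is monotone, these furnish the endpoints built from $\mathcal{M}(\tfrac{8}{9\mathsf{SNR}},\boldsymbol{H})$ and $\mathcal{M}(\tfrac{9}{8\mathsf{SNR}},\boldsymbol{H})$, and the union over the two approximations doubles the failure probability to $1-8\exp(-\beta^2/16)$. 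For part (c) I first truncate $\boldsymbol{M}_{ij}\mapsto\tilde{\boldsymbol{M}}_{ij}=\boldsymbol{M}_{ij}\mathbf{1}_{\{\sqrt{p}|\boldsymbol{M}_{ij}|<\tau_c\}}$, rerun the bounded-measure argument on $\boldsymbol{A}\tilde{\boldsymbol{M}}$ with $D$ replaced by $\tau_c$ and $\nu$ by $\sigma_c$, and absorb the truncation event through (\ref{eq:TruncateMProb}), yielding probability $1-10/p^{c(n)}$ and centers expressed via $\mathcal{M}(\cdot,\boldsymbol{A}\tilde{\boldsymbol{M}})$.

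The main obstacle is the convexification in Step three: one must construct convex majorant/minorant of $g(x)=\delta/(\delta+x^2)$ that are simultaneously tight enough for their means to collapse onto $\mathcal{M}$ evaluated at merely an $\tfrac89$--$\tfrac98$ multiplicative perturbation of $\delta$, yet no worse in Lipschitz constant than $\Theta(\sqrt{\mathsf{SNR}})$ so that the concentration residual remains $\mathcal{O}(\beta\sqrt{\mathsf{SNR}}/p)$. Verifying the precise constants $8/9$ and $9/8$ in the rational envelopes, together with the explicit pre-constants $\tau_{\mathrm{bd}},\tau_{\mathrm{ls}},\tau_{\mathrm{ht}}$ recorded in Table \ref{tab:Summary-of-Preconstants-MMSE}, is where the genuine calculation lies; the reduction and the framework invocations themselves are routine.
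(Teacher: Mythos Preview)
Your overall plan---rewrite $\mathsf{NMMSE}/p$ as a linear spectral statistic of $\boldsymbol{H}^*\boldsymbol{H}=\boldsymbol{M}^*\boldsymbol{A}^*\boldsymbol{A}\boldsymbol{M}$, compute the Lipschitz constant of $g(x)=f(x^2)$, convexify for the bounded and heavy-tailed cases, and feed everything into Proposition~\ref{thm:GeneralTemplate}---is exactly the route the paper takes. The paper's convexification is essentially the tangent-line replacement you describe: it sets $\tilde g_\epsilon(x)=1/(\epsilon^2+x^2)$ for $x>\epsilon/\sqrt3$ and the tangent line at the inflection point otherwise, and then sandwiches via $g_{\frac{3}{2\sqrt2}\epsilon}\le\tilde g_{\frac{3}{2\sqrt2}\epsilon}\le g_\epsilon\le\tilde g_\epsilon\le g_{\frac{2\sqrt2}{3}\epsilon}$, which is where the factors $(2\sqrt2/3)^2=8/9$ and $(3/2\sqrt2)^2=9/8$ come from.

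There is, however, a real mismatch between your spectral functional and the one the theorem is about. Your push-through identity is correct and gives $\tfrac1p\mathsf{NMMSE}=\tfrac{\delta}{p}\mathrm{tr}\bigl((\delta\boldsymbol{I}+\boldsymbol{H}^*\boldsymbol{H})^{-1}\bigr)$, i.e.\ $f(\lambda)=\delta/(\delta+\lambda)$. But the theorem as stated centers the interval at $\mathcal{M}(\delta,\boldsymbol{H})=\tfrac1p\mathbb{E}\,\mathrm{tr}\bigl((\delta\boldsymbol{I}+\boldsymbol{H}^*\boldsymbol{H})^{-1}\bigr)$ with no $\delta$ prefactor, and the residuals in Table~\ref{tab:Summary-of-Preconstants-MMSE} scale like $\mathsf{SNR}^{1.5}$. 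Both of these are consistent with the paper's choice $f(\lambda)=1/(\delta+\lambda)$, whose associated $g(x)=1/(\delta+x^2)$ has $\|g\|_{\mathcal L}=\tfrac{3\sqrt3}{8}\delta^{-3/2}=\tfrac{3\sqrt3}{8}\mathsf{SNR}^{3/2}$---not with your $f$, whose $g$ has $\|g\|_{\mathcal L}=\Theta(\sqrt{\mathsf{SNR}})$. So your computed half-width $\Theta(\beta\sqrt{c_{\mathrm{ls}}\mathsf{SNR}}/p)$ and your center $\delta\mathcal{M}(\delta,\boldsymbol{H})$ will \emph{not} reproduce $\tau_{\mathrm{ls}}^{\mathrm{lb}},\tau_{\mathrm{ls}}^{\mathrm{ub}}$ or the interval in the statement. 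To match the theorem and the table you must work with $f(\lambda)=1/(\delta+\lambda)$ (equivalently, treat $\mathsf{NMMSE}/p$ in the paper's normalization rather than the one your push-through produces); once you make that switch, the rest of your argument goes through verbatim and coincides with the paper's proof.
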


\begin{proof}See Appendix \ref{sec:Proof-of-Theorem-MMSE-iid}.\end{proof}

Theorem \ref{thm:MMSE-iid} ensures that the MMSE per signal component
falls within a small interval with high probability. This concentration
phenomenon again arises for a very large class of probability measures
including bounded distributions, sub-Gaussian distributions, and heavy-tailed
distributions. Note that $\mathcal{M}\left(\frac{9}{8\small\mathsf{SNR}},\boldsymbol{H}\right)$,
$\mathcal{M}\left(\frac{8}{9\small\mathsf{SNR}},\boldsymbol{H}\right)$,
and $\mathcal{M}\left(\frac{1}{\small\mathsf{SNR}},\boldsymbol{H}\right)$
are often very close to each other at moderate-to-high SNR (e.g. see
the analysis of Corollary \ref{cor:MMSE-iid-Gaussian}). The spans
of the residual intervals for both bounded and logarithmic Sobolev
measures do not exceed the order
\begin{equation}
\mathcal{O}\left(\frac{\beta\left\Vert \boldsymbol{A}\right\Vert \mbox{\ensuremath{\mathsf{SNR}}}^{1.5}}{p}\right),
\end{equation}
which is often negligible compared with the MMSE value per signal
component.

In general, we are not aware of a closed-form expression for $\mathcal{M}\left(\delta,\boldsymbol{H}\right)$
under various random matrix ensembles. If $\boldsymbol{H}$ is drawn
from a Gaussian ensemble, however, we are able to derive a simple
expression and bound this value to within a narrow confidence interval
with high probability, as follows. 

\begin{corollary}\label{cor:MMSE-iid-Gaussian} Suppose that $\boldsymbol{H}_{ij}\sim\mathcal{N}\left(0,\frac{1}{p}\right)$
are independent Gaussian random variables, and assume that $\alpha<1$.
Then
\begin{equation}
\frac{\mbox{\ensuremath{\mathsf{NMMSE}}}\left(\boldsymbol{H},\mbox{\ensuremath{\mathsf{SNR}}}\right)}{p}\in\frac{\alpha}{1-\alpha}+\left[\tau_{\mathrm{g}}^{\mathrm{lb}}+\frac{\beta\tau_{\mathrm{ls}}^{\mathrm{lb}}}{p},\tau_{\mathrm{g}}^{\mathrm{ub}}+\frac{\beta\tau_{\mathrm{ls}}^{\mathrm{ub}}}{p}\right]
\end{equation}
with probability exceeding $1-4\exp\left(-\frac{\beta^{2}}{2}\right)$.
Here, $\tau_{\mathrm{g}}^{\mathrm{lb}}$ and $\tau_{\mathrm{g}}^{\mathrm{ub}}$
are presented in Theorem \ref{thm:MMSE-iid}, while $\tau_{\mathrm{g}}^{\mathrm{lb}}$
and $\tau_{\mathrm{g}}^{\mathrm{ub}}$ are formally defined in Table
\ref{tab:Summary-of-Preconstants-MMSE}.\end{corollary}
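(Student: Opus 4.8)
The plan is to obtain Corollary \ref{cor:MMSE-iid-Gaussian} by specializing Theorem \ref{thm:MMSE-iid}(b) to the standard Gaussian ensemble and then evaluating the deterministic centering quantity $\mathcal{M}(\frac{1}{\mathsf{SNR}},\boldsymbol{H})$ from (\ref{eq:MeExpression}) in closed form. First I would note that the real (resp. complex) Gaussian law satisfies the LSI with constant $c_{\mathrm{ls}}=1$, and that taking $\boldsymbol{A}=\boldsymbol{I}$ (so that $m=n$ and $\boldsymbol{H}=\boldsymbol{M}$) places us exactly in the setting of Theorem \ref{thm:MMSE-iid}(b). Invoking that theorem immediately yields
\[
\frac{\mathsf{NMMSE}(\boldsymbol{H},\mathsf{SNR})}{p}\in\mathcal{M}\!\left(\tfrac{1}{\mathsf{SNR}},\boldsymbol{H}\right)+\frac{\beta}{p}\left[\tau_{\mathrm{ls}}^{\mathrm{lb}},\tau_{\mathrm{ls}}^{\mathrm{ub}}\right]
\]
with probability exceeding $1-4\exp(-\beta^{2}/2)$, which already produces the deviation half-widths appearing in the corollary. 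It then remains only to show $\mathcal{M}(\frac{1}{\mathsf{SNR}},\boldsymbol{H})\in\frac{\alpha}{1-\alpha}+[\tau_{\mathrm{g}}^{\mathrm{lb}},\tau_{\mathrm{g}}^{\mathrm{ub}}]$.

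The core computation is thus the deterministic value $\mathcal{M}(\delta,\boldsymbol{H})=\frac{1}{p}\mathbb{E}[\mathrm{tr}((\delta\boldsymbol{I}+\boldsymbol{H}^{*}\boldsymbol{H})^{-1})]$. Writing $\boldsymbol{H}=\frac{1}{\sqrt{p}}\boldsymbol{G}$ with $\boldsymbol{G}$ having i.i.d.\ standard Gaussian entries, so that $\boldsymbol{H}^{*}\boldsymbol{H}=\frac{1}{p}\boldsymbol{W}$ with $\boldsymbol{W}:=\boldsymbol{G}^{*}\boldsymbol{G}$ a Wishart matrix, I would first evaluate the $\delta=0$ value exactly. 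The classical first inverse moment of the Wishart distribution, $\mathbb{E}[\boldsymbol{W}^{-1}]=\frac{1}{n-p}\boldsymbol{I}$ (the complex case; with $n-p-1$ in place of $n-p$ in the real case), gives
\[
\mathcal{M}(0,\boldsymbol{H})=\frac{1}{p}\,\mathbb{E}\,\mathrm{tr}\!\left((\boldsymbol{H}^{*}\boldsymbol{H})^{-1}\right)=\frac{p}{n-p}=\frac{\alpha}{1-\alpha},
\]
which is precisely the first-order term of the corollary, the real case contributing only a harmless $\mathcal{O}(1/p)$ discrepancy that is absorbed into the residuals.

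Next I would control the correction incurred by replacing $\delta=0$ with $\delta=\frac{1}{\mathsf{SNR}}>0$. Since $t\mapsto\frac{1}{\delta+t}$ is decreasing and convex, the elementary bounds $\frac{1}{\lambda}-\frac{\delta}{\lambda^{2}}\le\frac{1}{\delta+\lambda}\le\frac{1}{\lambda}$ applied to each eigenvalue $\lambda$ of $\boldsymbol{H}^{*}\boldsymbol{H}$ yield
\[
\mathcal{M}(0,\boldsymbol{H})-\frac{\delta}{p}\,\mathbb{E}\,\mathrm{tr}\!\left((\boldsymbol{H}^{*}\boldsymbol{H})^{-2}\right)\le\mathcal{M}(\delta,\boldsymbol{H})\le\mathcal{M}(0,\boldsymbol{H}).
\]
The upper inequality furnishes $\tau_{\mathrm{g}}^{\mathrm{ub}}$ directly, while the lower one reduces matters to the second inverse moment $\frac{1}{p}\mathbb{E}\,\mathrm{tr}((\boldsymbol{H}^{*}\boldsymbol{H})^{-2})=p\,\mathbb{E}\,\mathrm{tr}(\boldsymbol{W}^{-2})$, which again admits a known closed form for the Wishart ensemble. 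Substituting these expressions produces the explicit residuals $\tau_{\mathrm{g}}^{\mathrm{lb}},\tau_{\mathrm{g}}^{\mathrm{ub}}$ tabulated in Table \ref{tab:Summary-of-Preconstants-MMSE}, and combining with the deviation term from Theorem \ref{thm:MMSE-iid}(b) closes the argument.

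The main obstacle I anticipate is the non-asymptotic control of this $\delta$-correction: the second inverse moment $\mathbb{E}\,\mathrm{tr}(\boldsymbol{W}^{-2})$ is governed by the smallest eigenvalue of $\boldsymbol{W}$, which can be arbitrarily close to $0$, and this is precisely why these inverse moments are finite only when $n-p$ is bounded away from $0$ (a condition guaranteed here by the standing assumption that $\alpha<1$ is a fixed constant). Relying on the exact Wishart inverse-moment identities, rather than crude $\lambda_{\min}$ spectral bounds, is what keeps $\tau_{\mathrm{g}}^{\mathrm{lb}}$ and $\tau_{\mathrm{g}}^{\mathrm{ub}}$ sharp and of the claimed negligible order; verifying these identities and confirming that the required moments exist is the technically delicate part of the proof.
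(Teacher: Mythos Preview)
Your proposal is correct and follows essentially the same route as the paper: invoke Theorem~\ref{thm:MMSE-iid}(b) for the Gaussian ensemble to get the $\tau_{\mathrm{ls}}$ deviation terms, then sandwich $\mathcal{M}(\delta,\boldsymbol{H})$ via $\tfrac{1}{\lambda}-\tfrac{\delta}{\lambda^{2}}\le\tfrac{1}{\delta+\lambda}\le\tfrac{1}{\lambda}$ and plug in the exact Wishart inverse-moment formulas. The only cosmetic discrepancy is that the paper works with the real Wishart moment $\mathbb{E}[\mathrm{tr}(\boldsymbol{W}^{-1})]=\tfrac{p}{n-p-1}$, so that the nonzero $\tau_{\mathrm{g}}^{\mathrm{ub}}$ in Table~\ref{tab:Summary-of-Preconstants-MMSE} records the $\mathcal{O}(1/n)$ gap between $\tfrac{\alpha}{1-\alpha-1/n}$ and $\tfrac{\alpha}{1-\alpha}$, which you already anticipated as being absorbed into the residuals.
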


\begin{proof}See Appendix \ref{sec:Proof-of-Theorem-MMSE-iid}.\end{proof}

\begin{table*}
\caption{\label{tab:Summary-of-Preconstants-MMSE}Summary of parameters of
Theorem \ref{thm:MMSE-iid} and Corollary \ref{cor:MMSE-iid-Gaussian}.}

\vspace{5pt}
\centering{}\footnotesize%
\begin{tabular}{>{\raggedright}p{0.11\linewidth}>{\raggedright}p{0.3\linewidth}>{\raggedright}p{0.3\linewidth}}
\hline 
Theorem \ref{thm:MMSE-iid} & $\tau_{\mathrm{bd}}^{\mathrm{lb}}:=-\frac{2\sqrt{2}D\left\Vert \boldsymbol{A}\right\Vert \small\mathsf{SNR}^{1.5}}{3\sqrt{3}}$ & $\tau_{\mathrm{bd}}^{\mathrm{ub}}:=\frac{3\sqrt{3}D\left\Vert \boldsymbol{A}\right\Vert \small\mathsf{SNR}^{1.5}}{8}$\tabularnewline
 & $\tau_{\mathrm{ls}}^{\mathrm{lb}}:=-\frac{3\sqrt{3}\sqrt{c_{\mathrm{ls}}}\left\Vert \boldsymbol{A}\right\Vert \small\mathsf{SNR}^{1.5}}{8}$ & $\tau_{\mathrm{ls}}^{\mathrm{ub}}:=-\frac{3\sqrt{3}\sqrt{c_{\mathrm{ls}}}\left\Vert \boldsymbol{A}\right\Vert \small\mathsf{SNR}^{1.5}}{8}$\tabularnewline
 & $\tau_{\mathrm{ht}}^{\mathrm{lb}}:=-\frac{8\sqrt{2}\tau_{c}\sigma_{c}\left\Vert \boldsymbol{A}\right\Vert \sqrt{c(n)\log n}\small\mathsf{SNR}^{1.5}}{3\sqrt{3}}$ & $\tau_{\mathrm{ht}}^{\mathrm{ub}}:=\frac{3\sqrt{3}\tau_{c}\sigma_{c}\left\Vert \boldsymbol{A}\right\Vert \sqrt{c(n)\log n}\small\mathsf{SNR}^{1.5}}{2}$\tabularnewline
 &  & \tabularnewline
Corollary \ref{cor:MMSE-iid-Gaussian}  & $\tau_{\mathrm{g}}^{\mathrm{ub}}:=\frac{\alpha}{n\left(1-\alpha-\frac{1}{n}\right)^{2}}$ & $\tau_{\mathrm{g}}^{\mathrm{lb}}:=-\frac{\alpha^{2}}{\small\mathsf{SNR}\left(1-\alpha-\frac{3}{n}\right)^{3}}$\tabularnewline
\hline 
\end{tabular}
\end{table*}

Corollary \ref{cor:MMSE-iid-Gaussian} implies that the MMSE per signal
component behaves like
\[
\small\frac{\mbox{\ensuremath{\mathsf{NMMSE}}}\left(\boldsymbol{H},\small\mathsf{SNR}\right)}{p}=\frac{\alpha}{1-\alpha}+\small\mathcal{O}\left(\frac{\small\mathsf{SNR}^{1.5}}{n}+\frac{1}{n}+\frac{1}{\small\mathsf{SNR}}\right)
\]
with high probability (by setting $\beta=\Theta\left(1\right)$).
Except for the residual term that vanishes in the presence of high
SNR and large signal dimensionality $n$, the first order term depends
only on the SNR and the oversampling ratio $\frac{1}{\alpha}$. Therefore,
the normalized MMSE converges to an almost deterministic value even
in the non-asymptotic regime. This illustrates the effectiveness of
the proposed analysis approach.

\subsubsection{Connection to the General Framework}

We now demonstrate how the proof of Theorem \ref{thm:MMSE-iid} follows
from the general template presented in Section \ref{sec:General-Template}.
\begin{enumerate}
\item Observe that the MMSE can be alternatively expressed as
\begin{align*}
\small\frac{\mbox{\ensuremath{\mathsf{NMMSE}}}\left(\boldsymbol{H},\small\mathsf{SNR}\right)}{p} & =\frac{1}{p}\sum_{i=1}^{\min\left\{ n,p\right\} }\frac{1}{\small\mathsf{SNR}^{-1}+\lambda_{i}\left(\boldsymbol{H}\boldsymbol{H}^{*}\right)}\\
 & :=\frac{1}{p}\sum_{i=1}^{\min\left\{ n,p\right\} }f\left(\lambda_{i}\left(\boldsymbol{H}\boldsymbol{H}^{*}\right)\right),
\end{align*}
where $f\left(x\right):=\frac{1}{\small\mathsf{SNR}^{-1}+x}$ $(x\geq0)$.
Consequently, the function $g(x):=f(x^{2})=\frac{1}{\small\mathsf{SNR}^{-1}+x^{2}}$
has Lipschitz norm $\left\Vert g\right\Vert _{\mathcal{L}}\leq\frac{3\sqrt{3}}{8}\small\mbox{\ensuremath{\mathsf{SNR}}}^{3/2}$.
\item For measures satisfying the LSI, Theorem \ref{thm:MMSE-iid}(b) can
be immediately obtained by Proposition \ref{thm:GeneralTemplate}(b).
For measures of bounded support or heavy-tailed measures, one can
introduce the following auxiliary function with respect to some small
$\epsilon$: 
\[
\tilde{g}_{\epsilon}(x):=\begin{cases}
\frac{1}{\epsilon^{2}+x^{2}}, & \text{if }x>\frac{1}{\sqrt{3}}\epsilon,\\
-\frac{3\sqrt{3}}{8}\left(x-\frac{\epsilon}{\sqrt{3}}\right)+\frac{3}{4\epsilon^{2}},\quad & \text{if }x\leq\frac{1}{\sqrt{3}}\epsilon,
\end{cases}
\]
which is obtained by convexifying $g(x)$. By setting $\tilde{f}_{\epsilon}(x):=\tilde{g}_{\epsilon}(\sqrt{x})$,
one has
\[
\tilde{f}_{\frac{3}{2\sqrt{2}}\small\mathsf{SNR}^{-1/2}}\left(x\right)\leq f\left(x\right)\leq\tilde{f}_{\small\mathsf{SNR}^{-1/2}}\left(x\right).
\]
Applying Proposition \ref{thm:GeneralTemplate}(a) gives the concentration
bounds on 
\[
\frac{1}{p}\sum_{i=1}^{\min\left\{ p,n\right\} }\tilde{f}_{\frac{3}{2\sqrt{2}}\small\mathsf{SNR}^{-1/2}}\left(\lambda_{i}\left(\boldsymbol{H}\boldsymbol{H}^{*}\right)\right)
\]
and
\[
\frac{1}{p}\sum_{i=1}^{\min\left\{ p,n\right\} }\tilde{f}_{\small\mathsf{SNR}^{-1/2}}\left(\lambda_{i}\left(\boldsymbol{H}\boldsymbol{H}^{*}\right)\right),
\]
which in turn provide tight sandwich bounds for $\frac{1}{p}\sum_{i=1}^{\min\left\{ p,n\right\} }f\left(\lambda_{i}\left(\boldsymbol{H}\boldsymbol{H}^{*}\right)\right)$.
\end{enumerate}

\section{Conclusions and Future Directions\label{sec:Conclusion}}

We have presented a general analysis framework that allows many canonical
MIMO system performance metrics taking the form of linear spectral
statistics to be assessed within a narrow concentration interval.
Moreover, we can guarantee that the metric value falls within the
derived interval with high probability, even under moderate channel
dimensionality. We demonstrate the effectiveness of the proposed framework
through two canonical metrics in wireless communications and signal
processing: mutual information and power offset of MIMO channels,
and MMSE for Gaussian processes. For each of these examples, our approach
allows a more refined and accurate characterization than those derived
in previous work in the asymptotic dimensionality limit. While our
examples here are presented for i.i.d. channel matrices, they can
all be immediately extended to accommodate dependent random matrices
using the same techniques. In other words, our analysis can be applied
to characterize various performance metrics of wireless systems under
correlated fading, as studied in \cite{ChuahTseKahnValenzuela2002}. 

Our work suggests a few future research directions for different matrix
ensembles. Specifically, if the channel matrix $\boldsymbol{H}$ cannot
be expressed as a linear correlation model $\boldsymbol{M}\boldsymbol{A}$
for some known matrix $\boldsymbol{A}$ and a random i.i.d. matrix
$\boldsymbol{M}$, then it will be of great interest to see whether
there is a systematic approach to analyze the concentration of spectral
measure phenomenon. For instance, if $\boldsymbol{H}$ consists of
independent sub-Gaussian columns but the elements within each column
are correlated (i.e. cannot be expressed as a weighted sum of i.i.d.
random variables), then it remains to be seen whether generalized
concentration of measure techniques (e.g. \cite{el2009concentration})
can be applied to find confidence intervals for the system performance
metrics. The spiked random matrix model \cite{passemier2014asymptotic},
which also admits simple asymptotic characterization, is another popular
candidate in modeling various system metrics. Another more challenging
case is when the rows of $\boldsymbol{H}$ are randomly drawn from
a known set of structured candidates. An example is a random DFT ensemble
where each row is independently drawn from a DFT matrix, which is
frequently encountered in the compressed sensing literature (e.g.
\cite{CandRomTao06}). Understanding the spectral concentration of
such random matrices could be of great interest in investigating OFDM
systems under random subsampling.

\section*{Acknowledgments}

The authors would like to thank the anonymous reviewers for extensive
and constructive comments that greatly improved the paper. 

\appendices

\section{Proof of Proposition \ref{thm:GeneralTemplate}\label{sec:Proof-of-Theorem-General-Template}}

Part (a) and (b) of Proposition \ref{thm:GeneralTemplate} are immediate
consequences of \cite[Corollary 1.8]{GuionnetZeitouni2000} via simple
algebraic manipulation. Here, we only provide the proof for heavy-tailed
distributions.

Define a new matrix $\tilde{\boldsymbol{M}}$ as a truncated version
of $\boldsymbol{M}$ such that
\[
\tilde{\boldsymbol{M}}_{ij}=\begin{cases}
\boldsymbol{M}_{ij},\quad & \text{if }\left|\boldsymbol{M}_{ij}\right|<\tau_{c},\\
0, & \text{otherwise}.
\end{cases}
\]
Note that the entries of $\tilde{\boldsymbol{M}}$ have zero mean
and variance not exceeding $\sigma_{c}^{2}$, and are uniformly bounded
in magnitude by $\tau_{c}$. Consequently, Theorem \ref{thm:GeneralTemplate}(a)
asserts that for any $\beta>8\sqrt{\pi}$, we have
\begin{equation}
\left|f_{0}\left(\tilde{\boldsymbol{M}}\right)-\mathbb{E}\left[f_{0}\left(\tilde{\boldsymbol{M}}\right)\right]\right|\leq\frac{\beta\tau_{c}\sigma_{c}\rho\left\Vert g\right\Vert _{\mathcal{L}}}{n}\label{eq:BoundedDeviation-truncated}
\end{equation}
with probability exceeding $1-4\exp\left(-\frac{\beta^{2}}{16}\right)$.

In addition, a simple union bound taken collectively with (\ref{eq:TruncateMProb})
yields 
\begin{align}
\mathbb{P}\left(\boldsymbol{M}\neq\tilde{\boldsymbol{M}}\right) & \leq\sum_{1\leq i\leq n,1\leq j\leq m}\mathbb{P}\left(\frac{1}{\nu_{ij}}\left|\boldsymbol{M}_{ij}\right|>\tau_{c}\right)\nonumber \\
 & \leq mn\mathbb{P}\left(\frac{1}{\nu_{ij}}\left|\boldsymbol{M}_{ij}\right|>\tau_{c}\right)=\frac{1}{n^{c(n)}}.\label{eq:HeavyTailMequalMtilde}
\end{align}
By setting $\beta=4\sqrt{c(n)\log n}$, one has
\[
4\exp\left(-\frac{\beta^{2}}{16}\right)=\frac{4}{n^{c(n)}}.
\]
This combined with (\ref{eq:BoundedDeviation-truncated}) and the
union bound implies that
\begin{equation}
\left|f_{0}\left(\boldsymbol{M}\right)-\mathbb{E}\left[f_{0}\left(\tilde{\boldsymbol{M}}\right)\right]\right|\leq\frac{4\sqrt{c(n)\log n}\tau_{c}\sigma_{c}\rho\left\Vert g\right\Vert _{\mathcal{L}}}{n}\label{eq:BoundedDeviation-truncated-1}
\end{equation}
holds with probability at least $1-5n^{-\mathrm{c}(n)}$, as claimed.

\section{Proof of Lemma \ref{thm:ComputeMeanSubExponential}\label{sec:Proof-of-Theorem-Compute-Mean-SubExponential}}

For notational simplicity, define
\begin{align*}
Y: & =n\left\{ f_{0}\left(\boldsymbol{M}\right)-\mathbb{E}\left[f_{0}\left(\boldsymbol{M}\right)\right]\right\} \\
 & =\sum_{i=1}^{\min\{m,n\}}\left\{ f\left(\lambda_{i}\left(\frac{1}{n}\boldsymbol{M}\boldsymbol{R}\boldsymbol{R}^{*}\boldsymbol{M}^{*}\right)\right)\right.\\
 & \quad\quad\quad\quad\quad-\left.\mathbb{E}\left[f\left(\lambda_{i}\left(\frac{1}{n}\boldsymbol{M}\boldsymbol{R}\boldsymbol{R}^{*}\boldsymbol{M}^{*}\right)\right)\right]\right\} 
\end{align*}
and
\[
Z:=nf_{0}\left(\boldsymbol{M}\right)=\sum_{i=1}^{\min\{m,n\}}f\left(\lambda_{i}\left(\frac{1}{n}\boldsymbol{M}\boldsymbol{R}\boldsymbol{R}^{*}\boldsymbol{M}^{*}\right)\right),
\]
and denote by $\mu_{\left|Y\right|}(y)$ the probability measure of
$\left|Y\right|$. 

(1) Suppose that $\mathbb{P}\left(\left|Y\right|>y\right)\leq c_{1}\exp\left(-c_{2}y\right)$
holds for some universal constants $c_{1}>0$ and $c_{2}>1$. Applying
integration by parts yields the following inequality
\begin{align}
\mathbb{E}\left[e^{Y}\right] & \leq\mathbb{E}\left[e^{\left|Y\right|}\right]={\displaystyle \int}_{0}^{\infty}e^{\left|Y\right|}\mathrm{d}\mu_{\left|Y\right|}(y)\nonumber \\
 & =-\left.e^{y}\mathbb{P}\left(\left|Y\right|>y\right)\right|_{0}^{\infty}+\int_{0}^{\infty}e^{y}\mathbb{P}\left(\left|Y\right|>y\right)\mathrm{d}y\nonumber \\
 & \leq1+c_{1}\int_{0}^{\infty}\exp\left(-\left(c_{2}-1\right)y\right)\mathrm{d}y\nonumber \\
 & =1+\frac{c_{1}}{c_{2}-1}.
\end{align}
This gives rise to the following bound
\[
\log\mathbb{E}\left[e^{Z}\right]-\log e^{\mathbb{E}\left[Z\right]}=\log\mathbb{E}\left[e^{Y}\right]\leq\log\left(1+\frac{c_{1}}{c_{2}-1}\right).
\]
Therefore, we can conclude
\begin{equation}
\frac{1}{n}\log\mathbb{E}\left[e^{Z}\right]-\frac{\log\left(1+\frac{c_{1}}{c_{2}-1}\right)}{n}\leq\frac{1}{n}\mathbb{E}\left[Z\right]\leq\frac{1}{n}\log\mathbb{E}\left[e^{Z}\right],
\end{equation}
where the last inequality follows from Jensen's inequality
\[
\frac{1}{n}\mathbb{E}\left[Z\right]=\frac{1}{n}\log e^{\mathbb{E}\left[Z\right]}\leq\frac{1}{n}\log\mathbb{E}e^{\left[Z\right]}.
\]

(2) Similarly, if $\mathbb{P}\left(\left|Y\right|>y\right)\leq c_{1}\exp\left(-c_{2}y^{2}\right)$
holds for some universal constants $c_{1}>0$ and $c_{2}>0$, then
one can bound
\begin{align}
\mathbb{E}\left[e^{Y}\right] & \leq-\left.e^{y}\mathbb{P}\left(\left|Y\right|>y\right)\right|_{0}^{\infty}+\int_{0}^{\infty}e^{y}\mathbb{P}\left(\left|Y\right|>y\right)\mathrm{d}y\nonumber \\
 & \leq1+c_{1}\int_{0}^{\infty}\exp\left(-c_{2}y^{2}+y\right)\mathrm{d}y\nonumber \\
 & \leq1+c_{1}\sqrt{\frac{\pi}{c_{2}}}\exp\left(\frac{1}{4c_{2}}\right)\nonumber \\
 & \quad\quad\quad\cdot{\displaystyle \int}_{-\infty}^{\infty}\sqrt{\frac{c_{2}}{\pi}}\exp\left(-c_{2}y^{2}+y-\frac{1}{4c_{2}}\right)\mathrm{d}y\nonumber \\
 & \leq1+c_{1}\sqrt{\frac{\pi}{c_{2}}}\exp\left(\frac{1}{4c_{2}}\right),
\end{align}
where the last inequality follows since $\sqrt{\frac{c_{2}}{\pi}}\exp\left(-c_{2}y^{2}+y-\frac{1}{4c_{2}}\right)$
is the pdf of $\mathcal{N}\left(\frac{1}{2c_{2}},\frac{2}{c_{2}}\right)$
and hence integrates to 1.

Applying the same argument as for the measure with sub-exponential
tails then leads to
\begin{align*}
\frac{1}{n}\log\mathbb{E}\left[e^{Z}\right] & \geq\frac{1}{n}\mathbb{E}\left[Z\right]\\
 & \geq\frac{1}{n}\log\mathbb{E}\left[e^{Z}\right]-\frac{\log\left(1+\sqrt{\frac{c_{1}^{2}\pi}{c_{2}}}\exp\left(\frac{1}{4c_{2}}\right)\right)}{n}.
\end{align*}

\section{Proof of Theorem \ref{thm:Deviation-Ef}\label{sec:Proof-of-Theorem-Deviation-Ef}}

For notational convenience, define
\begin{align*}
Z & :=\sum_{i=1}^{\min\{m,n\}}f\left(\lambda_{i}\left(\frac{1}{n}\boldsymbol{M}\boldsymbol{R}\boldsymbol{R}^{*}\boldsymbol{M}^{*}\right)\right),\\
Y & :=\text{ }Z-\mathbb{E}\left[Z\right].
\end{align*}

(1) Under the assumptions of Proposition \ref{thm:GeneralTemplate}(a),
the concentration result (\ref{eq:BoundedDeviation-corollary}) implies
that for any $y>8\sqrt{\pi}\rho D\left\Vert g\right\Vert _{\mathcal{L}}$,
one has
\[
\mathbb{P}\left(\left|Y\right|>y\right)\leq4\exp\left(-\frac{y^{2}}{8\kappa D^{2}\rho^{2}\nu^{2}\left\Vert g\right\Vert _{\mathcal{L}}^{2}}\right).
\]
For $y\leq8\sqrt{\pi}D\rho\nu\left\Vert g\right\Vert _{\mathcal{L}}$,
we can employ the trivial bound $\mathbb{P}\left(\left|Y\right|>y\right)\leq1$.
These two bounds taken collectively lead to a universal bound such
that for any $y\geq0$, one has
\[
\mathbb{P}\left(\left|Y\right|>y\right)\leq\exp\left(\frac{8\pi}{\kappa}\right)\exp\left(-\frac{y^{2}}{8\kappa D^{2}\rho^{2}\nu^{2}\left\Vert g\right\Vert _{\mathcal{L}}^{2}}\right).
\]
Lemma \ref{thm:ComputeMeanSubExponential} then suggests
\begin{align*}
 & \frac{1}{n}\log\mathcal{E}\left(f\right)\geq\frac{1}{n}\mathbb{E}\left[Z\right]\\
 & \text{ }\geq\frac{\log\mathcal{E}\left(f\right)}{n}-\frac{\log\left(1+\sqrt{8\kappa\pi}D\rho\nu\left\Vert g\right\Vert _{\mathcal{L}}e^{\frac{8\pi}{\kappa}+2\kappa D^{2}\rho^{2}\nu^{2}\left\Vert g\right\Vert _{\mathcal{L}}^{2}}\right)}{n}.
\end{align*}
This together with (\ref{eq:BoundedDeviation-corollary}) establishes
(\ref{eq:BoundedDeviation-Ef}).

(2) Under the assumptions of Proposition \ref{thm:GeneralTemplate}(b),
the concentration inequality (\ref{eq:BoundedDeviation-corollary})
asserts that for any $y>0$, we have
\[
\mathbb{P}\left(\left|Y\right|>y\right)\leq2\exp\left(-\frac{y^{2}}{\kappa c_{\mathrm{ls}}\rho^{2}\nu^{2}\left\Vert g\right\Vert _{\mathcal{L}}^{2}}\right).
\]
Applying Lemma \ref{thm:ComputeMeanSubExponential} then yields 
\begin{align*}
 & \frac{1}{n}\log\mathcal{E}\left(f\right)\geq\frac{1}{n}\mathbb{E}\left[Z\right]\\
 & \text{ }\geq\frac{\log\mathcal{E}\left(f\right)}{n}-\frac{\log\left(1+\sqrt{4\kappa\pi c_{\mathrm{ls}}\rho^{2}\nu^{2}\left\Vert g\right\Vert _{\mathcal{L}}^{2}}e^{\frac{\kappa}{4}c_{\mathrm{ls}}\rho^{2}\nu^{2}\left\Vert g\right\Vert _{\mathcal{L}}^{2}}\right)}{n}.
\end{align*}
This combined with (\ref{eq:LogSobolevDeviation-Corollary}) establishes
(\ref{eq:LogSobolevDeviation-Ef}).

(3) Under the assumptions of Proposition \ref{thm:GeneralTemplate}(c),
define
\begin{align*}
\tilde{Z} & :=\sum_{i=1}^{n}f\left(\lambda_{i}\left(\frac{1}{n}\tilde{\boldsymbol{M}}\boldsymbol{R}\boldsymbol{R}^{*}\tilde{\boldsymbol{M}}^{*}\right)\right),\\
\tilde{Y} & :=\tilde{Z}-\mathbb{E}\left[\tilde{Z}\right].
\end{align*}
Combining the concentration result (\ref{eq:BoundedDeviation-truncated})
with (\ref{eq:BoundedDeviation-Ef}) gives rise to
\begin{equation}
\begin{cases}
\frac{1}{n}\sum_{i=1}^{n}f\left(\lambda_{i}\left(\frac{1}{n}\tilde{\boldsymbol{M}}\boldsymbol{R}\boldsymbol{R}^{*}\tilde{\boldsymbol{M}}^{*}\right)\right)\\
\quad\quad\leq\frac{1}{n}\log\mathcal{E}_{\tilde{\boldsymbol{M}}}\left(f\right)+\frac{\beta\tau_{c}\sigma_{c}\rho\left\Vert g\right\Vert _{\mathcal{L}}}{n},\\
\\
\frac{1}{n}\sum_{i=1}^{n}f\left(\lambda_{i}\left(\frac{1}{n}\tilde{\boldsymbol{M}}\boldsymbol{R}\boldsymbol{R}^{*}\tilde{\boldsymbol{M}}^{*}\right)\right)\\
\quad\quad\geq\frac{1}{n}\log\mathcal{E}_{\tilde{\boldsymbol{M}}}\left(f\right)-\frac{\beta\tau_{c}\sigma_{c}\rho\left\Vert g\right\Vert _{\mathcal{L}}}{n}-\frac{c_{\rho,f,\tau_{c},\sigma_{c}}}{n},
\end{cases}\label{eq:HeavyTailDeviation-Ef-1}
\end{equation}
with probability exceeding $1-4\exp\left(-\frac{\beta^{2}}{8\kappa}\right)$.
We have shown in (\ref{eq:HeavyTailMequalMtilde}) that 
\[
\mathbb{P}\left(\boldsymbol{M}\neq\tilde{\boldsymbol{M}}\right)\leq\frac{1}{n^{c(n)}}
\]
and $4\exp\left(-\frac{\beta^{2}}{8\kappa}\right)=\frac{4}{n^{c(n)}}$
when $\beta=\sqrt{8\kappa c(n)\log n}$, concluding the proof via
the union bound.

\section{Proof of Lemma \ref{lemma:ExpDetIplustMM}\label{sec:Proof-of-Lemma-ExpDetIplustMM}}

For any $n\times n$ matrix $\boldsymbol{A}$ with eigenvalues $\lambda_{1},\cdots,\lambda_{n}$,
define the characteristic polynomial of $\boldsymbol{A}$ as
\begin{align}
p_{\boldsymbol{A}}(t) & =\det\left(t\boldsymbol{I}-\boldsymbol{A}\right)=t^{n}-S_{1}(\lambda_{1},\cdots,\lambda_{n})t^{n-1}+\cdots\nonumber \\
 & \quad\quad\quad\quad\quad\quad\quad\quad+(-1)^{n}S_{n}(\lambda_{1},\cdots,\lambda_{n}),
\end{align}
where $S_{l}(\lambda_{1},\cdots,\lambda_{n})$ is the $l$th elementary
symmetric polynomial defined as follows
\begin{equation}
S_{l}(\lambda_{1},\cdots,\lambda_{n}):=\sum_{1\leq i_{1}<\cdots<i_{l}\leq n}\prod_{j=1}^{l}\lambda_{i_{j}}.
\end{equation}
Let $E_{l}(\boldsymbol{A})$ represent the sum of determinants of
all $l$-by-$l$ principal minors of $\boldsymbol{A}$. It has been
shown in \cite[Theorem 1.2.12]{HornJohnson} that 
\begin{equation}
S_{l}(\lambda_{1},\cdots,\lambda_{n})=E_{l}(\boldsymbol{A}),\quad\quad1\leq l\leq n,
\end{equation}
which follows that
\begin{equation}
\det\left(t\boldsymbol{I}+\boldsymbol{A}\right)=t^{n}+E_{1}(\boldsymbol{A})t^{n-1}+\cdots+E_{n}(\boldsymbol{A}).\label{eq:DetIplusA_PrincipalMinor}
\end{equation}
On the other hand, for any principal minor $\left(\boldsymbol{M}\boldsymbol{M}^{*}\right)_{\boldsymbol{s},\boldsymbol{s}}$
of $\boldsymbol{M}\boldsymbol{M}^{*}$ coming from the rows and columns
at indices from $\boldsymbol{s}$ (where $\boldsymbol{s}\subseteq[n]$
is an index set), then one can show that
\begin{align}
\mathbb{E}\left[\det\left(\left(\boldsymbol{M}\boldsymbol{M}^{*}\right)_{\boldsymbol{s},\boldsymbol{s}}\right)\right] & =\mathbb{E}\left[\det\left(\boldsymbol{M}_{\boldsymbol{s},[m]}\boldsymbol{M}_{\boldsymbol{s},[m]}^{*}\right)\right]\nonumber \\
 & =\sum_{\boldsymbol{t}:\mathrm{card}\left(\boldsymbol{t}\right)=\mathrm{card}\left(\boldsymbol{s}\right)}\mathbb{E}\left[\det\left(\boldsymbol{M}_{\boldsymbol{s},\boldsymbol{t}}\boldsymbol{M}_{\boldsymbol{s},\boldsymbol{t}}^{*}\right)\right]\nonumber \\
 & ={m \choose \mathrm{card}(\boldsymbol{s})}\cdot\mathbb{E}\left[\det\left(\boldsymbol{M}_{\boldsymbol{s},\boldsymbol{s}}\boldsymbol{M}_{\boldsymbol{s},\boldsymbol{s}}^{*}\right)\right].\label{eq:EdetMM}
\end{align}

Consider now any i.i.d. random matrix $\boldsymbol{G}\in\mathbb{C}^{l\times l}$
such that $\mathbb{E}\left[\boldsymbol{G}_{ij}\right]=0$ and $\mathbb{E}[\left|\boldsymbol{G}_{ij}\right|^{2}]=1$.
If we denote by $\prod_{l}$ the permutation group of $l$ elements,
then the Leibniz formula for the determinant gives
\[
\det\left(\boldsymbol{G}\right)=\sum_{\sigma\in\prod_{l}}\text{sgn}(\sigma)\prod_{i=1}^{l}\boldsymbol{G}_{i,\sigma(i)}.
\]
Since $\boldsymbol{G}_{ij}$ are jointly independent, we have
\begin{align}
\mathbb{E}\left[\text{det}\left(\boldsymbol{G}\boldsymbol{G}^{*}\right)\right] & =\mathbb{E}\left[\left|\det\left(\boldsymbol{G}\right)\right|^{2}\right]\nonumber \\
 & =\sum_{\sigma\in\prod_{l}}\mathbb{E}\left[\prod_{i=1}^{l}\left|\boldsymbol{G}_{i,\sigma(i)}\right|^{2}\right]\nonumber \\
 & =\sum_{\sigma\in\prod_{l}}\prod_{i=1}^{l}\mathbb{E}\left[\left|\boldsymbol{G}_{i,\sigma(i)}\right|^{2}\right]=l!,\label{eq:EdetGG}
\end{align}
which is distribution-free. 

So far, we are already able to derive the closed-form expression for
$\mathbb{E}\left[\det\left(\epsilon\boldsymbol{I}+\frac{1}{m}\boldsymbol{M}\boldsymbol{M}^{*}\right)\right]$
by combining (\ref{eq:DetIplusA_PrincipalMinor}), (\ref{eq:EdetMM})
and (\ref{eq:EdetGG}) via straightforward algebraic manipulation. 

Alternatively, (\ref{eq:DetIplusA_PrincipalMinor}), (\ref{eq:EdetMM})
and (\ref{eq:EdetGG}) taken collectively indicate that $\mathbb{E}\left[\det\left(\epsilon\boldsymbol{I}+\frac{1}{m}\boldsymbol{M}\boldsymbol{M}^{*}\right)\right]$
is independent of precise distribution of the entries $\boldsymbol{M}_{ij}$'s.
As a result, we only need to evaluate $\mathbb{E}\left[\det\left(\epsilon\boldsymbol{I}+\frac{1}{m}\boldsymbol{M}\boldsymbol{M}^{*}\right)\right]$
under i.i.d. Gaussian random matrices. To this end, one can simply
cite the closed-form expression derived for Gaussian random matrices,
which has been reported in \cite[Theorem 2.13]{TulinoVerdu2004}.
This concludes the proof.

\section{Proof of Theorem \ref{thm:CapacityIIDChannel}\label{sec:Proof-of-Theorem-CapacityIIDChannels}}

When equal power allocation is adopted, the MIMO mutual information
per receive antenna is given by
\begin{align}
 & \frac{C\left(\boldsymbol{H},\small\mathsf{SNR}\right)}{n_{\mathrm{r}}}=\frac{1}{n_{\mathrm{r}}}\log\det\left(\boldsymbol{I}_{n_{\mathrm{r}}}+\frac{\small\mathsf{SNR}}{n_{\text{t}}}\boldsymbol{H}\boldsymbol{H}^{*}\right)\nonumber \\
= & \frac{1}{n_{\mathrm{r}}}\sum_{i=1}^{\min\{n_{\mathrm{r}},n_{\mathrm{t}}\}}\log\left(\frac{1}{\small\mathsf{SNR}}+\frac{1}{n_{\text{t}}}\lambda_{i}\left(\boldsymbol{H}\boldsymbol{H}^{*}\right)\right)\nonumber \\
 & \quad\quad\quad\quad\quad\quad\quad\quad+\min\{1,\alpha\}\log\mbox{\ensuremath{\mathsf{SNR}}},\nonumber \\
= & \begin{cases}
\frac{1}{n_{\mathrm{r}}}\log\det\left(\frac{1}{\small\mathsf{SNR}}\boldsymbol{I}_{n_{\mathrm{r}}}+\frac{1}{n_{\text{t}}}\boldsymbol{H}\boldsymbol{H}^{*}\right)+\log\mbox{\ensuremath{\mathsf{SNR}}},\quad & \text{if }\alpha\geq1\\
\frac{\alpha}{n_{\mathrm{t}}}\log\det\left(\frac{\alpha}{\small\mathsf{SNR}}\boldsymbol{I}_{n_{\mathrm{t}}}+\frac{1}{n_{\text{r}}}\boldsymbol{H}^{*}\boldsymbol{H}\right)+\alpha\log\frac{\small\mathsf{SNR}}{\alpha}, & \text{if }\alpha<1
\end{cases}\label{eq:CapExpand}
\end{align}
where $\alpha=n_{\mathrm{t}}/n_{\mathrm{r}}$ is assumed to be an
absolute constant. 

The first term of the capacity expression (\ref{eq:CapExpand}) in
both cases exhibits sharp measure concentration, as stated in the
following lemma. This in turn completes the proof of Theorem \ref{thm:CapacityIIDChannel}.

\begin{lem}\label{lemma-ConcentrationLogDetPlusConstant}Suppose
that $\alpha:=\frac{m}{n}\geq1$ is an absolute constant. Consider
a real-valued random matrix $\boldsymbol{M}=[\zeta_{ij}]_{1\leq i\leq n,1\leq j\leq m}$,
where $\zeta_{ij}$'s are jointly independent with zero mean and unit
variance.

(a) If $\zeta_{ij}$'s are bounded by $D$, then for any $\beta>8\sqrt{\pi}$,
one has
\begin{equation}
\begin{cases}
\frac{1}{n}\log\det\left(\epsilon\boldsymbol{I}+\frac{1}{m}\boldsymbol{M}\boldsymbol{M}^{*}\right)\leq\frac{1}{n}\log\mathcal{R}\left(\frac{e}{2}\epsilon,n,m\right)+\frac{D}{\sqrt{\frac{e}{2}\epsilon\alpha}}\frac{\beta}{n},\\
\\
\frac{1}{n}\log\det\left(\epsilon\boldsymbol{I}+\frac{1}{m}\boldsymbol{M}\boldsymbol{M}^{*}\right)\geq\frac{1}{n}\log\mathcal{R}\left(\frac{2}{e}\epsilon,n,m\right)-\frac{D}{\sqrt{\epsilon\alpha}}\frac{\beta}{n}\\
\quad\quad\quad\quad\quad\quad\quad\quad\quad\quad\quad\quad\quad-\frac{\log\left\{ 1+\sqrt{\frac{8\pi D^{2}}{\epsilon\alpha}}e^{8\pi+\frac{2D^{2}}{\epsilon\alpha}}\right\} }{n},
\end{cases}\label{eq:BoundsFepsilonBounded-Lemma}
\end{equation}
with probability exceeding $1-4\exp\left(-\frac{\beta^{2}}{8}\right)$.

(b) If $\zeta_{ij}$ satisfies the LSI with respect to a uniform constant
$c_{\mathrm{ls}}$, then for any $\beta>0$, one has
\begin{align}
\begin{cases}
\frac{1}{n}\log\det\left(\epsilon\boldsymbol{I}+\frac{1}{m}\boldsymbol{M}\boldsymbol{M}^{*}\right)\leq\frac{1}{n}\log\mathcal{R}\left(\epsilon,n,m\right)+\frac{\sqrt{c_{\mathrm{ls}}}}{\sqrt{\epsilon\alpha}}\frac{\beta}{n},\\
\\
\frac{1}{n}\log\det\left(\epsilon\boldsymbol{I}+\frac{1}{m}\boldsymbol{M}\boldsymbol{M}^{*}\right)\geq\frac{1}{n}\log\mathcal{R}\left(\epsilon,n,m\right)-\frac{\sqrt{c_{\mathrm{ls}}}}{\sqrt{\epsilon\alpha}}\frac{\beta}{n}\\
\quad\quad\quad\quad\quad\quad\quad\quad\quad\quad\quad\quad\quad-\frac{\log\left(1+\sqrt{\frac{4\pi c_{\mathrm{ls}}}{\epsilon\alpha}}e^{\frac{c_{\mathrm{ls}}}{4\epsilon\alpha}}\right)}{n},
\end{cases}\label{eq:BoundsFepsilonSubgaussian-Lemma}
\end{align}
with probability at least $1-2\exp\left(-\beta^{2}\right)$.

(c) If $\zeta_{ij}$'s are heavy-tailed distributed, then one has
\begin{equation}
\begin{cases}
\frac{1}{n}\log\det\left(\epsilon\boldsymbol{I}+\frac{1}{m}\boldsymbol{M}\boldsymbol{M}^{*}\right)\leq\frac{1}{n}\log\mathcal{R}\left(\frac{e\epsilon}{2\sigma_{c}^{2}},n,m\right)\\
\quad+2\log\sigma_{c}+\frac{\tau_{c}\sigma_{c}\sqrt{8c(n)\log n}}{\sqrt{\epsilon\alpha}n},\,\\
\\
\frac{1}{n}\log\det\left(\epsilon\boldsymbol{I}+\frac{1}{m}\boldsymbol{M}\boldsymbol{M}^{*}\right)\geq\frac{1}{n}\log\mathcal{R}\left(\frac{2\epsilon}{e\sigma_{c}^{2}},n,m\right)\\
\quad+2\log\sigma_{c}-\frac{\tau_{c}\sigma_{c}\sqrt{8c(n)\log n}}{\sqrt{\epsilon\alpha}n}-\frac{\log\left(1+\sqrt{\frac{8\pi\tau_{c}^{2}\sigma_{c}^{2}}{\epsilon\alpha}}e^{8\pi+\frac{2\tau_{c}^{2}\sigma_{c}^{2}}{\epsilon\alpha}}\right)}{n},
\end{cases}\label{eq:HeavyTailDeviation-Ef-Lemma}
\end{equation}
with probability exceeding $1-\frac{5}{n^{c(n)}}$.

Here, the function $\mathcal{R}\left(\epsilon\right)$ is defined
as
\[
\mathcal{R}\left(\epsilon,n,m\right):=\sum_{i=0}^{n}{n \choose i}\frac{\epsilon^{n-i}m^{-i}m!}{\left(m-i\right)!}.
\]
\end{lem}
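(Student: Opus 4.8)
The plan is to recognize the quantity $\frac{1}{n}\log\det(\epsilon\boldsymbol{I}+\frac{1}{m}\boldsymbol{M}\boldsymbol{M}^{*})=\frac{1}{n}\sum_{i=1}^{n}\log(\epsilon+\lambda_{i}(\frac{1}{m}\boldsymbol{M}\boldsymbol{M}^{*}))$ as a linear spectral statistic of the form treated in Proposition \ref{thm:GeneralTemplate} and Theorem \ref{thm:Deviation-Ef}. To match the template $\frac{1}{n}\boldsymbol{M}\boldsymbol{R}\boldsymbol{R}^{*}\boldsymbol{M}^{*}$, I would take $\boldsymbol{R}=\sqrt{n/m}\,\boldsymbol{I}$, so that $\frac{1}{n}\boldsymbol{M}\boldsymbol{R}\boldsymbol{R}^{*}\boldsymbol{M}^{*}=\frac{1}{m}\boldsymbol{M}\boldsymbol{M}^{*}$ and $\rho=\|\boldsymbol{R}\|=\sqrt{n/m}=1/\sqrt{\alpha}$; here $\nu=1$ and $\kappa=1$. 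With $f(x):=\log(\epsilon+x)$ the associated function $g(x):=f(x^{2})=\log(\epsilon+x^{2})$ attains its maximal slope at $x=\sqrt{\epsilon}$, giving $\|g\|_{\mathcal{L}}=1/\sqrt{\epsilon}$. The crucial bridge to the stated $\mathcal{R}$-terms is Lemma \ref{lemma:ExpDetIplustMM}: it identifies the exponential mean $\mathcal{E}(f)=\mathbb{E}[\exp(\sum_{i}\log(\epsilon+\lambda_{i}))]=\mathbb{E}[\det(\epsilon\boldsymbol{I}+\frac{1}{m}\boldsymbol{M}\boldsymbol{M}^{*})]$ with $\mathcal{R}(\epsilon,n,m)$, in a distribution-free manner.

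Part (b) would then follow immediately. Since the logarithmic Sobolev case of Theorem \ref{thm:Deviation-Ef}(b) controls general Lipschitz $g$ (no convexity needed), substituting $\rho=1/\sqrt{\alpha}$, $\nu=1$, $\|g\|_{\mathcal{L}}=1/\sqrt{\epsilon}$ produces the deviation radius $\mu_{\rho,g,\sqrt{c_{\mathrm{ls}}}}/n=\frac{\sqrt{c_{\mathrm{ls}}}}{\sqrt{\epsilon\alpha}}\cdot\frac{1}{n}$, while the residual constant $c_{\rho,f,c_{\mathrm{ls}}}$ collapses to exactly $\log(1+\sqrt{\frac{4\pi c_{\mathrm{ls}}}{\epsilon\alpha}}e^{\frac{c_{\mathrm{ls}}}{4\epsilon\alpha}})$, matching (\ref{eq:BoundsFepsilonSubgaussian-Lemma}). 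The obstacle for parts (a) and (c) is that $g$ is convex only on $[0,\sqrt{\epsilon}]$ and concave thereafter, so the bounded and heavy-tailed parts of the framework, which demand a convex (or concave) $g$, do not apply to $g$ directly.

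My approach for (a) and (c) is to sandwich $g$ between two concave functions obtained by convexifying the kink at the origin. Concretely, define $g_{\delta}(x):=\log(\delta+x^{2})$ for $x\ge\sqrt{\delta}$ and $g_{\delta}(x):=\frac{1}{\sqrt{\delta}}(x-\sqrt{\delta})+\log(2\delta)$ for $0\le x<\sqrt{\delta}$, i.e. replace the convex piece by its tangent line at $\sqrt{\delta}$; then $g_{\delta}$ is concave with $\|g_{\delta}\|_{\mathcal{L}}=1/\sqrt{\delta}$, and $-g_{\delta}$ is convex, so the framework applies to it (the two-sided quantity $|f_{0}-\mathbb{E}f_{0}|$ is insensitive to the overall sign). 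Writing $f_{\delta}(x):=g_{\delta}(\sqrt{x})$, I would verify the two pointwise inequalities that feed into $\mathcal{R}$: first, since a tangent lies below a convex curve, $f_{\delta}(x)\le\log(\delta+x)$ for all $x\ge0$, whence $\mathcal{E}(f_{\delta})\le\mathcal{R}(\delta,n,m)$; second, checking the worst case $x=0$ (where the tangent value equals $\log(2\delta/e)$) shows $f_{\delta}(x)\ge\log(\tfrac{2}{e}\delta+x)$, whence $\mathcal{E}(f_{\delta})\ge\mathcal{R}(\tfrac{2}{e}\delta,n,m)$. Combining the elementary sandwich $f_{\epsilon}(x)\le f(x)\le f_{(e/2)\epsilon}(x)$ (the factor $e/2$ is precisely what forces the upper tangent line to dominate $\log(\epsilon+x)$ at $x=0$) with Theorem \ref{thm:Deviation-Ef}(a) applied to $g_{(e/2)\epsilon}$ and $g_{\epsilon}$ yields the upper bound centered at $\frac{1}{n}\log\mathcal{R}(\tfrac{e}{2}\epsilon,n,m)$ with radius $\frac{D}{\sqrt{(e/2)\epsilon\alpha}}\frac{\beta}{n}$ and the lower bound centered at $\frac{1}{n}\log\mathcal{R}(\tfrac{2}{e}\epsilon,n,m)$ with radius $\frac{D}{\sqrt{\epsilon\alpha}}\frac{\beta}{n}$, exactly as in (\ref{eq:BoundsFepsilonBounded-Lemma}).

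For the heavy-tailed case (c) I would run the same convexified sandwich but feed it through the truncation mechanism of Theorem \ref{thm:Deviation-Ef}(c) and Proposition \ref{thm:GeneralTemplate}(c), replacing $\boldsymbol{M}$ by $\tilde{\boldsymbol{M}}$ whose entries have variance at most $\sigma_{c}^{2}$ and are bounded by $\tau_{c}$. Factoring $\frac{1}{m}\tilde{\boldsymbol{M}}\tilde{\boldsymbol{M}}^{*}=\sigma_{c}^{2}\cdot\frac{1}{m}\boldsymbol{N}\boldsymbol{N}^{*}$ with $\boldsymbol{N}$ of unit variance rescales the argument, producing the additive $2\log\sigma_{c}$ and the $\epsilon/\sigma_{c}^{2}$ inside $\mathcal{R}$ (hence $\mathcal{R}(\frac{e\epsilon}{2\sigma_{c}^{2}},n,m)$ and $\mathcal{R}(\frac{2\epsilon}{e\sigma_{c}^{2}},n,m)$), while the choice $\beta=\sqrt{8c(n)\log n}$ converts the Gaussian tail into the $1-5n^{-c(n)}$ guarantee. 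I expect the main obstacle to be the careful verification of the two sandwich inequalities on the linear segment $[0,\sqrt{\delta}]$ — in particular confirming that $e/2$ and $2/e$ are the sharp multiplicative slacks that let the convexified tangent lines bracket $\log(\epsilon+x)$ — together with tracking the two Lipschitz constants $1/\sqrt{(e/2)\epsilon}$ versus $1/\sqrt{\epsilon}$ so that the deviation radii and the pre-constants $c_{\rho,f,D}$ and $c_{\rho,f,\tau_{c},\sigma_{c}}$ come out exactly as tabulated.
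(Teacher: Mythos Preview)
Your proposal is correct and follows essentially the same approach as the paper: the same choice of $\boldsymbol{R}=\sqrt{n/m}\,\boldsymbol{I}$ giving $\rho=1/\sqrt{\alpha}$, the same Lipschitz bound $\|g\|_{\mathcal{L}}\le 1/\sqrt{\epsilon}$, the same concave surrogate $g_{\delta}$ obtained by replacing $g$ on $[0,\sqrt{\delta}]$ by its tangent at $\sqrt{\delta}$, the same pointwise sandwich $\log(\tfrac{2}{e}\delta+x)\le f_{\delta}(x)\le\log(\delta+x)$ and $f_{\epsilon}\le f\le f_{(e/2)\epsilon}$, and the same direct application of Theorem~\ref{thm:Deviation-Ef} together with Lemma~\ref{lemma:ExpDetIplustMM}. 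The heavy-tailed rescaling producing the $2\log\sigma_{c}$ term and the $\epsilon/\sigma_{c}^{2}$ inside $\mathcal{R}$ is also exactly what the paper does.
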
\begin{proof}[{\bf Proof of Lemma \ref{lemma-ConcentrationLogDetPlusConstant}}]Observe
that the derivative of $g(x):=\log\left(\epsilon+x^{2}\right)$ is
\[
g'(x)=\frac{2x}{\left(\epsilon+x^{2}\right)},\quad x\geq0,
\]
which is bounded within the interval $\left[0,\frac{1}{\sqrt{\epsilon}}\right]$
when $x\geq0$. Therefore, the Lipschitz norm of $g(x)$ satisfies
\[
\left\Vert g\right\Vert _{\mathcal{L}}\leq\frac{1}{\sqrt{\epsilon}}.
\]
The three class of probability measures are discussed separately as
follows.

(a) Consider first the measure uniformly bounded by $D$. In order
to apply Theorem \ref{thm:Deviation-Ef}, we need to first convexify
the objective metric. Define a function $g_{\epsilon}(x)$ such that
\[
g_{\epsilon}(x):=\begin{cases}
\log\left(\epsilon+x^{2}\right),\quad & \text{if }x\geq\sqrt{\epsilon},\\
\frac{1}{\sqrt{\epsilon}}\left(x-\sqrt{\epsilon}\right)+\log\left(2\epsilon\right),\quad & \text{if }0\leq x<\sqrt{\epsilon}.
\end{cases}
\]
Apparently, $g_{\epsilon}(x)$ is a concave function at $x\geq0$,
and its Lipschitz constant can be bounded above by
\[
\left\Vert g_{\epsilon}\right\Vert _{\mathcal{L}}\leq\frac{1}{\sqrt{\epsilon}}.
\]

By setting $f_{\epsilon}\left(x\right):=g_{\epsilon}\left(\sqrt{x}\right)$,
one can easily verify that
\[
\log\left(\frac{2}{e}\epsilon+x\right)\leq f_{\epsilon}(x)\leq\log\left(\epsilon+x\right),
\]
and hence
\begin{align}
\small\frac{1}{n}\log\det\left(\frac{2}{e}\epsilon\boldsymbol{I}+\frac{1}{m}\boldsymbol{M}\boldsymbol{M}^{*}\right) & \leq\small\frac{1}{n}\sum_{i=1}^{n}\left[f_{\epsilon}\left(\lambda_{i}\left(\frac{1}{m}\boldsymbol{M}\boldsymbol{M}^{*}\right)\right)\right]\nonumber \\
 & \leq\small\frac{1}{n}\log\det\left(\epsilon\boldsymbol{I}+\frac{1}{m}\boldsymbol{M}\boldsymbol{M}^{*}\right).\label{eq:BoundsFepsilon}
\end{align}
One desired feature of $f_{\epsilon}(x)$ is that $g_{\epsilon}(x)=f_{\epsilon}\left(x^{2}\right)$
is concave with finite Lipschitz norm bounded above by $\frac{1}{\sqrt{\epsilon}}$,
and is sandwiched between two functions whose \emph{exponential means}
are computable. 

By assumption, each entry of $\boldsymbol{M}$ is bounded by $D$,
and $\rho=\sqrt{\frac{n}{m}}=\frac{1}{\sqrt{\alpha}}$. Theorem \ref{thm:Deviation-Ef}
taken collectively with (\ref{eq:BoundsFepsilon}) suggests that
\begin{align}
 & \frac{1}{n}\sum_{i=1}^{n}f_{\epsilon}\left(\lambda_{i}\left(\frac{1}{m}\boldsymbol{M}\boldsymbol{M}^{*}\right)\right)\nonumber \\
 & \quad\leq\frac{1}{n}\log\mathbb{E}\left[\prod_{i=1}^{n}\exp\left\{ f_{\epsilon}\left(\lambda_{i}\left(\frac{1}{m}\boldsymbol{M}\boldsymbol{M}^{*}\right)\right)\right\} \right]+\frac{D}{\sqrt{\epsilon\alpha}}\frac{\beta}{n}\nonumber \\
 & \quad\leq\frac{1}{n}\log\mathbb{E}\left[\det\left(\epsilon\boldsymbol{I}+\frac{1}{m}\boldsymbol{M}\boldsymbol{M}^{*}\right)\right]+\frac{D}{\sqrt{\epsilon\alpha}}\frac{\beta}{n}\label{eq:BoundsFepsilonUB}
\end{align}
and
\begin{align}
 & \frac{1}{n}\sum_{i=1}^{n}f_{\epsilon}\left(\lambda_{i}\left(\frac{1}{m}\boldsymbol{M}\boldsymbol{M}^{*}\right)\right)\nonumber \\
 & \quad\geq\text{ }\frac{1}{n}\log\mathbb{E}\left[\prod_{i=1}^{n}\exp\left\{ f_{\epsilon}\left(\lambda_{i}\left(\frac{1}{m}\boldsymbol{M}\boldsymbol{M}^{*}\right)\right)\right\} \right]-\frac{D}{\sqrt{\epsilon\alpha}}\frac{\beta}{n}\nonumber \\
 & \quad\quad\quad\quad-\frac{\log\left\{ 1+\sqrt{\frac{8\pi D^{2}}{\epsilon\alpha}}e^{8\pi+\frac{2D^{2}}{\epsilon\alpha}}\right\} }{n}\nonumber \\
 & \quad\geq\text{ }\frac{1}{n}\log\mathbb{E}\left[\det\left(\frac{2}{e}\epsilon\boldsymbol{I}+\frac{1}{m}\boldsymbol{M}\boldsymbol{M}^{*}\right)\right]-\frac{D}{\sqrt{\epsilon\alpha}}\frac{\beta}{n}\nonumber \\
 & \quad\quad\quad\quad-\frac{\log\left\{ 1+\sqrt{\frac{8\pi D^{2}}{\epsilon\alpha}}e^{8\pi+\frac{2D^{2}}{\epsilon\alpha}}\right\} }{n}\label{eq:BoundsFepsilonBounded}
\end{align}
with probability exceeding $1-4\exp\left(-\frac{\beta^{2}}{8}\right)$.

To complete the argument for Part (a), we observe that (\ref{eq:BoundsFepsilon})
also implies
\begin{align*}
\sum_{i=1}^{n}f_{\epsilon}\left(\lambda_{i}\left(\frac{1}{m}\boldsymbol{M}\boldsymbol{M}^{*}\right)\right) & \leq\log\det\left(\epsilon\boldsymbol{I}+\frac{1}{m}\boldsymbol{M}\boldsymbol{M}^{*}\right)\\
 & \leq\sum_{i=1}^{n}f_{\frac{e}{2}\epsilon}\left(\lambda_{i}\left(\frac{1}{m}\boldsymbol{M}\boldsymbol{M}^{*}\right)\right).
\end{align*}
Substituting this into (\ref{eq:BoundsFepsilonUB}) and (\ref{eq:BoundsFepsilonBounded})
and making use of the identity given in Lemma \ref{lemma:ExpDetIplustMM}
complete the proof for Part (a).

(b) If the measure of $\boldsymbol{M}_{ij}$ satisfies LSI with uniformly
bounded constant $c_{\mathrm{ls}}$, then $g(x):=\log\left(\epsilon+x^{2}\right)$
is Lipschitz with Lipschitz bound $\frac{1}{\sqrt{\epsilon}}$. Applying
Theorem \ref{thm:Deviation-Ef} yields
\begin{align}
\begin{cases}
\small\frac{1}{n}\sum_{i=1}^{n}f\left(\lambda_{i}\left(\frac{1}{m}\boldsymbol{M}\boldsymbol{M}^{*}\right)\right)\leq\small\frac{1}{n}\log\mathbb{E}\left[\det\left(\epsilon\boldsymbol{I}+\frac{1}{m}\boldsymbol{M}\boldsymbol{M}^{*}\right)\right]\\
\quad\quad\quad\quad\quad\quad\quad\quad\quad\quad\quad\quad\quad+\frac{\sqrt{c_{\mathrm{ls}}}}{\sqrt{\epsilon\alpha}}\frac{\beta}{n},\\
\small\frac{1}{n}\sum_{i=1}^{n}f\left(\lambda_{i}\left(\frac{1}{m}\boldsymbol{M}\boldsymbol{M}^{*}\right)\right)\geq\small\frac{1}{n}\log\mathbb{E}\left[\det\left(\epsilon\boldsymbol{I}+\frac{1}{m}\boldsymbol{M}\boldsymbol{M}^{*}\right)\right]\\
\quad\quad\quad\quad\quad\quad\quad\quad\quad\quad\quad-\frac{\sqrt{c_{\mathrm{ls}}}}{\sqrt{\epsilon\alpha}}\frac{\beta}{n}-\frac{\log\left(1+\sqrt{\frac{4\pi c_{\mathrm{ls}}}{\epsilon\alpha}}e^{\frac{c_{\mathrm{ls}}}{4\epsilon\alpha}}\right)}{n},
\end{cases}\label{eq:BoundsFepsilonSubgaussian}
\end{align}
with probability exceeding $1-2\exp\left(-\beta^{2}\right)$. 

(c) If the measure of $\boldsymbol{M}_{ij}$ is heavy-tailed, then
we again need to use the sandwich bound between $f_{\epsilon}(x)$
and $f(x)$. Theorem \ref{thm:Deviation-Ef} indicates that
\begin{equation}
\begin{cases}
\small\frac{1}{n}\sum_{i=1}^{n}f_{\epsilon}\left(\lambda_{i}\left(\frac{1}{m}\boldsymbol{M}\boldsymbol{M}^{*}\right)\right)\leq\small\frac{1}{n}\log\mathbb{E}\left[\det\left(\epsilon\boldsymbol{I}+\frac{1}{m}\tilde{\boldsymbol{M}}\tilde{\boldsymbol{M}}^{*}\right)\right]\\
\quad\quad\quad\quad\quad\quad\,+\frac{\tau_{c}\sigma_{c}\sqrt{8c(n)\log n}}{\sqrt{\epsilon\alpha}n},\\
\small\frac{1}{n}\sum_{i=1}^{n}f_{\epsilon}\left(\lambda_{i}\left(\frac{1}{m}\boldsymbol{M}\boldsymbol{M}^{*}\right)\right)\geq\small\frac{1}{n}\log\mathbb{E}\left[\det\left(\frac{2}{e}\epsilon\boldsymbol{I}+\frac{1}{m}\tilde{\boldsymbol{M}}\tilde{\boldsymbol{M}}^{*}\right)\right]\\
\quad\quad-\frac{\tau_{c}\sigma_{c}\sqrt{8c(n)\log n}}{\sqrt{\epsilon\alpha}n}-\frac{\log\left(1+\sqrt{\frac{8\pi\tau_{c}^{2}\sigma_{c}^{2}}{\epsilon\alpha}}e^{8\pi+\frac{2\tau_{c}^{2}\sigma_{c}^{2}}{\epsilon\alpha}}\right)}{n},
\end{cases}\label{eq:HeavyTailDeviation-Ef-2}
\end{equation}
with probability exceeding $1-\frac{5}{n^{c(n)}}$. The only other
difference with the proof of Part (a) is that the entries of $\tilde{\boldsymbol{M}}$
has variance $\sigma_{c}^{2}$, and hence
\begin{align*}
 & \frac{1}{n}\log\mathbb{E}\left[\det\left(\epsilon\boldsymbol{I}+\frac{1}{m}\tilde{\boldsymbol{M}}\tilde{\boldsymbol{M}}^{*}\right)\right]\\
 & \quad=2\log\sigma_{c}+\frac{1}{n}\log\mathbb{E}\left[\det\left(\frac{\epsilon}{\sigma_{c}^{2}}\boldsymbol{I}+\frac{1}{m}\boldsymbol{M}\boldsymbol{M}^{*}\right)\right].
\end{align*}

Finally, the proof is complete by applying Lemma \ref{lemma:ExpDetIplustMM}
on $\mathbb{E}\left[\det\left(\epsilon\boldsymbol{I}+\frac{1}{m}\boldsymbol{M}\boldsymbol{M}^{*}\right)\right]$.

\end{proof}

\section{Proof of Lemma \ref{Lemma:Bound-LogEAA}\label{sec:Proof-of-Lemma-Bound-LogEAA}}

Using the results derived in Lemma \ref{lemma:ExpDetIplustMM}, one
can bound 
\begin{align*}
 & \mathbb{E}\left[\det\left(\epsilon\boldsymbol{I}+\frac{1}{m}\boldsymbol{M}\boldsymbol{M}^{*}\right)\right]=m^{-n}\sum_{i=0}^{n}{n \choose i}\frac{m!}{\left(m-i\right)!}\epsilon^{n-i}m^{n-i}\\
 & \quad=m!m^{-n}\sum_{i=0}^{n}{n \choose i}\frac{\epsilon^{i}m^{i}}{\left(m-n+i\right)!}\\
 & \quad\leq m!m^{-n}\left(n+1\right)\max_{i:0\leq i\leq n}{n \choose i}\frac{\epsilon^{i}m^{i}}{\left(m-n+i\right)!}.
\end{align*}
Denote by $i_{\max}$ the index of the largest term as follows 
\[
i_{\max}:=\arg\max_{i:0\leq i\leq n}{n \choose i}\frac{\epsilon^{i}m^{i}}{\left(m-n+i\right)!}.
\]

Suppose for now that $i_{\max}=\delta n$ for some numerical value
$\delta$, then we can bound
\begin{align}
 & \frac{1}{n}\log\mathbb{E}\left[\det\left(\epsilon\boldsymbol{I}+\frac{1}{m}\boldsymbol{M}\boldsymbol{M}^{*}\right)\right]\nonumber \\
 & \quad\leq\text{ }\frac{\log m!}{n}+\frac{\log\left(n+1\right)}{n}+\frac{\log{n \choose \delta n}}{n}+\delta\log\epsilon\nonumber \\
 & \quad\quad\quad\quad-\left(1-\delta\right)\log m-\frac{\log\left(m-n+\delta n\right)!}{n}\nonumber \\
 & \quad=\text{ }\frac{\log\frac{m!}{\left(m-n+\delta n\right)!\left(n-\delta n\right)!}}{n}+\frac{\log\left(n+1\right)}{n}+\frac{\log{n \choose \delta n}}{n}\nonumber \\
 & \quad\quad\quad\quad+\frac{\log\left(n-\delta n\right)!}{n}-\left(1-\delta\right)\log m+\delta\log\epsilon\nonumber \\
 & \quad=\text{ }\frac{\alpha\log{m \choose n-\delta n}}{m}+\frac{\log\left(n+1\right)}{n}+\frac{\log{n \choose \delta n}}{n}+\frac{\log\left(n-\delta n\right)!}{n}\nonumber \\
 & \quad\quad\quad\quad-\left(1-\delta\right)\log m+\delta\log\epsilon.\label{eq:UBlogEdet}
\end{align}
It follows immediately from the well-known entropy formula (e.g. \cite[Example 11.1.3]{cover2012elements})
that 
\begin{equation}
\mathcal{H}\left(\frac{k}{n}\right)-\frac{\log\left(n+1\right)}{n}\leq\frac{1}{n}\log{n \choose k}\leq\mathcal{H}\left(\frac{k}{n}\right),\label{eq:EntropyBound}
\end{equation}
where $\mathcal{H}(x):=x\log\frac{1}{x}+\left(1-x\right)\log\frac{1}{1-x}$
denotes the binary entropy function. Also, the famous Stirling approximation
bounds asserts that
\[
\sqrt{2\pi}n^{n+\frac{1}{2}}e^{-n}\leq n!\leq en^{n+\frac{1}{2}}e^{-n},
\]
\begin{align}
\Longleftrightarrow\text{ }\frac{2+\log n}{2n}+\log n-1 & \geq\frac{1}{n}\log n!\nonumber \\
 & \geq\frac{\log\left(2\pi\right)+\log n}{2n}+\log n-1.\label{eq:StirlingBound}
\end{align}
Substituting (\ref{eq:EntropyBound}) and (\ref{eq:StirlingBound})
into (\ref{eq:UBlogEdet}) leads to
\begin{align*}
 & \frac{1}{n}\log\mathbb{E}\left[\det\left(\epsilon\boldsymbol{I}+\frac{1}{m}\boldsymbol{M}\boldsymbol{M}^{*}\right)\right]\\
 & \quad\leq\text{ }\alpha\mathcal{H}\left(\frac{(1-\delta)n}{m}\right)+\frac{\log\left(n+1\right)}{n}+\mathcal{H}\left(\delta\right)\\
 & \quad\quad\quad+\frac{\log e+\frac{1}{2}\log\left[\left(1-\delta\right)n\right]}{n}+\left(1-\delta\right)\log\left[\left(1-\delta\right)n\right]\\
 & \quad\quad\quad-\left(1-\delta\right)\log e-\left(1-\delta\right)\log m+\delta\log\epsilon\\
 & \quad=\text{ }\text{ }\alpha\mathcal{H}\left(\frac{1-\delta}{\alpha}\right)+\frac{\log\left(n+1\right)+1+\frac{1}{2}\log\left[\left(1-\delta\right)n\right]}{n}\\
 & \quad\quad\quad+\mathcal{H}\left(\delta\right)+\left(1-\delta\right)\log\frac{1}{\alpha}+\left(1-\delta\right)\log(1-\delta)\\
 & \quad\quad\quad-\left(1-\delta\right)+\delta\log\epsilon.
\end{align*}
Making use of the inequality 
\[
\log\left(n+1\right)+1+\frac{1}{2}\log n\leq1.5\log\left(en\right),
\]
one obtains 
\begin{align}
 & \frac{1}{n}\log\mathbb{E}\left[\det\left(\epsilon\boldsymbol{I}+\frac{1}{m}\boldsymbol{M}\boldsymbol{M}^{*}\right)\right]\nonumber \\
 & \quad\leq\text{ }\alpha\mathcal{H}\left(\frac{1-\delta}{\alpha}\right)+\frac{1.5\log\left(en\right)}{n}+\delta\log\frac{1}{\delta}\nonumber \\
 & \quad\quad\quad-\left(1-\delta\right)\log\left(e\alpha\right)+\delta\log\epsilon\nonumber \\
 & \quad=\text{ }\alpha\mathcal{H}\left(\frac{1}{\alpha}\right)-\log\left(\alpha e\right)+\frac{1.5\log\left(en\right)}{n}+\delta\log\frac{1}{\delta}\nonumber \\
 & \quad\quad\quad+\delta\log\left(e\alpha\right)+\delta\log\epsilon+\alpha\footnotesize\left[\footnotesize\mathcal{H}\left(\frac{1-\delta}{\alpha}\right)-\footnotesize\mathcal{H}\left(\frac{1}{\alpha}\right)\right]\nonumber \\
 & \quad\leq\text{ }\left(\alpha-1\right)\log\left(\frac{\alpha}{\alpha-1}\right)-1+\frac{1.5\log\left(en\right)}{n}+\delta\log\frac{1}{\delta}\nonumber \\
 & \quad\quad\quad+\delta\log\left(e\alpha\right)+\delta\log\epsilon+\mathcal{H}\left(\delta\right),\label{eq:UBLogDetAA}
\end{align}
where the last inequality follows from the following identity on binary
entropy function \cite{verdu2014IT}: for any $p,q$ obeying $0\leq p,q\leq1$
and $0<pq<1$, one has 
\[
\mathcal{H}\left(p\right)-\mathcal{H}\left(pq\right)=\left(1-pq\right)\mathcal{H}\left(\frac{p-pq}{1-pq}\right)-p\mathcal{H}(q),
\]
\[
\Rightarrow\quad\mathcal{H}\left(\frac{1-\delta}{\alpha}\right)-\mathcal{H}\left(\frac{1}{\alpha}\right)\leq\frac{1}{\alpha}\mathcal{H}(1-\delta)=\frac{1}{\alpha}\mathcal{H}(\delta).
\]

It remains to estimate the index $i_{\max}$ or, equivalently, the
value $\delta$. If we define
\[
s_{\epsilon}\left(i\right):={n \choose i}\frac{\epsilon^{i}m^{i}}{\left(m-n+i\right)!}\quad\text{and}\quad r_{\epsilon}\left(i\right):=\frac{s_{\epsilon}\left(i+1\right)}{s_{\epsilon}\left(i\right)},
\]
then one can compute
\[
r_{\epsilon}\left(i\right)=\frac{\frac{n!}{\left(i+1\right)!\left(n-i-1\right)!}\frac{\epsilon^{i+1}m^{i+1}}{\left(m-n+i+1\right)!}}{\frac{n!}{i!\left(n-i\right)!}\frac{\epsilon^{i}m^{i}}{\left(m-n+i\right)!}}=\frac{\epsilon m\left(n-i\right)}{\left(i+1\right)\left(m-n+i+1\right)},
\]
which is a decreasing function in $i$. Suppose that $n\epsilon>\max\left\{ 4,2\left(1-\frac{1}{\alpha}+\frac{1}{\alpha n}\right)\right\} $.
By setting $r_{\epsilon}\left(x\right)=1$, we can derive a valid
positive solution $x_{0}$ as follows
\begin{align*}
x_{0} & =\small\frac{-\left(m+n+2+\epsilon m\right)}{2}\\
 & \quad\quad\small+\frac{\sqrt{\left(m+n+2+\epsilon m\right)^{2}-4\left(m-n+1\right)+4\epsilon mn}}{2}.
\end{align*}
Simple algebraic manipulation yields
\begin{align*}
x_{0} & <\small\frac{-\left(m+n+2+\epsilon m\right)+\left(m+n+2+\epsilon m\right)+2\sqrt{\epsilon mn}}{2}\\
 & \leq\text{ }\sqrt{\epsilon\alpha}n,
\end{align*}
and
\begin{align*}
x_{0} & >\small\frac{-\left(m+n+2+\epsilon m\right)+\sqrt{\left(m+n+2+\epsilon m\right)^{2}+2\epsilon mn}}{2}\\
 & =\small\frac{\epsilon mn}{\left(m+n+2+\epsilon m\right)+\sqrt{\left(m+n+2+\epsilon m\right)^{2}+2\epsilon mn}}\\
 & =\small\frac{\alpha\epsilon n}{\alpha+1+\frac{2}{n}+\alpha\epsilon+\sqrt{\left(\alpha+1+\frac{2}{n}+\alpha\epsilon\right)^{2}+2\epsilon\alpha}}\\
 & \geq\small\frac{\alpha}{2\left(\alpha+1+\frac{2}{n}+\alpha\right)+\sqrt{2\alpha}}\epsilon n\\
 & >\frac{\alpha}{5\left(\alpha+1\right)}\epsilon n.
\end{align*}
Therefore, we can conclude that $\delta\in\left[\frac{\alpha}{5\left(1+\alpha\right)}\epsilon,\sqrt{\alpha\epsilon}\right]$.
Assume that $\epsilon<\frac{1}{e^{2}\alpha}$. Substituting this into
(\ref{eq:UBLogDetAA}) then leads to
\[
\small\frac{1}{n}\log\mathbb{E}\left[\det\left(\epsilon\boldsymbol{I}+\frac{1}{m}\boldsymbol{M}\boldsymbol{M}^{*}\right)\right]\leq\left(\alpha-1\right)\log\left(\frac{\alpha}{\alpha-1}\right)-1+\mathcal{R}_{n,\epsilon},
\]
where $\mathcal{R}_{n,\epsilon}$ denotes the residual term
\[
\mathcal{R}_{n,\epsilon}:=\frac{1.5\log\left(en\right)}{n}+\sqrt{\alpha\epsilon}\left(\log\frac{1}{\sqrt{\alpha\epsilon}}+\log\left(e\alpha\right)\right)+\mathcal{H}\left(\sqrt{\alpha\epsilon}\right).
\]
In particular, if we further have $\epsilon<\min\left\{ \frac{1}{e^{2}\alpha^{3}},\frac{1}{4\alpha}\right\} $,
then one has $\log\alpha e<\log\frac{1}{\sqrt{\alpha\epsilon}}$ and
\[
\mathcal{H}\left(\sqrt{\alpha\epsilon}\right)\leq2\sqrt{\alpha\epsilon}\log\frac{1}{\sqrt{\alpha\epsilon}},
\]
which in turn leads to
\begin{align*}
\mathcal{R}_{n,\epsilon} & \leq\frac{1.5\log\left(en\right)}{n}+4\sqrt{\alpha\epsilon}\log\frac{1}{\sqrt{\alpha\epsilon}}.
\end{align*}

On the other hand, the lower bound on $\frac{1}{n}\log\mathbb{E}\left[\det\left(\epsilon\boldsymbol{I}+\frac{1}{m}\boldsymbol{M}\boldsymbol{M}^{*}\right)\right]$
can be easily obtained through the following argument
\begin{align*}
\mathbb{E}\left[\det\left(\epsilon\boldsymbol{I}+\frac{1}{m}\boldsymbol{M}\boldsymbol{M}^{*}\right)\right] & \geq\mathbb{E}\left[\det\left(\frac{1}{m}\boldsymbol{M}\boldsymbol{M}^{*}\right)\right]\\
 & =\frac{m!}{\left(m-n\right)!m^{n}},
\end{align*}
and hence
\begin{align}
 & \frac{1}{n}\log\mathbb{E}\left[\det\left(\epsilon\boldsymbol{I}+\frac{1}{m}\boldsymbol{M}\boldsymbol{M}^{*}\right)\right]\nonumber \\
 & \quad\geq\text{ }\frac{1}{n}\log{m \choose n}+\frac{\log n!}{n}-\log m\nonumber \\
 & \quad\geq\alpha\mathcal{H}\left(\frac{1}{\alpha}\right)-\frac{\log\left(m+1\right)}{n}+\frac{\log\left(2\pi\right)+\frac{1}{2}\log n}{n}\nonumber \\
 & \quad\quad\quad\quad\quad\quad\quad+\log\left(\frac{n}{m}\right)-1\label{eq:IntermediateStep}\\
 & \quad\geq\text{ }\left(\alpha-1\right)\log\left(\frac{\alpha}{\alpha-1}\right)-1-\frac{\log\frac{1}{2\pi}\left(m+1\right)-\frac{1}{2}\log n}{n},
\end{align}
where (\ref{eq:IntermediateStep}) makes use of the bounds (\ref{eq:EntropyBound})
and (\ref{eq:StirlingBound}). This concludes the proof.

\section{Proof of Theorem \ref{thm:MMSE-iid} and Corollary \ref{cor:MMSE-iid-Gaussian}\label{sec:Proof-of-Theorem-MMSE-iid}}

\begin{proof}[{\bf Proof of Theorem \ref{thm:MMSE-iid}}]Suppose that
the singular value decomposition of $\boldsymbol{H}$ can be expressed
by $\boldsymbol{H}=\boldsymbol{U}\boldsymbol{\Lambda}\boldsymbol{V}^{*}$
where $\boldsymbol{\Lambda}$ is a diagonal matrix consist of all
singular values of $\boldsymbol{H}$. Then the MMSE per input component
can be expressed as 
\begin{align*}
 & \frac{\mbox{\ensuremath{\mathsf{NMMSE}}}\left(\boldsymbol{H},\small\mathsf{SNR}\right)}{p}\\
 & \quad=1-\frac{1}{p}\mathrm{tr}\left(\boldsymbol{V}\boldsymbol{\Lambda}\boldsymbol{U}^{*}\left(\frac{1}{\small\mathsf{SNR}}\boldsymbol{I}+\boldsymbol{U}\boldsymbol{\Lambda}^{2}\boldsymbol{U}^{*}\right)^{-1}\boldsymbol{U}\boldsymbol{\Lambda}\boldsymbol{V}^{*}\right)\\
 & \quad=1-\frac{1}{p}\mathrm{tr}\left(\boldsymbol{\Lambda}\left(\frac{1}{\small\mathsf{SNR}}\boldsymbol{I}+\boldsymbol{\Lambda}^{2}\right)^{-1}\boldsymbol{\Lambda}\right)\\
 & \quad=\frac{1}{p}\sum_{i=1}^{p}\frac{1}{\small\mathsf{SNR}^{-1}+\lambda_{i}\left(\boldsymbol{H}^{*}\boldsymbol{H}\right)}.
\end{align*}

(i) Note that $g_{\epsilon}(x):=\frac{1}{\epsilon^{2}+x^{2}}$ is
not a convex function. In order to apply Proposition \ref{thm:GeneralTemplate},
we define a convexified variant $\tilde{g}_{\epsilon}(x)$ of $g_{\epsilon}(x)$.
Specifically, we set
\[
\tilde{g}_{\epsilon}(x):=\begin{cases}
\frac{1}{\epsilon^{2}+x^{2}},\quad & \text{if }x>\frac{1}{\sqrt{3}}\epsilon,\\
-\frac{3\sqrt{3}}{8\epsilon^{3}}\left(x-\frac{\epsilon}{\sqrt{3}}\right)+\frac{3}{4\epsilon^{2}},\quad & \text{if }x\leq\frac{1}{\sqrt{3}}\epsilon,
\end{cases}
\]
which satisfies
\[
\left\Vert \tilde{g}_{\epsilon}(x)\right\Vert _{\mathcal{L}}\leq\frac{3\sqrt{3}}{8\epsilon^{3}}.
\]
One can easily check that
\[
g_{\frac{3}{2\sqrt{2}}\epsilon}(x)\leq\tilde{g}_{\frac{3}{2\sqrt{2}}\epsilon}(x)\leq g_{\epsilon}(x)\leq\tilde{g}_{\epsilon}(x)\leq g_{\frac{2\sqrt{2}}{3}\epsilon}(x).
\]
This implies that we can sandwich the target function as follows
\begin{align}
 & \frac{1}{p}\sum_{i=1}^{p}\frac{1}{\epsilon^{2}+\lambda_{i}\left(\boldsymbol{H}^{*}\boldsymbol{H}\right)}\nonumber \\
 & \text{ }\text{ }\text{ }\leq\frac{1}{p}\sum_{i=1}^{p}\left(\tilde{g}_{\epsilon}\left(\sqrt{\lambda_{i}\left(\boldsymbol{H}^{*}\boldsymbol{H}\right)}\right)-\mathbb{E}\left[\tilde{g}_{\epsilon}\left(\sqrt{\lambda_{i}\left(\boldsymbol{H}^{*}\boldsymbol{H}\right)}\right)\right]\right)\nonumber \\
 & \quad\quad\quad\quad+\mathbb{E}\left[g_{\frac{2\sqrt{2}}{3}\epsilon}\left(\sqrt{\lambda_{i}\left(\boldsymbol{H}^{*}\boldsymbol{H}\right)}\right)\right],\label{eq:MMSE_ub}
\end{align}
and
\begin{align}
 & \small\frac{1}{p}\sum_{i=1}^{p}\frac{1}{\epsilon^{2}+\lambda_{i}\left(\boldsymbol{H}^{*}\boldsymbol{H}\right)}\label{eq:MMSE_lb}\\
 & \small\text{ }\text{ }\geq\text{ }\frac{1}{p}\sum_{i=1}^{p}\left(\tilde{g}_{\frac{3}{2\sqrt{2}}\epsilon}\left(\sqrt{\lambda_{i}\left(\boldsymbol{H}^{*}\boldsymbol{H}\right)}\right)-\mathbb{E}\left[\tilde{g}_{\frac{3}{2\sqrt{2}}\epsilon}\left(\sqrt{\lambda_{i}\left(\boldsymbol{H}^{*}\boldsymbol{H}\right)}\right)\right]\right)\nonumber \\
 & \small\quad\quad\quad\quad+\mathbb{E}g_{\frac{3}{2\sqrt{2}}\epsilon}\left(\sqrt{\lambda_{i}\left(\boldsymbol{H}^{*}\boldsymbol{H}\right)}\right).\nonumber 
\end{align}
Recall that
\begin{equation}
\boldsymbol{H}^{*}\boldsymbol{H}=\boldsymbol{M}^{*}\boldsymbol{A}^{*}\boldsymbol{A}\boldsymbol{M},
\end{equation}
where the entries of $\boldsymbol{M}$ are independently generated
with variance $\frac{1}{n}$. Since $\tilde{g}_{\epsilon}(x)$ is
a convex function in $x$, applying Proposition \ref{thm:GeneralTemplate}
yields the following results.
\begin{enumerate}
\item If $\sqrt{p}\boldsymbol{M}_{ij}$'s are bounded by $D$, then for
any $\beta>8\sqrt{\pi}$, we have
\begin{align*}
 & \frac{1}{p}\sum_{i=1}^{p}\frac{1}{\epsilon^{2}+\lambda_{i}\left(\boldsymbol{H}^{*}\boldsymbol{H}\right)}\\
 & \quad\leq\frac{3\sqrt{3}D\left\Vert \boldsymbol{A}\right\Vert }{8\epsilon^{3}}\frac{\beta}{p}+\frac{1}{p}\sum_{i=1}^{p}\mathbb{E}\left[g_{\frac{2\sqrt{2}}{3}\epsilon}\left(\sqrt{\lambda_{i}\left(\boldsymbol{H}^{*}\boldsymbol{H}\right)}\right)\right],
\end{align*}
and
\begin{align*}
 & \frac{1}{p}\sum_{i=1}^{p}\frac{1}{\epsilon^{2}+\lambda_{i}\left(\boldsymbol{H}^{*}\boldsymbol{H}\right)}\\
 & \quad\geq\left(\frac{2\sqrt{2}D\left\Vert \boldsymbol{A}\right\Vert }{3\sqrt{3}\epsilon^{3}}\right)\frac{\beta}{p}-\frac{1}{p}\sum_{i=1}^{p}\mathbb{E}\left[g_{\frac{3}{2\sqrt{2}}\epsilon}\left(\sqrt{\lambda_{i}\left(\boldsymbol{H}^{*}\boldsymbol{H}\right)}\right)\right]
\end{align*}
with probability exceeding $1-8\exp\left(-\frac{\beta^{2}}{16}\right)$.
\item If the measure of $\sqrt{p}\boldsymbol{M}_{ij}$ satisfies the LSI
with a uniformly bounded constant $c_{\mathrm{ls}}$, then for any
$\beta>0,$ one has
\begin{align}
 & \left|\frac{1}{p}\sum_{i=1}^{p}\frac{1}{\epsilon^{2}+\lambda_{i}\left(\boldsymbol{H}^{*}\boldsymbol{H}\right)}-\mathbb{E}\left[g_{\epsilon}\left(\sqrt{\lambda_{i}\left(\boldsymbol{H}^{*}\boldsymbol{H}\right)}\right)\right]\right|\nonumber \\
 & \quad\leq\left(\frac{3\sqrt{3}\sqrt{c_{\mathrm{ls}}}\left\Vert \boldsymbol{A}\right\Vert }{8\epsilon^{3}}\right)\frac{\beta}{p}\label{eq:LogSobolevDeviation-Corollary-1}
\end{align}
with probability exceeding $1-4\exp\left(-\frac{\beta^{2}}{2}\right)$.
Here, we have made use of the fact that $\left\Vert g_{\epsilon}\right\Vert _{\mathcal{L}}\leq\frac{3\sqrt{3}}{8\epsilon^{3}}$.
\item Suppose that $\sqrt{p}\boldsymbol{M}_{ij}$'s are independently drawn
from symmetric heavy-tailed distributions. Proposition \ref{thm:GeneralTemplate}
suggests that
\begin{align*}
 & \frac{1}{p}\sum_{i=1}^{p}\frac{1}{\epsilon^{2}+\lambda_{i}\left(\boldsymbol{H}^{*}\boldsymbol{H}\right)}\leq\frac{3\sqrt{3}\tau_{c}\sigma_{c}\left\Vert \boldsymbol{A}\right\Vert \sqrt{c(n)\log n}}{2\epsilon^{3}p}\\
 & \quad\quad\quad+\frac{1}{p}\sum_{i=1}^{p}\mathbb{E}\left[g_{\frac{2\sqrt{2}}{3}\epsilon}\left(\sqrt{\lambda_{i}\left(\tilde{\boldsymbol{M}}^{*}\boldsymbol{A}^{*}\boldsymbol{A}\tilde{\boldsymbol{M}}\right)}\right)\right],
\end{align*}
and
\begin{align*}
 & \frac{1}{p}\sum_{i=1}^{p}\frac{1}{\epsilon^{2}+\lambda_{i}\left(\boldsymbol{H}^{*}\boldsymbol{H}\right)}\geq\frac{8\sqrt{2}\tau_{c}\sigma_{c}\left\Vert \boldsymbol{A}\right\Vert \sqrt{c(n)\log n}}{3\sqrt{3}\epsilon^{3}p}\\
 & \quad\quad\quad-\frac{1}{p}\sum_{i=1}^{p}\mathbb{E}\left[g_{\frac{3}{2\sqrt{2}}\epsilon}\left(\sqrt{\lambda_{i}\left(\tilde{\boldsymbol{M}}^{*}\boldsymbol{A}^{*}\boldsymbol{A}\tilde{\boldsymbol{M}}\right)}\right)\right]
\end{align*}
with probability exceeding $1-\frac{10}{n^{c(n)}}$. Here, $\tilde{\boldsymbol{M}}$
is a truncated copy of $\boldsymbol{M}$ such that $\tilde{\boldsymbol{M}}_{ij}=\boldsymbol{M}_{ij}{\bf 1}_{\left\{ \left|\boldsymbol{H}_{ij}\right|\leq\tau_{\mathrm{c}}\right\} }$.
\end{enumerate}
This completes the proof. \end{proof}

\begin{proof}[{\bf Proof of Corollary \ref{cor:MMSE-iid-Gaussian}}]In
the moderate-to-high SNR regime (i.e. when all of $\lambda_{i}\left(\boldsymbol{H}^{*}\boldsymbol{H}\right)$
are large), one can apply the following simple bound
\begin{align*}
\small\frac{1}{p}\sum_{i=1}^{p}\left(\frac{1}{\lambda_{i}\left(\boldsymbol{H}^{*}\boldsymbol{H}\right)}-\frac{\epsilon^{2}}{\lambda_{i}^{2}\left(\boldsymbol{H}^{*}\boldsymbol{H}\right)}\right) & \small\leq\frac{1}{p}\sum_{i=1}^{p}\mathbb{E}\left[g_{\epsilon}\left(\sqrt{\lambda_{i}\left(\boldsymbol{H}^{*}\boldsymbol{H}\right)}\right)\right]\\
 & \small\leq\frac{1}{p}\sum_{i=1}^{p}\frac{1}{\lambda_{i}\left(\boldsymbol{H}^{*}\boldsymbol{H}\right)}.
\end{align*}
This immediately leads to
\begin{align}
\small\frac{1}{p}\mathrm{tr}\left(\left(\boldsymbol{H}^{*}\boldsymbol{H}\right)^{-1}\right)-\frac{\epsilon^{2}}{p}\mathrm{tr}\left(\left(\boldsymbol{H}^{*}\boldsymbol{H}\right)^{-2}\right) & \small\leq\frac{1}{p}\mathrm{tr}\left(\left(\epsilon^{2}\boldsymbol{I}_{p}+\boldsymbol{H}^{*}\boldsymbol{H}\right)^{-1}\right)\nonumber \\
 & \small\leq\frac{1}{p}\mathrm{tr}\left(\left(\boldsymbol{H}^{*}\boldsymbol{H}\right)^{-1}\right).\label{eq:TrIplusMMinv}
\end{align}

When $\boldsymbol{H}_{ij}\sim\mathcal{N}\left(0,\frac{1}{p}\right)$,
applying \cite[Theorem 2.2.8]{Fujikoshi2010} gives
\begin{align}
\mathrm{tr}\left(\left(\boldsymbol{H}^{*}\boldsymbol{H}\right)^{-1}\right) & =\frac{p^{2}}{\left(n-p-1\right)}=\frac{\alpha p}{\left(1-\alpha-\frac{1}{n}\right)}.\label{eq:TrMMinv}
\end{align}
Similarly, making use of \cite[Theorem 2.2.8]{Fujikoshi2010} leads
to
\begin{align}
\mathrm{tr}\left(\left(\boldsymbol{M}^{*}\boldsymbol{M}\right)^{-2}\right) & =p^{2}\frac{p\left(n-p-2\right)+p+p^{2}}{\left(n-p\right)\left(n-p-1\right)\left(n-p-3\right)}\nonumber \\
 & =\frac{\left(1-\frac{1}{n}\right)\alpha^{2}p}{\left(1-\alpha\right)\left(1-\alpha-\frac{1}{n}\right)\left(1-\alpha-\frac{3}{n}\right)}.\label{eq:TrMMinv2}
\end{align}
Substituting (\ref{eq:TrMMinv}) and (\ref{eq:TrMMinv2}) into (\ref{eq:TrIplusMMinv})
yields
\begin{align}
\frac{\alpha}{1-\alpha}-\frac{1}{\mathrm{SNR}}\frac{\alpha^{2}}{\left(1-\alpha-\frac{3}{n}\right)^{3}} & \leq\frac{1}{p}\mathrm{tr}\left(\left(\epsilon^{2}\boldsymbol{I}_{p}+\boldsymbol{H}^{*}\boldsymbol{H}\right)^{-1}\right)\nonumber \\
 & \leq\frac{\alpha}{1-\alpha-\frac{1}{n}},
\end{align}
thus concluding the proof.\end{proof}

\bibliographystyle{IEEEtran} \bibliographystyle{IEEEtran} \bibliographystyle{IEEEtran}
\bibliographystyle{IEEEtran}
\bibliography{bibfileMinimax}

\begin{IEEEbiographynophoto}{Yuxin Chen} (S'09) received the B.S. in Microelectronics with High Distinction from Tsinghua University in 2008, the M.S. in Electrical and Computer Engineering from the University of Texas at Austin in 2010, and the M.S. in Statistics from Stanford University in 2013. He is currently a Ph.D. candidate in the Department of Electrical Engineering at Stanford University. His research interests include information theory, compressed sensing, network science and high-dimensional statistics. \end{IEEEbiographynophoto} 

\begin{IEEEbiographynophoto}{Andrea J. Goldsmith} (S'90-M'93-SM'99-F'05)
is the Stephen Harris professor in the School of Engineering and a professor of Electrical Engineering at Stanford University. She was previously on the faculty of Electrical Engineering at Caltech. Her research interests are in information theory and communication theory, and their application to wireless communications and related fields. She co-founded and serves as Chief Scientist of Accelera, Inc., and previously co-founded and served as CTO of Quantenna Communications, Inc. She has also held industry positions at Maxim Technologies, Memorylink Corporation, and AT\&T Bell Laboratories. Dr. Goldsmith is a Fellow of the IEEE and of Stanford, and she has received several awards for her work, including the IEEE Communications Society and Information Theory Society joint paper award, the IEEE Communications Society Best Tutorial Paper Award, the National Academy of Engineering Gilbreth Lecture Award, the IEEE ComSoc Communications Theory Technical Achievement Award, the IEEE ComSoc Wireless Communications Technical Achievement Award, the Alfred P. Sloan Fellowship, and the Silicon Valley/San Jose Business Journal's Women of Influence Award. She is author of the book ``Wireless Communications'' and co-author of the books ``MIMO Wireless Communications'' and ``Principles of Cognitive Radio,'' all published by Cambridge University Press. She received the B.S., M.S. and Ph.D. degrees in Electrical Engineering from U.C. Berkeley.  

Dr. Goldsmith has served on the Steering Committee for the IEEE Transactions on Wireless Communications and as editor for the IEEE Transactions on Information Theory, the Journal on Foundations and Trends in Communications and Information Theory and in Networks, the IEEE Transactions on Communications, and the IEEE Wireless Communications Magazine. She participates actively in committees and conference organization for the IEEE Information Theory and Communications Societies and has served on the Board of Governors for both societies. She has also been a Distinguished Lecturer for both societies, served as President of the IEEE Information Theory Society in 2009, founded and chaired the student committee of the IEEE Information Theory society, and chaired the Emerging Technology Committee of the IEEE Communications Society. At Stanford she received the inaugural University Postdoc Mentoring Award, served as Chair of Stanfords Faculty Senate in 2009 and currently serves on its Faculty Senate and on its Budget Group.
\end{IEEEbiographynophoto} \begin{IEEEbiographynophoto}{Yonina C. Eldar} (S'98-M'02-SM'07-F'12) received the B.Sc. degree in physics and  the B.Sc. degree in electrical engineering both from Tel-Aviv University (TAU), Tel-Aviv,  Israel, in 1995 and 1996, respectively, and the Ph.D. degree in electrical engineering and computer science from the  Massachusetts Institute of Technology (MIT), Cambridge, in 2002.

 From January 2002 to July 2002, she was a Postdoctoral Fellow at the  Digital Signal Processing Group at MIT. She is currently a Professor in the Department of Electrical  Engineering at the Technion-Israel Institute of Technology, Haifa and holds the  The Edwards Chair in Engineering. She is  also a Research Affiliate with the Research Laboratory of Electronics at MIT  and a Visiting Professor at Stanford University, Stanford, CA. Her research interests are in the broad areas of  statistical signal processing, sampling theory and compressed sensing,  optimization methods, and their applications to biology and optics.

Dr. Eldar was in the program for outstanding students at TAU from 1992 to 1996. In 1998, she held the Rosenblith Fellowship for study in electrical engineering at MIT, and in 2000, she held an IBM Research Fellowship. From  2002 to 2005, she was a Horev Fellow of the Leaders in Science and  Technology program at the Technion and an Alon Fellow. In 2004, she was  awarded the Wolf Foundation Krill Prize for Excellence in Scientific  Research, in 2005 the Andre and Bella Meyer Lectureship, in 2007 the Henry  Taub Prize for Excellence in Research, in 2008 the Hershel Rich Innovation  Award, the Award for Women with Distinguished Contributions, the Muriel \& David Jacknow Award for Excellence in Teaching, and the Technion Outstanding  Lecture Award, in 2009 the Technion's Award for Excellence in Teaching, in 2010 the Michael  Bruno Memorial Award from the Rothschild Foundation, and in 2011 the Weizmann Prize for Exact Sciences.   In 2012 she was elected to the Young Israel Academy of Science and to the Israel Committee for Higher Education, and elected an IEEE Fellow.  In 2013 she received the Technion's Award for Excellence in Teaching, the Hershel Rich Innovation Award, and the IEEE Signal Processing Technical Achievement Award. She received several best paper awards together with her research students and colleagues.  She received several best paper awards together with her research students and colleagues. She is the Editor in Chief of Foundations and Trends in Signal Processing. In the past, she was a Signal Processing Society Distinguished Lecturer, member of the IEEE Signal Processing Theory and Methods and Bio Imaging Signal Processing technical committees, and served as an associate editor for the IEEE Transactions On Signal Processing, the EURASIP Journal of Signal Processing, the SIAM Journal on Matrix Analysis and Applications, and the SIAM Journal on Imaging Sciences.
\end{IEEEbiographynophoto} 
\end{document}